\documentclass[a4paper,UKenglish,cleveref, autoref]{lipics-v2019}


\bibliographystyle{plainurl}

\usepackage{amsmath, amsthm, amssymb}
\usepackage{graphicx}
\usepackage{subcaption}
\usepackage{gensymb}
\usepackage{mathtools}
\usepackage{cleveref}

\DeclarePairedDelimiter\ceil{\lceil}{\rceil}
\DeclarePairedDelimiter\floor{\lfloor}{\rfloor}

\newcommand\numberthis{\addtocounter{equation}{1}\tag{\theequation}}

\title{Pushing Lines Helps: Efficient Universal Centralised Transformations for Programmable Matter} 

\titlerunning{Pushing Lines Helps: Efficient Universal Centralised Transformations for Programmable Matter}

\author{Abdullah Almethen}{Department of Computer Science, University of Liverpool, UK}{A.almethen@liverpool.ac.uk}{}{}
\author{Othon Michail}{Department of Computer Science, University of Liverpool, UK}{Othon.Michail@liverpool.ac.uk}{}{}
\author{Igor Potapov}{Department of Computer Science, University of Liverpool, UK}{Potapov@liverpool.ac.uk}{}{}

\authorrunning{A. Almethen, O. Michail  and I. Potapov}

\Copyright{Abdullah Almethen, Othon Michail and Igor Potapov}

\ccsdesc[100]{General and reference~General literature}
\ccsdesc[100]{General and reference}

\keywords{Line movement, programmable matter, transformation, shape formation, reconfigurable robotics, time complexity}

\category{}


\supplement{}


\acknowledgements{}

\nolinenumbers 


\EventEditors{}
\EventNoEds{2}
\EventLongTitle{}
\EventShortTitle{}
\EventAcronym{}
\EventYear{}
\EventDate{}
\EventLocation{}
\EventLogo{}
\SeriesVolume{}
\ArticleNo{}

\begin{document}

\maketitle

\begin{abstract}
In this paper, we study a discrete system of entities residing on a two-dimensional square grid. Each entity is modelled as a node occupying a distinct cell of the grid. The set of all $n$ nodes forms initially a connected shape $A$. Entities are equipped with a linear-strength pushing mechanism that can push a whole line of entities, from 1 to $n$, in parallel in a single time-step. A target connected shape $B$ is also provided and the goal is to \emph{transform} $A$ into $B$ via a sequence of line movements. Existing models based on local movement of individual nodes, such as rotating or sliding a single node, can be shown to be special cases of the present model, therefore their (inefficient, $\Theta(n^2)$) \emph{universal transformations} carry over. Our main goal is to investigate whether the parallelism inherent in this new type of movement can be exploited for efficient, i.e., sub-quadratic worst-case, transformations. As a first step towards this, we restrict attention solely to centralised transformations and leave the distributed case as a direction for future research. Our results are positive. By focusing on the apparently hard instance of transforming a diagonal $A$ into a straight line $B$, we first obtain transformations of time $O(n\sqrt{n})$ without and with preserving the connectivity of the shape throughout the transformation. Then, we further improve by providing two $O(n\log n)$-time transformations for this problem. By building upon these ideas, we first manage to develop an $O(n\sqrt{n})$-time universal transformation.  Our main result is then an $ O(n \log n) $-time universal transformation. We leave as an interesting open problem a suspected $\Omega(n\log n)$-time lower bound.    
\end{abstract}

\section{Introduction}
\label{sec:intro}

As a result of recent advances in components such as micro-sensors, electromechanical actuators, and micro-controllers, a number of interesting systems are now within reach. A prominent type of such systems concerns collections of small robotic entities. Each individual robot is equipped with a number of actuation/sensing/communication/computation components that provide it with some autonomy; for instance, the ability to move locally and to communicate with neighbouring robots. Still, individual local dynamics are uninteresting, and individual computations are restricted due to limited computational power, resources, and knowledge. What makes these systems interesting is the collective complexity of the population of devices. A number of fascinating recent developments in this direction have demonstrated the feasibility and potential of such collective robotic systems, where the scale can range from milli/micro \cite{BG15,GKR10,KCL12,RCN14,YSS07} down to nano \cite{DDL09,Ro06}. 

This progress has motivated the parallel development of a theory of such systems. It has been already highlighted \cite{MS18} that a formal theory (including modelling, algorithms, and computability/complexity) is necessary for further progress in systems. This is because theory can accurately predict the most promising designs, suggest new ways to optimise them, by identifying the crucial parameters and the interplay between them, and provide with those (centralised or distributed) algorithmic solutions that are best suited for each given design and task, coupled with provable guarantees on their performance. As a result, a number of sub-areas of theoretical computer science have emerged such as mobile and reconfigurable robotics \cite{ABD13,BKR04,CFPS12,CKLL09,DFSY15,DDG18,DGMR06,DDGR14,DGR15,DGRSS16,DiLuna2019,ALFNVG8,FPS12,KKM10,MICHAIL2019,SMO18,SY10,YUY16,YSS07}, passively-mobile systems \cite{AADFP06, AAER07,MS16a,MS18} including the theory of DNA self-assembly \cite{Do12,RW00,Wi98,WCG13}, and metamorphic systems \cite{DP04,DSY04a,DSY04b,NGY00,WWA04}; connections are even evident with the theory of puzzles \cite{BDFL17,De01,HD05}. A latest ongoing effort is to join these theoretical forces and developments within the emerging area of ``Algorithmic Foundations of Programmable Matter'' \cite{FRRS16}. \emph{Programmable matter} refers to any type of matter that can \emph{algorithmically} change its physical properties. ``Algorithmically'' means that the change (or \emph{transformation}) is the result of executing an \emph{underlying program}.

In this paper, we embark from the model studied in \cite{DP04,DSY04a,DSY04b,MICHAIL2019}, in which a number of spherical devices are given in the form of a (typically connected) shape $A$ lying on a two-dimensional square grid, and the goal is to transform $A$ into a desired target shape $B$ via a sequence of valid movements of individual devices. In those papers, the considered mechanisms were the ability to rotate and slide a device over neighbouring devices (always through empty space). We here consider an alternative (linear-strength) mechanism, by which a line of one or more devices can translate by one position in a single time-step.   

As our main goal is to determine whether the new movement under consideration can \emph{in principle} be exploited for sub-quadratic worst-case transformations, we naturally restrict our attention to centralised transformations. We generally allow the transformations to break connectivity, even though we also develop some connectivity-preserving transformations on the way. Our main result is a universal transformation of $ O(n \log n) $ worst-case running time that is permitted to break connectivity. Distributed transformations and connectivity-preserving universal transformations are left as interesting future research directions.

\subsection{Our Approach}
\label{subsec:approach}

In \cite{MICHAIL2019}, it was proved that if the devices (called \emph{nodes} from now on) are equipped only with a rotation mechanism, then the decision problem of transforming a connected shape $A$ into a connected shape $B$ is in $\mathbf{P}$, and a constructive characterisation of the (rich) class of pairs of shapes that are transformable to each other was given. In the case of combined availability of rotation and sliding, universality has been shown \cite{DP04,MICHAIL2019}, that is, any pair of connected shapes are transformable to each other. Still, in these and related models, where in any time step at most one node can move a single position in its local neighbourhood, it can be proved (see, for instance, \cite{MICHAIL2019}) that there will be pairs of shapes that require $\Omega(n^2)$ steps to be transformed to each other. This follows directly from the inherent ``distance'' between the two shapes and the fact that this distance can be reduced by only a constant in every time step. An immediate question is then \emph{``How can we come up with more efficient transformations?''} 

Two main alternatives have been explored in the literature in an attempt to answer this question. One is to consider parallel time, meaning that the transformation algorithm can move more than one node (up to a linear number of nodes if possible) in a single time step. This is particularly natural and fair for distributed transformations, as it allows all nodes to have their chances to take a movement in every given time-step. For example, such transformations based on pipelining \cite{DSY04b, MICHAIL2019}, where essentially the shape transforms by moving nodes in parallel around its perimeter, can be shown to require $O(n)$ parallel time in the worst case and this technique has also been applied in systems (e.g., \cite{RCN14}).

The other approach is to consider more powerful actuation mechanisms, that have the potential to reduce the inherent distance faster than a constant per sequential time-step. These are typically mechanisms where the local actuation has strength higher than a constant. This is different than the above parallel-time transformations, in which local actuation can only move a single node one position in its local neighbourhood and the combined effect of many such movements at the same time is exploited. In contrast, in higher-strength mechanisms, it is a single actuation that has enough strength to move many nodes at the same time. Prominent examples in the literature are the linear-strength models of Aloupis \emph{et al.} \cite{ABD13,ACD08}, in which nodes are equipped with extend/contract arms, each having the strength to extend/contract the whole shape as a result of applying such an operation to one of its neighbours and of Woods \emph{et al.} \cite{WCG13}, in which a whole line of nodes can rotate around a single node (acting as a linear-strength rotating arm). The present paper follows this approach, by introducing and investigating a linear-strength model in which a node can push a line of consecutive nodes one position (towards an empty cell) in a single time-step. 

In terms of transformability, our model can easily simulate the combined rotation and sliding mechanisms of \cite{DP04,MICHAIL2019} by restricting movements to lines of length 1 (i.e., individual nodes). It follows that this model is also capable of universal transformations, with a time complexity at most twice the worst-case of those models, i.e., again $O(n^2)$. Naturally, our focus is set on exploring ways to exploit the parallelism inherent in moving lines of larger length in order to speed-up transformations and, if possible, to come up with a more efficient in the worst case universal transformation.

As reversibility of movements is still valid for any line movement in our model, we adopt the approach of transforming any given shape $A$ into a spanning line $L$ (vertical or horizontal). This is convenient, because if one shows that any shape $A$ can transform fast into a line $L$, then any pair of shapes $A$ and $B$ can then be transformed fast to each other by first transforming fast $A$ into $L$ and then $L$ into $B$ by reversing the fast transformation of $B$ into $L$.

We start this investigation by identifying the diagonal shape $D$ (which is considered connected in our model and is very similar to the staircase worst-case shape of \cite{MICHAIL2019}) as a potential worst-case initial shape to be transformed into a line $L$. This intuition is supported by the $O(n^2)$ individual node distance between the two shapes and by the initial unavailability of long lines: the transformation may move long lines whenever available, but has to pay first a number of movements of small lines in order to construct longer lines. In this benchmark (special) case, the trivial lower and upper bounds $\Omega(n)$ and $O(n^2)$, respectively, hold. Moreover, observe that a sequential gathering of the nodes starting from the top right and collecting the nodes one after the other into a snake-like line of increasing length is still quadratic, because, essentially, for each sub-trip from one collection to the next, the line has to make a ``turn'', meaning to change both a row and a column, and in this model this costs a number of steps equal to the length of the line, that is, roughly, $1+2+\ldots+(n-1)=\Theta(n^2)$ total time.  

We first prove that by partitioning the diagonal into $\sqrt{n}$ diagonal segments of length $\sqrt{n}$ each, we can first transform each segment in time quadratic in its length into a straight line segment, then push all segments down to a ``collection row'' $y_0$ in time $O(n\sqrt{n})$ and finally re-orient all line segments to form a horizontal line in $y_0$, paying a linear additive factor. Thus, this transformation takes total time $O(n\sqrt{n})$, which constitutes our first improvement compared to the $\Omega(n^2)$ lower bound of \cite{MICHAIL2019}. We then take this algorithmic idea one step further, by developing two transformations building upon it, that can achieve the same time-bound while \emph{preserving connectivity} throughout their course: one is based on \emph{folding} segments and the other on \emph{extending} them.

As the $O(\sqrt{n})$ length of uniform partitioning into segments is optimal for the above type of transformation, we turn our attention into different approaches, aiming at further reducing the running time of transformations. Allowing once more to break connectivity, we develop an alternative transformation based on \emph{successive doubling}. The partitioning is again uniform for individual ``phases'', but different phases have different partitioning length. The transformation starts from a minimal partitioning into $n/2$ lines of length 2, then matches them to the closest neighbours via shortest paths to obtain a partitioning into $n/4$ lines of length $4$, and, continuing in the same way for $\log n$ phases, it maintains the invariant of having $n/2^i$ individual lines in each phase $i$, for $1\leq i\leq \log n$. By proving that the cost of pairwise merging through shortest paths in each phase is linear in $n$, we obtain that this approach transforms the diagonal into a line in time $O(n\log n)$, thus yielding a substantial improvement. 
Observe that the problem of transforming the diagonal into a line seems to involve solving the same problem into smaller diagonal segments (in order to transform those into corresponding line segments). Then, one may naturally wonder whether a recursive approach could be applied in order to further reduce the running time. We provide a negative answer to this, for the special case of uniform recursion and at the same time obtain an alternative $O(n\log n)$ transformation for the diagonal-to-line problem.

Our final aim is on attempting to generalise the ideas developed for the above benchmark case in order to come up with \emph{equally efficient universal transformations}. We successfully generalise both the $O(n\sqrt{n})$ and the $O(n\log n)$ approaches, obtaining universal transformations of worst-case running times $O(n\sqrt{n})$ and $O(n\log n)$, respectively. We achieve this by enclosing the initial shape into a square bounding box and then subdividing the box into square sub-boxes of appropriate dimension. For the $O(n\sqrt{n})$ bound, a single such partitioning into sub-boxes of dimension $\sqrt{n}$ turns out to be sufficient. For the $O(n\log n)$ bound we again employ a successive doubling approach through phases of an increasing dimension of the sub-boxes, that is, through a new partitioning in each phase. Therefore, our ultimate theorem (followed by a constructive proof, providing the claimed transformation) states that: \emph{``In this model, when connectivity need not necessarily be preserved during the transformation, any pair of connected shapes $A$ and $B$ can be transformed to each other in sequential time $O(n\log n)$''}. 

Table \ref{tab:AT} summarises the running times of all the transformations developed in this paper.

\begin{table}[h]
	\normalsize
	\setlength{\tabcolsep}{13pt}
	\begin{center}
		\begin{tabular}{l  l   c c  }
			\hline
			Transformation  & Problem & Running Time & Lower Bound \\ \hline
			\emph{DL-Partitioning}  & Diagonal & $O(n\sqrt{n})$ & $\Omega(n)$ \\
			\emph{DL-Doubling}  & Diagonal & $O(n \log n)$  & $\Omega(n)$ \\
	        \emph{DL-Recursion}  & Diagonal & $O(n \log n)$  & $\Omega(n)$ \\			
			\emph{DLC-Folding}  & Diagonal Connected & $O(n\sqrt{n})$  & $\Omega(n)$ \\ 
			\emph{DLC-Extending}  & Diagonal Connected & $O(n\sqrt{n})$ & $\Omega(n)$ \\
			\emph{U-Box-Partitioning}  & Universal & $O(n\sqrt{n})$ & $\Omega(n)$ \\ 
			\emph{U-Box-Doubling}  & Universal & $O(n \log n)$ & $\Omega(n)$ \\ \hline
		\end{tabular}
	\end{center}
	\caption{A summary of our transformations and their corresponding worst-case running times (the trivial lower bound is in all cases $\Omega(n)$). The Diagonal, Diagonal Connected, and Universal problems correspond to the {\sc DiagonalToLine}, {\sc DiagonalToLineConnected}, and {\sc UniversalTransformation} problems, respectively (being formally defined in Section \ref{subsec:problems}).}
	\label{tab:AT}
\end{table}

Section \ref{sec:prel} brings together all definitions and basic facts that are used throughout the paper. In Section \ref{sec:Transforming_the_Diagonal}, we study the problem of transforming a diagonal shape into a line, without and with connectivity preservation. Section \ref{sec:universal} presents our universal transformations. 
In Section \ref{sec:conclusions}, we conclude and discuss further research directions that are opened by our work.

\section{Preliminaries and Definitions}
\label{sec:prel}

The transformations considered here run on a two-dimensional square grid. Each cell of the grid possesses a unique location addressed by non-negative coordinates $(x,y)$, where $x$ denotes columns and $y$ indicates rows. A \emph{shape} $S$ is a set of $n$ \emph{nodes} on the grid, where each individual node $u \in S$ occupies a single cell $cell(u)  = (x_{u}, y_{u})$, therefore we may also refer to a node by the coordinates of the cell that it occupies at a given time. 
Two distinct nodes $(x_1,y_1)$, $(x_2,y_2)$ are \emph{neighbours} (or \emph{adjacent}) iff $x_2-1 \leq x_1\leq x_2+ 1$ and $y_2-1 \leq y_1\leq y_2+ 1$ (i.e., their cells are adjacent vertically, horizontally or diagonally). A shape $S$ is \emph{connected} iff the graph defined by $S$ and the above neighbouring relation on $S$ is connected.
Throughout, $n$ denotes the number of nodes in a shape under consideration.

A line, $ L \subseteq S $, is defined by one or more consecutive nodes in a column or row. That is, $L = (x_{0},y_{0}), (x_{1},y_{1}),\ldots ,(x_{k},y_{k})$, for $ 0 \leq k \leq n, \: k \in \mathbb{Z}$, is a line iff $ x_{0}= x_1 =\cdots = x_{k} $ and $ |y_{k}-y_{0}| = k $, or $ y_{0}=y_1=\cdots=y_{k} $ and $ |x_{k}-x_{0}| = k $.
A \emph{line move}, is an operation by which all nodes of a line $L$ move together in a single step, towards an empty \emph{cell} adjacent to one of $L$'s endpoints. A line move may also be referred to as \emph{step} (or \emph{move} or \emph{movement}) and time is discrete and measured in number of steps throughout. 
A move in this model is equivalent to choosing a node $u$ and a direction $d\in \{up, down, left, right\}$ and moving $u$ one position in direction $d$. This will additionally push by one position the whole line $L$ of nodes in direction $d$, $L$ (possibly empty) starting from a neighbour of $u$ in $d$ and ending at the first empty cell.  More formally and in slightly different terms: A line $ L = (x_{1},y), (x_{2},y) , \ldots , (x_{k},y) $ of length $k$, where $1 \leq k \le n $, can push all $k$ nodes rightwards in a single step to positions $ (x_{2},y), (x_{3},y) , \ldots , (x_{k+1},y) $ iff there exists an empty cell to the right of $L$ at $(x_{k+1}, y)$. The ``down'', ``left'', and ``up'' movements are defined symmetrically, by rotating the whole system 90\degree, 180\degree, and 270\degree\  clockwise, respectively.

\begin{definition}[A permissible line move]\label{def:permissible_line_move} 
Let $ L $ be a line of $k$ nodes, where $1 \leq k \le n $. Then, $L$ can move as follows (depending on its original orientation, i.e. horizontal or vertical):
\begin{enumerate}
	\item \emph{Horizontal.} Can push all $k$ nodes rightwards in a single step from $ (x_{1},y), (x_{2},y) , \ldots ,$ $(x_{k},y) $ to positions $ (x_{2},y), (x_{3},y) , \ldots , (x_{k+1},y) $ iff there exists an empty cell to the right of $L$ at $(x_{k+1}, y)$. Similarly, it can push all $k$ nodes to the left to occupy $ (x_{0},y), (x_{1},y) , \ldots ,$ $(x_{k-1},y) $, iff there exists an empty cell at $(x_{0}, y)$.
	
	\item \emph{Vertical.} Can push all $k$ nodes upwards in a single step from $ (x,y_{1}), (x,y_{2}) , \ldots ,$ $(x,y_{k}) $ into $ (x,y_{2}), (x,y_{3}) , \ldots , (x,y_{k+1}) $, iff there exists an empty cell above $L$ at $(x, y_{k+1})$. Similarly, it can push all $k$ nodes down to occupy $ (x,y_{0}), (x,y_{1}) , \ldots , (x,y_{k-1}) $, iff there exists an empty cell below $L$ at $(x, y_0)$.
\end{enumerate}	

\end{definition}

The following definitions from \cite{MICHAIL2019} shall be useful for our study. Let us first agree that we colour black any cell occupied by a node (as in Figure \ref{fig:Definitione_Colored_Sahpe_}).

\begin{definition}\label{def:Hole}
	A hole $ H $ is a set of empty cells that are unoccupied and enclosed by nodes $ u \in S $, where $H = \{h_{i} \mid h_{i} = (x_{i}, y_{i})  ,\ (x_{i}, y_{i}) \in \mathbb{Z}^{2} , i \geq 0  \}$, such that every infinite spanning single path starting from the hole $h \in H$ and moving, only vertically and horizontally, certainly passes through a black cell of a node $ u \in S$. 		    		 
\end{definition}

\begin{definition}\label{def:Compact_Connected_Shapes} 
	A compact shape $ S' $ contains all cells of $ S $ and the hole $H$, such that $ S' $ has no holes,  $ S' = S \cup H $. For instance, the black and grey cells in Figure~\ref{fig:Definitione_Colored_Sahpe_} are forming a connected compact shape $S'$.		   
\end{definition}  

\begin{definition} \label{def:Perimeter} 
	A perimeter (border) of $ S $ is defined as a polygon of unit length line segments, which surrounds the minimum-area of the interior of $ S' $, such as the blue line in Figure~\ref{fig:Definitione_Colored_Sahpe_}. 			
\end{definition}

\begin{definition} \label{def:Surrounded_Shape} 
	A surrounded layer of $ S $ is consisting of cells that are not occupied by nodes of $S$ and contribute to the perimeter by at least one of its sides or corners (e,g the red cells in Figure~\ref{fig:Definitione_Colored_Sahpe_}). Normally, each cell in a 2D grid owns four line-segment sides and four corners.     
\end{definition}

\begin{definition} \label{def:External_Surface_of_Connected_Shape} 
	The external surface of $S$ is another shape $W$, which includes all cells occupied by nodes of $S$ and adjacent to the perimeter vertically or horizontally. The external surface of $S$ is also a connected shape itself, and that is proved in \cite{MICHAIL2019}.           
\end{definition}
\begin{figure}[h!t] 	
	\centering
	\includegraphics[angle=90, scale=0.5]{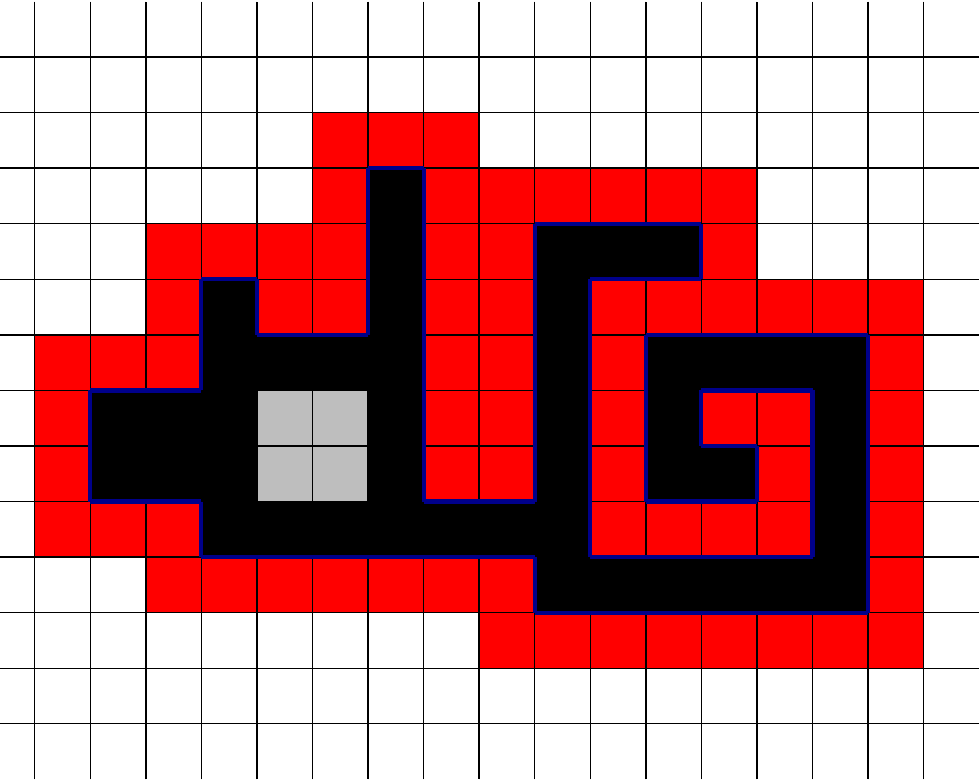}
	\caption{ All nodes of $S$ occupy the black cells, while the grey cells define a hole. The blue line depicts the perimeter of $S$, and the surrounded layer is indicated by the red cells.}
	\label{fig:Definitione_Colored_Sahpe_}	    	
\end{figure} 

\begin{proposition} \label{prop:Connectivity_Of_Surrounded_Layer} 
	The surrounded layer of any connected shape $S$ is always another connected shape.           	
\end{proposition}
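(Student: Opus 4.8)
The plan is to read off the required connectivity directly from a walk around the perimeter of $S$. Since $S$ is connected, the compact shape $S'$ (Definition~\ref{def:Compact_Connected_Shapes}) is connected and free of holes, so its perimeter (Definition~\ref{def:Perimeter}) is a single closed polygonal curve $P$ bounding $S'$ — self-touching only at ``pinch'' lattice points, where two diagonally adjacent cells of $S'$ meet and the passage is resolved by going around $S'$. I would fix a traversal of $P$ that keeps $S'$ always on the same side and write $P$ as a cyclic sequence of unit edges $e_1,e_2,\ldots,e_m,e_1$. For each $e_i$ let $c_i$ be the grid cell on the \emph{exterior} side of $e_i$. Because $e_i\in P$, the cell on its interior side lies in $S'$ while $c_i$ does not; hence $c_i$ is neither a node nor a hole cell (holes belong to $S'$), so $c_i$ is an empty cell contributing the side $e_i$ to the perimeter, i.e.\ $c_i$ lies in the surrounded layer of $S$ for every $i$.

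The first step is to show that consecutive exterior cells $c_i$ and $c_{i+1}$ are always equal or adjacent in the sense of Section~\ref{sec:prel} (vertically, horizontally, or diagonally). This is a short local case analysis at the lattice point $p$ shared by $e_i$ and $e_{i+1}$, driven by how many of the four cells incident to $p$ belong to $S'$: at a straight point $P$ does not turn and $c_i=c_{i+1}$ or they are edge-adjacent; at a convex corner ($p$ incident to exactly one cell of $S'$) the cells $c_i,c_{i+1}$ are the two cells edge-adjacent to that $S'$-cell and therefore share the corner $p$, so they are diagonally adjacent; at a reflex corner ($p$ incident to exactly three cells of $S'$) the unique exterior cell at $p$ equals both $c_i$ and $c_{i+1}$; and at a pinch point the two passages of $P$ through $p$ produce pairs of exterior cells that are again diagonally adjacent across $p$. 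Hence the cyclic list $c_1,\ldots,c_m$ is a closed walk in the adjacency graph, so $C=\{c_1,\ldots,c_m\}$ is a connected sub-shape of the surrounded layer.

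The second step is to show that every cell $v$ of the surrounded layer is either some $c_i$ or adjacent to some $c_i$. If $v$ contributes a side to the perimeter, that side is some $e_i$ and $v=c_i$. Otherwise $v$ contributes only a corner $p$ to the perimeter; the four cells around $p$ are not all in $S'$ (since $v\notin S'$) and not all outside $S'$ (since $p\in P$), so $p$ is a convex corner, reflex corner, or pinch point, and in each case the exterior cell(s) of the perimeter edge(s) at $p$ are exactly the cells among the four around $p$ that are edge-adjacent to $v$. Thus $v$ is adjacent to at least one element of $C$. Combining the two steps, the surrounded layer is the connected set $C$ together with cells each adjacent to $C$, and is therefore itself connected, as claimed.

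I expect the main obstacle to be the degenerate configurations rather than the generic ones: making precise that $P$ is a single closed curve when $S'$ is only diagonally (king-)connected, and verifying the pinch-point subcases in both local analyses. These are exactly the situations — for instance the diagonal shape $D$ — where $S'$ fails to be edge-connected and the perimeter touches itself, and handling them carefully is what makes the statement non-trivial; the convex, reflex, and straight cases are routine bookkeeping.
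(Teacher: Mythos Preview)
Your proposal is correct and follows essentially the same approach as the paper: both traverse the perimeter of $S'$ as a closed walk and perform a local case analysis at each step (straight, convex turn, reflex turn, diagonal/pinch) to conclude that the exterior cells encountered form a connected sequence covering the surrounded layer. Your version is organised a bit more carefully—separating the walk-connectivity of the side-contributing cells from the coverage of corner-only cells, and singling out the pinch-point degeneracy explicitly—whereas the paper folds these into a single four-case list inherited from~\cite{MICHAIL2019}; but the underlying idea is the same.
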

\begin{proof}
	It follows from \cite{MICHAIL2019}. Since $S$ is connected, then the perimeter of $S$ is connected too, and hence, forms a cycle. Each segment of the perimeter is contributed by two cells, belonging to the external surface and the surrounded layer. Now, if one walks on the perimeter (vertically or horizontally) or turns (left or right) clockwise or anticlockwise at any segment, one of the following cases will occur: 
	\begin{bracketenumerate}
		\item Pass through two adjacent vertical or horizontal cells on the surrounded layer and the external surface of $ S $.
		\item Stay put at the same position (cell) on the external surface and move through three neighbouring cells connected perpendicularly on the surrounded layer of $ S $.
		\item Stay put at the same position (cell) on the surrounded layer and pass through three neighbouring cells connected perpendicularly on the external surface of $ S $.
		\item Stay put at the same position (cell) on the surrounded layer and pass through two neighbouring cells connected diagonally on the external surface of $ S $ and one cell of the surrounded layer.
	\end{bracketenumerate}     	
	Subsequently, all cases above preserve connectedness of the surrounded layer and the external surface of $ S $. 
\end{proof}  

As already mentioned, we know that there are related settings in which any pair of connected shapes $A$ and $B$ of the same order (``order'' of a shape $S$ meaning the number of nodes of $S$ throughout the paper) can be transformed to each other\footnote{We also use $A\rightarrow B$ to denote that shape $A$ can be transformed to shape $B$.} while preserving the connectivity throughout the course of the transformation.\footnote{In this paper, whenever transforming into a target shape $B$, we allow any placement of $B$ on the grid, i.e., any shape $B^\prime$ obtained from $B$ through a sequence of rotations and translations.} This, for example, has been proved for the case in which the available movements to the nodes are rotation and sliding \cite{DP04,MICHAIL2019}. 
We now show that the model of \cite{DP04,MICHAIL2019} is a special case of our modeld, including universal transformations, are also valid transformations in the present model.

\begin{proposition} \label{Simulation} 
The rotation and sliding model of \cite{DP04,MICHAIL2019} is a special case of the present model.    
\end{proposition}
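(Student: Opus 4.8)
The plan is to show that every primitive move of the rotation-and-sliding model of \cite{DP04,MICHAIL2019} can be reproduced by one or two \emph{unit line moves}, that is, permissible moves (Definition~\ref{def:permissible_line_move}) applied to a line of length $k=1$, leaving every other node untouched. Since a length-$1$ line is trivially a line, this reduces the claim to a local case analysis of the two primitives (rotation and sliding).

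First I would observe that a \emph{sliding} move, in which a node $u$ translates by one cell horizontally or vertically into an empty cell (along the side of one or two supporting neighbours), is literally the line move of Definition~\ref{def:permissible_line_move} applied to the line $L=\{u\}$: the precondition there is exactly the existence of the empty destination cell, which is precisely the precondition of a slide. Hence a slide costs one step in the present model and produces the same resulting shape.

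Next I would handle \emph{rotation}. A rotation takes $u$ from a cell $c_1$ to a diagonally adjacent cell $c_2$, pivoting around a common neighbour $v$, and is permitted only when $c_2$ is empty and the corner cell $c_3$ --- the cell that is orthogonally adjacent to both $c_1$ and $c_2$ and distinct from $v$ --- is empty as well (this is the cell the disc ``sweeps over''). I would then simulate this rotation by the two-step sequence: move $u$ from $c_1$ to $c_3$, then from $c_3$ to $c_2$. Both are unit line moves: the first is permissible because $c_3$ is empty by the rotation precondition; the second is permissible because $c_2$ was empty before the rotation and the first step did not touch it, while $c_3$ has just been vacated. (If the variant in use allows either of the two corner cells flanking the $c_1$--$c_2$ pair to be the free one, the same argument applies with that cell in the role of $c_3$.) Thus a rotation costs at most two steps, and, apart from the transient intermediate configuration in which $u$ sits on $c_3$, it realises the same sequence of shapes.

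Finally I would assemble these: given any transformation $A\to B$ of the rotation-and-sliding model using $t$ moves, replacing each slide by its one-step and each rotation by its two-step simulation yields a valid transformation $A\to B$ in the present model using at most $2t$ moves; in particular every pair of shapes transformable there is transformable here, so universality and the constructive characterisations carry over, and the worst-case transformation length is at most doubled, i.e.\ remains $O(n^2)$. The one point that needs care --- and the main thing to get right --- is the bookkeeping of empty cells: one must check that the empty-cell precondition of a rotation is exactly what makes \emph{both} intermediate unit moves permissible, and that the transient configuration produced midway through a rotation simulation creates no conflict with the rest of the shape; it does not, since only the cells $c_1,c_2,c_3$, all controlled by the rotation precondition, are ever involved.
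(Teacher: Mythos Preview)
Your proof is correct and follows essentially the same approach as the paper: sliding is identified with a single length-$1$ line move, and rotation is simulated by two consecutive length-$1$ line moves through the empty corner cell guaranteed by the rotation precondition, yielding an at-most doubling of the step count. Your treatment is in fact somewhat more careful than the paper's in verifying that the intermediate unit moves are permissible and affect no other nodes.
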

\begin{proof} 
We establish a technique to prove that our model is capable of simulating \emph{rotation and sliding models} of a two-dimension square grid appeared in \cite{DP04,MICHAIL2019}. First, the sliding operation is equivalent in all those models, that is, if a node  $u$ is located at a cell of the grid, $u = (x,y)$, then $u$ can slid right to any empty cell at $ (x+1, y+1)$ over a horizontal line of length 2, such as in Figure \ref{fig:Simulation} (a). The ``down'', ``left'', and ``up'' movements are defined symmetrically, by rotating the whole system 90\degree, 180\degree, and 270\degree\  clockwise, respectively. By Definition \ref{def:permissible_line_move}, our model is exploited \textit{sliding} rule to push a line of 1 node into an adjacent empty cell, whether vertically or horizontally. For rotation, all mentioned models perform a single operation to rotate a node $u_{1} = (x,y)$ around another $u_{2} = (x,y-1)$ by a $90\degree$ clockwise iff there exits two empty cells at $ (x+1,y) $ and $ (x+1,y+1) $, see Figure \ref{fig:Simulation}(b). Analogously, this holds for all possible rotations by again rotating the whole system 90\degree, 180\degree, and 270\degree\  clockwise, respectively. Still, the rotation mechanism is also adopted by this model following Definition \ref{def:permissible_line_move}, and actually it costs twice for a single rotation to take place, compared with others. Subsequently, it implies that all transformations established there (with their running time at most doubled, including universal transformations, are also valid transformations in the present model).
\begin{figure}
		\centering
		\subcaptionbox{}
		{\includegraphics[scale=0.7]{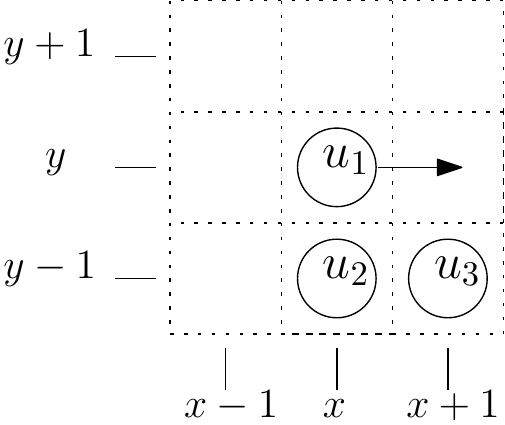}}	 \qquad \qquad \qquad
		\subcaptionbox{}
		{\includegraphics[scale=0.7]{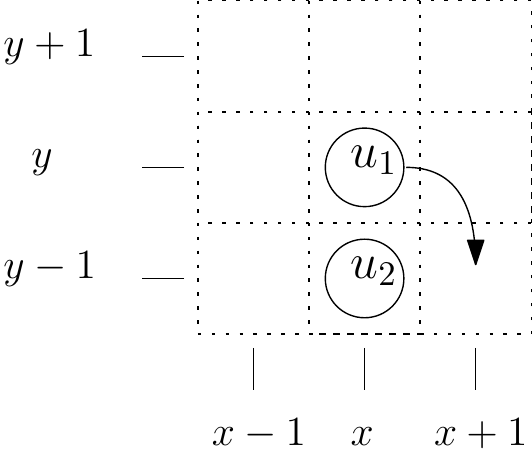}}	
		\caption{(a) An example of sliding $u_1$ over $u_2$ and $u_3$ to an empty cell to the left. (b) Rotate $u_{1}$ a $90\degree$ clockwise around $u_{2}$. }
		\label{fig:Simulation}	
	\end{figure}    
 
\end{proof}  
  Consider a line $ L $ of $ k $ nodes occupying $(x_{1},y), (x_{2},y), \ldots , (x_{k},y) $ and empty cells at $(x_{k+1},y), $ $ (x_{k+2},y), \ldots , (x_{k+o},y) $ for all $ o $, where $k=o \ge 1$.  Now, we aim to transfer all $k$ nodes to the right to fill in all $k$ empty cells. As usual, any transformation of a single motion would move all $k$ nodes in a total of $O(k^{2})$ moves, since each $k$ transfers a distance of $\Delta = k$. On the contrary, the $k$ nodes move altogether in parallel $\Delta = k$ distance to occupy the empty cells in a total of $O(k)$ steps, by exploiting the linear-strength of this model. Now, we are ready to show a more beneficial property in the following lemma:

\begin{lemma} \label{lem:Transfer_Line_H_to_V}
	The minimum number of line moves by which a line of length $ k $, $1\leq k \leq n$, can completely change its orientation\footnote{From vertical to horizontal and vice versa.}, is $2k-2$. 
\end{lemma}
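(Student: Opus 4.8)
Both bounds follow by tracking which grid rows and columns currently contain a node. Up to rotating the whole grid by $90\degree$, assume the line starts horizontal, occupying one row and $k$ columns, and must end as a vertical line, occupying one column and $k$ rows. The plan is to use the potential $\Phi := c - r$, where $c$ and $r$ denote the numbers of distinct occupied columns and rows, respectively. Initially $\Phi = k-1$ and at the end $\Phi = 1-k$, so any valid reorientation must change $\Phi$ by $2k-2$ in total (even if $\Phi$ is not monotone along the way).

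For the lower bound, I would argue that a single permissible line move (Definition~\ref{def:permissible_line_move}) changes $\Phi$ by at most $1$ in absolute value. A horizontal move slides some maximal run of nodes one cell left or right: every moved node keeps its row, so $r$ is unchanged; the run can make at most one column newly occupied (the one it advances into) and at most one column newly empty (the one it vacates), so $|\Delta c|\le 1$. The vertical case is identical with the roles of $r$ and $c$ interchanged. Hence $\Phi$ moves by at most $1$ per step, and at least $2k-2$ steps are needed.

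For the upper bound, I would give an explicit $(2k-2)$-move transformation maintaining the invariant that the current shape is a horizontal segment of $m+1$ nodes in row $0$ on columns $0,\dots,m$ together with a vertical segment of $k-m-1$ nodes in column $0$ on rows $1,\dots,k-m-1$; the original line is the case $m=k-1$ (empty vertical part), and $m=0$ is exactly a vertical line of length $k$. Decreasing $m$ by one costs two moves: first push the length-$(k-m)$ vertical line on column $0$ (which contains cell $(0,0)$) one step up — legal since row $k-m$ is empty — creating a new topmost node and vacating $(0,0)$; then push the length-$m$ horizontal run $\{(1,0),\dots,(m,0)\}$ one step left into the now-empty $(0,0)$, restoring the invariant for $m-1$. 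Iterating this $k-1$ times reorients the line in $2(k-1)=2k-2$ moves; consistently with the lower-bound analysis, the first move of each pair adds a row and the second removes a column, so each move decreases $\Phi$ by exactly $1$. I expect the only point needing genuine care to be the lower bound — specifically the verification that a line move cannot add or delete more than one occupied row or column, which is immediate once the potential $\Phi=c-r$ is isolated — the construction itself being a routine, if slightly fiddly, induction on $m$.
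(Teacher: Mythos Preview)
Your proof is correct. The upper-bound construction is essentially the paper's: both maintain an L-shape and spend two moves to transfer one node between the two arms (the paper phrases it as ``slide one endpoint out, then push the remaining line one step''), so this part is the same argument read in mirror image.

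Your lower bound, however, is different from and sharper than what the paper writes. The paper tracks only the number of occupied rows and asserts that ``in any two consecutive steps, at most one new row can become occupied''. Taken at face value this is not true---from a horizontal line of length at least three, moving one node up and a different node down adds two new rows in two steps. Your potential $\Phi = c - r$ repairs this by accounting for columns as well: a horizontal move leaves $r$ fixed and changes $c$ by at most one (only the vacated and the entered column can change status), and symmetrically for vertical moves, so $|\Delta\Phi|\le 1$ per step is airtight. The paper's intended argument is presumably the same idea in spirit (a ``free'' row-gain must be paid for by a column-loss), but your formulation is the one that actually proves the bound.
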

\begin{proof}
	Assume a 2D square grid contains only a line $L_{1}$ of $ k $ nodes at $(x_{1},y_{1}), \ldots , (x_{1},y_{k})$ for all $k$, where $ k \geqslant 1 $, and empty cells on $(x_{2},y_{1}), \ldots , (x_{k},y_{1})$, as depicted in Figure \ref{fig:Exam_Line_Move}. Now, the first bottommost node, $ u_{1} \in L_{1} $, moves one step right to occupy an empty cell on $ (x_{1},y_{1}) $. Consequently, a new empty cell is created at $ (x_{1},y_{1})$, and  the length of $L_{1}$ decreases by $ k - 1$. By the linear-strength, $L_{1}$ pushes all $k-1$ nodes down altogether in parallel in a single-time-move to occupy $ (x_{1},y_{1}), \ldots , (x_{1},y_{k-1})  $. Hence, two line steps have been performed to move $u_{1}$ and push down all $k-1$ nodes. Therefore, it implies that each of $k-1 \in L_{1} $ requires two steps to transfer into the bottommost row, $y_1$. Thus, we conclude that any line of length ($k$ nodes) requires at most twice of its length, $ 2k - 2 $, to change its orientation, which is bounded above by $O(k)$. 
	
	For the matching lower bound consider w.l.o.g. a horizontal line. A complete change of orientation requires the nodes of the line to occupy $k$ consecutive rows (from just one originally). The bound follows by observing that in any two consecutive steps, at most one new row can become occupied.       
\end{proof}

\begin{figure}[h!t]	
	\centering
	\includegraphics[scale=0.7]{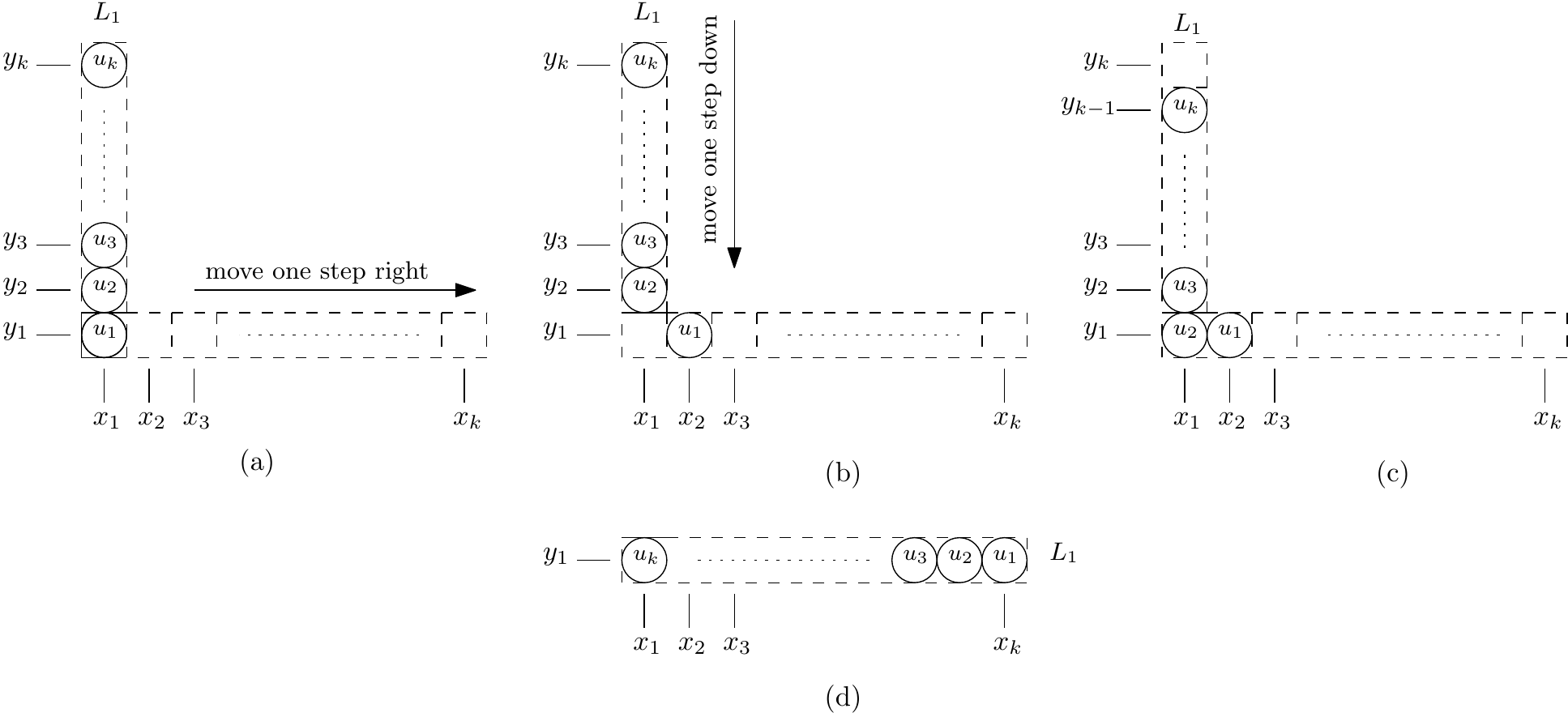}
	\caption{A line of $k$ nodes changes orientation by two consecutive steps per node. (a) Move $u_{1}$ one step to the right. (b) and (c) All $k-1$ nodes push down altogether in single step. In (d), the line has finally transformed from vertical to horizontal after $2k$ steps.}   
	\label{fig:Exam_Line_Move}	         
\end{figure} 

A property that typically facilitates the development of universal transformations, is reversibility of movements. To this end, we next show that line movements are reversible.    

\begin{lemma} [Reversibility] 	\label{lem:Transferability_LineMove}   
	Let $(S_{I}, S_{F})$ be a pair of connected shapes of the same number of nodes $ n $. If  $S_{I} \rightarrow  S_{F} $ (``$\rightarrow$'' denoting ``can be transformed to
	via a sequence of line movements'') then $ S_{F} \rightarrow S_{I} $.
\end{lemma}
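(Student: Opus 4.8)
The plan is to prove reversibility first at the level of a single line move and then to lift it to arbitrary transformation sequences by a straightforward induction on their length. Throughout, I regard a transformation as a sequence of grid configurations $S_{I}=C_{0},C_{1},\ldots,C_{t}=S_{F}$ in which each $C_{i}$ is obtained from $C_{i-1}$ by one permissible line move $m_{i}$; note that connectivity of $S_{I}$ and $S_{F}$ is irrelevant here (the intermediate configurations need not be connected), so it suffices to reverse this raw sequence of configurations. For the single-move step, let $C\xrightarrow{m}C'$, where $m$ pushes a line $L$ of $k\geq 1$ nodes one cell in a direction $d\in\{up,down,left,right\}$. By \ref{def:permissible_line_move}, permissibility of $m$ means precisely that the cell $e$ immediately beyond the leading endpoint of $L$ in direction $d$ is empty in $C$. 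After the move, $L$ occupies the $k$ cells obtained by shifting it one position in direction $d$; in particular it now covers $e$, the cell $f$ formerly occupied by the trailing node of $L$ has become empty, and no other cell changed its occupancy status. Hence, in $C'$, the line $L$ (in its new position) has an empty cell, namely $f$, immediately beyond its trailing endpoint in the direction $-d$ opposite to $d$. By \ref{def:permissible_line_move} again, pushing $L$ one cell in direction $-d$ is a permissible move $m^{-1}$, and performing it returns each of the $k$ nodes of $L$ to its position in $C$ while leaving every other node fixed; therefore $C'\xrightarrow{m^{-1}}C$.

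It remains to chain these reversals. I argue by induction on $t$ that $C_{t}\rightarrow C_{0}$. The base case $t=0$ is trivial. For $t\geq 1$, the single-move step gives $C_{t}\xrightarrow{m_{t}^{-1}}C_{t-1}$, and the induction hypothesis gives $C_{t-1}\rightarrow C_{0}$; concatenating the two yields
\[
S_{F}=C_{t}\xrightarrow{m_{t}^{-1}}C_{t-1}\xrightarrow{m_{t-1}^{-1}}\cdots\xrightarrow{m_{1}^{-1}}C_{0}=S_{I},
\]
that is, $S_{F}\rightarrow S_{I}$, as claimed. Observe that the reverse transformation simply executes the inverse moves in the reverse order, so its length equals that of the original.

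The argument is short, and essentially the only point that needs care is the bookkeeping in the single-move step: one must verify that the reverse push $m^{-1}$ is genuinely a move of the model in the exact sense of \ref{def:permissible_line_move}, and in particular that it sweeps along precisely the line $L$ and not a longer block of consecutive nodes. This is guaranteed because $f$ — the cell vacated by the trailing node of $L$ — is the unique cell whose occupancy changed from full to empty, it sits immediately behind $L$'s trailing endpoint in direction $-d$, and it therefore serves exactly as the ``first empty cell'' that bounds the reverse push. No further subtlety arises, so I expect this bookkeeping to be the whole of the (minor) difficulty.
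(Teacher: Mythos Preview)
Your proof is correct and follows the same two-step approach as the paper: establish that a single line move is reversible (because the cell vacated by the trailing node serves as the empty target cell for the reverse push), then chain the inverses in reverse order by induction on the length of the sequence. Your treatment is in fact more careful than the paper's, which argues the single-move case via a concrete four-node example rather than in full generality; your explicit verification that $f$ is the ``first empty cell'' bounding the reverse push is exactly the bookkeeping the paper glosses over.
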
 
\begin{proof}
	First, we should prove that each single line step is reversible. Figure~\ref{fig:Transferability_LineMove_LineMove} (a) shows a simple connected shape of four nodes forming two horizontal and one vertical lines at $ L_1 =\{ u_{1}, u_{2}\}$, $ L_2 =\{ u_{3}, u_{4}\}$ and $ L_3 =\{ u_{2}, u_{3}\}$, respectively. Assume this configuration has no more space to the left, beyond the dashed line; therefore, $ L_1 $ moves one step to occupy the empty cell $(i+2,j+1)$. Now, all $ L_1 $ nodes are moving altogether to fill in positions $ (x_{i+1},y_{j+1}) $ and $ (x_{i+2},y_{j+1}) $, as depicted in Figure \ref{fig:Transferability_LineMove_LineMove} (b). Consequently, the previous line step creates another empty cell at $ (x_{i},y_{j+1}) $, which then gives the ability to $ L_1 $ to move back reversibly to fill in this new empty cell on the left. Thus, we conclude that any single line step is reversible, which implies that any finite sequence of line steps are reversible too.
	 
	Still, reversibility is valid for any line movement in our model by transforming any given shape $ S_{I} $ into a line $ S_{L} $ (vertical or horizontal). This is sufficient because if any shape $ S_{I} $ transforms into a line $ S_{L}$, then any pair of shapes $ S_{I} $ and $ S_{F} $ can then be transformed to each other, via an intermediate connected shape $S_{L} $, by first transforming  $ S_{I} $ into $ S_{L} $ and then  $S_{L} $ to  $S_{F} $, by reversing the transformation of  $S_{F} $ to $S_{L} $.	
	\begin{figure}
		\centering
		\subcaptionbox{}
		{\includegraphics[scale=0.8]{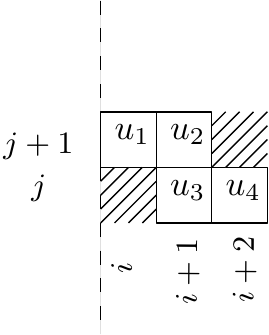}}	  \qquad \qquad
		\subcaptionbox{}
		{\includegraphics[scale=0.8]{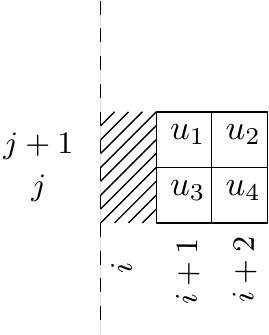}}
		\caption{An example of a reversible line step.}
		\label{fig:Transferability_LineMove_LineMove}	
	\end{figure}
\end{proof}

\subsection{Nice shapes}
\label{subsecNiceShapes}

A family of shapes, denoted $ NICE $, is introduced in this study to probably act as efficient intermediate shapes of the transformation. A \emph{nice} shape is, informally, any connected shape that contains a particular line called the \emph{central line} (denoted $L_{C}$). Intuitively, one may think of $L_{C}$ as a supporting (say horizontal) base of the shape, where each node $u$ not on $L_C$ must be connected to $L_C$ through a vertical line.
\begin{definition}[Nice Shape] \label{def:nice_shape}
		A connected shape $S \in NICE $ if there exists a central line $ L_{C}\subseteq S$, such that every node $ u \in S\setminus L_{C}$ is connected to $  L_{C} $ via a line perpendicular to $L_{C}$ (Figure~\ref{fig:Nice_Shapes} shows some examples of a \textit{nice} shape  and non \textit{nice} shapes). 
\end{definition}

\begin{figure}[h!t]	
	\centering
	\includegraphics[scale=0.3]{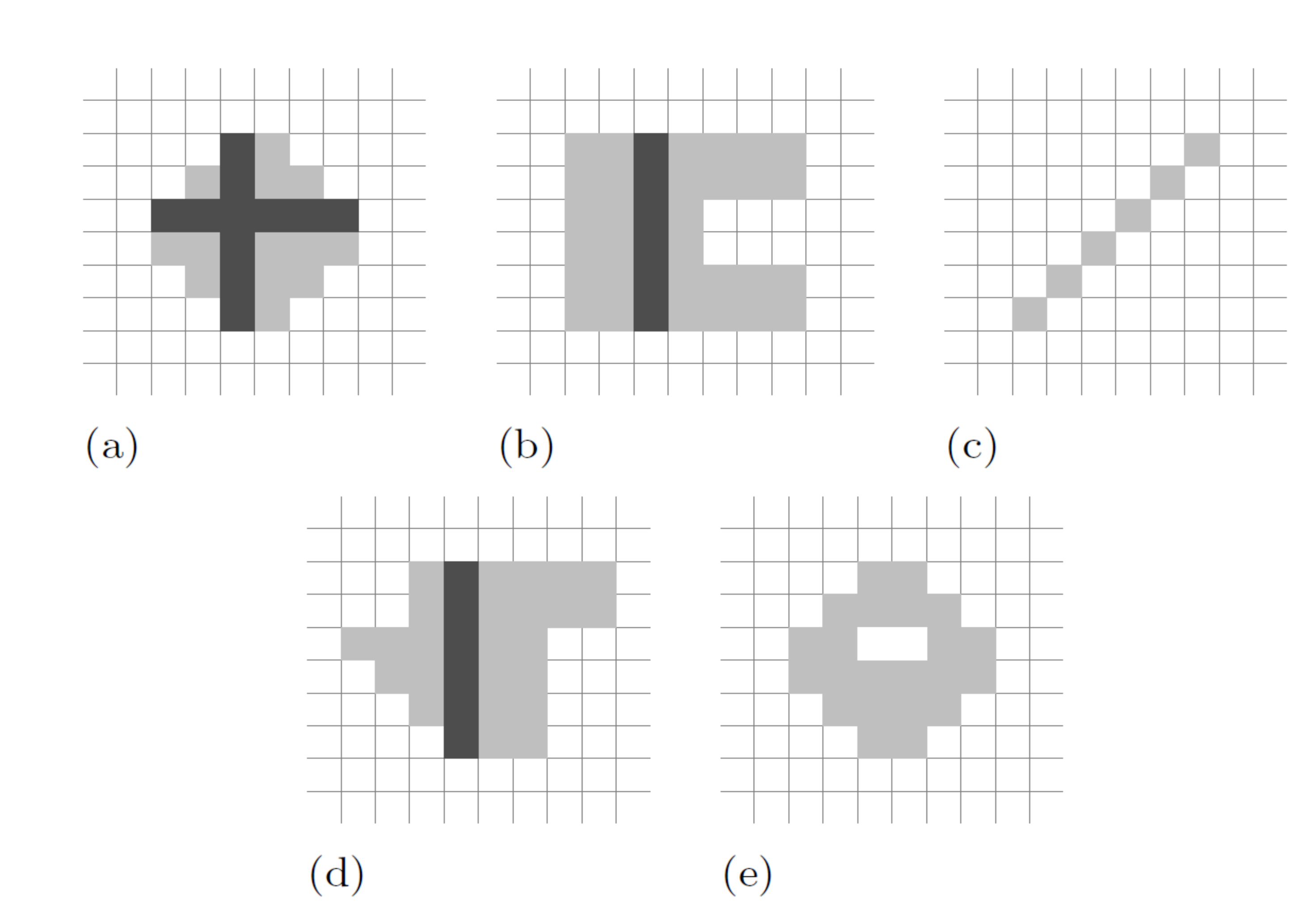}
	\caption{ The central line $L_C$ occupies black cells of \textit{nice} shapes in (a), (b) and (d). In (c), the shape is not \textit{nice} (due to the lack of $L_{C}$). (e) is also not \textit{nice} because of the hole, white cells inside the shape, which prevents the formation of $L_{C}$.}   
	\label{fig:Nice_Shapes}	         
\end{figure}  

By reversibility (Lemma \ref{lem:Transferability_LineMove}), a \textit{nice} shape is exploited as an intermediate shape that simplifies the universal transformations.  The following proposition shows that: 

\begin{proposition}  \label{prop:Niceshape_to_line}
	Let $S_{Nice}$  be a \textit{nice} shape and $S_{L}$ a straight line, both of the same order $n$. Then $S_{Nice} \rightarrow S_{L} $ (and $S_{L} \rightarrow S_{Nice} $) in $O (n)$ steps.
\end{proposition}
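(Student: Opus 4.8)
The plan is to exploit the structure of a \emph{nice} shape $S_{Nice}$ directly: by Definition~\ref{def:nice_shape} there is a central line $L_C$, which we may assume is horizontal and occupies a set of consecutive cells in some row $y_0$, and every node not on $L_C$ sits on a vertical line segment that is attached to $L_C$ from above or below. The idea is to collapse each such vertical segment onto the row $y_0$, one column at a time, working from the two ends of $L_C$ inward so that newly arriving nodes always have empty grid space to expand into along $y_0$. After all segments have been flattened, $S$ is a contiguous horizontal line, i.e.\ $S_L$ up to translation; reversibility (Lemma~\ref{lem:Transferability_LineMove}) then gives $S_L \rightarrow S_{Nice}$ as well.

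Concretely, I would first process the vertical segments hanging \emph{below} $y_0$: take the one attached at the leftmost such column, and push it rightward/leftward out of the way is not needed — instead push it \emph{down} is wrong; rather, use the orientation-change move of Lemma~\ref{lem:Transfer_Line_H_to_V}. A vertical segment of length $k$ attached at column $x$ can be re-oriented into a horizontal segment of length $k$ lying in a single row in $2k-2$ steps, and we arrange (by choosing the direction of the initial single-node move toward the nearer free side of the bounding region, or by first sliding the whole configuration so that space is available) that this horizontal segment ends up extending $L_C$ along row $y_0$. Doing this for the segment at one endpoint of $L_C$, then the next, and so on, each re-orientation of a length-$k$ segment costs $O(k)$ steps, and $\sum_{\text{segments}} k = n - |L_C| \le n$, so the total is $O(n)$. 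The segments above $y_0$ are handled symmetrically (reflecting the system $180\degree$). Care is needed only to ensure, at each step, that the cell the line is pushed into is empty: because we peel segments off starting from the outermost columns of $L_C$ and deposit their nodes as a growing run along $y_0$ \emph{beyond} the current extent of the shape in that direction, the target cells are always vacant, so every move invoked is permissible in the sense of Definition~\ref{def:permissible_line_move}.

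The step I expect to be the main obstacle is the bookkeeping that guarantees the ``always empty to expand into'' invariant: a nice shape can have vertical segments on both sides of $L_C$ and of wildly different lengths, and when we re-orient a tall segment we temporarily occupy many cells of a column, which must not collide with a neighbouring segment or with the part of $L_C$ not yet processed. The clean way to finesse this is to note that $S_{Nice}$ is connected and fits in an $n \times n$ region, and to reserve, once and for all, an empty horizontal corridor of length $n$ in row $y_0$ extending to the right of the shape's bounding box (translating the whole shape leftwards first, via $O(n)$ line moves of $L_C$ itself, if necessary); every flattened segment's nodes are routed into this corridor, which only shrinks from the left as we go, so no conflict ever arises. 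Summing $O(k)$ per segment plus the one-time $O(n)$ translation yields the claimed $O(n)$ bound, and the reverse direction is immediate from Lemma~\ref{lem:Transferability_LineMove}.
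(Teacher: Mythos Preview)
Your proposal is correct and follows essentially the same approach as the paper: invoke Lemma~\ref{lem:Transfer_Line_H_to_V} to re-orient each vertical segment attached to $L_C$ at cost linear in its length, sum over all segments to obtain $2(n-|L_C|)=O(n)$, and apply reversibility (Lemma~\ref{lem:Transferability_LineMove}) for the other direction. The paper's own proof is a two-line version of exactly this and does not discuss the empty-cell bookkeeping at all.

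One minor quibble: the optional ``translate the whole shape leftwards via $O(n)$ line moves of $L_C$'' step is both unnecessary and not quite right as written. Moving the horizontal line $L_C$ alone does not carry the perpendicular segments with it, so this would not translate the shape; and in any case no translation is needed, since processing segments from the outermost columns of $L_C$ inward and flattening each one \emph{outward} along row $y_0$ already places the new nodes beyond the current bounding box, where the target cells are vacuously empty. Dropping that sentence, your argument is complete.
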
 
\begin{proof}
	By Definition \ref{def:nice_shape}, any $S_{Nice} $ of $ n $ nodes must have a central line $L_{C}$ of $k$ nodes, $1 \le k \le n$, where other $n -k$ nodes are connecting to $L_{C}$ via a line. By Lemma \ref{lem:Transfer_Line_H_to_V}, each of the  $n -k$ nodes requires at most two line steps to be included into $L_{C}$ in a  maximum total of $2(n -k)$ steps for all $n -k$. Then, $S_{Nice} $ requires at most $O(2n -2k)$ steps to transform into $ S_{L} $. As a consequence, the connected  pair of shapes  $(S_{Nice}, S_{L}) $ of the same order are transformable to each other by Lemma \ref{lem:Transferability_LineMove}, in $O(n)$ steps. 
\end{proof}

\subsection{Problem Definitions}
\label{subsec:problems}

We now formally define the problems to be considered in this paper.\\

\noindent\textbf{{\sc DiagonalToLine}.} Given an initial connected diagonal line $S_{D}$ and a target vertical or horizontal connected spanning line $S_{L}$ of the same order, transform  $S_{D}$  into $S_{L}$, without necessarily preserving the connectivity during the transformation. \\

\noindent\textbf{{\sc DiagonalToLineConnected}.} Restricted version of {\sc DiagonalToLine} in which connectivity must be preserved during the transformation. \\

\noindent\textbf{{\sc UniversalTransformation}.} Give a general transformation, such that, for all pairs of shapes $(S_I,S_F)$ of the same order, where $S_I$ is the initial shape and $S_F$ the target shape, it will transform $S_I$ into $S_F$, without necessarily preserving connectivity during its course. \\

\section{Transforming the Diagonal into a Line}
\label{sec:Transforming_the_Diagonal}

We begin our study from the case in which the initial shape is a diagonal line $S_{D}$ of order $ n $. Our goal throughout the section is to transform $S_{D}$ into a spanning line $S_{L}$, i.e.,f= solve the {\sc DiagonalToLine} and/or {\sc DiagonalToLineConnected} problems. We do this, because these problems seem to capture the worst-case complexity of transformations in this model.  

Consider a $S_{D}$ of $ n $ nodes occupying $ (x_{1},y_{1}), (x_{2},y_{2}), \ldots, (x_{n},y_{n}) $, such as the diagonal line of 6 nodes depicted in Figure~\ref{fig:Nice_Shapes} (c). Observe that the diagonal comprises some special properties which cannot be found in other connected shapes, that is, each single node of $S_{D}$ enjoys a unique $ x $- and $ y $- axis, in other words $n = \#rows = \# columns$. Below, we give three $O(n \sqrt{n})$-time strategies to transform the diagonal into a line. In Section \ref{subsec:DL-Transformability}, the algorithm allows the shape to break its connectivity  throughout transformation, whilst connectedness is preserved in Sections \ref{subsec:DLC-Folding} and \ref{subsec:nrootn_S2}.
 
\subsection{An $O(n\sqrt{n})$-time Transformation}
\label{subsec:DL-Transformability}

We start from {\sc DiagonalToLine} (i.e., no requirement to preserve connectivity). Our strategy is as follows. We divide the diagonal into several segments, as in Figure~\ref{fig:S1_nrootn_partition} (a). Then in each segment, we perform a trivial (inefficient, but enough for our purposes) line formation by moving each node independently to the leftmost column in that segment (Figure~\ref{fig:S1_nrootn_partition} (b)). Therefore, all segments are transformed into lines (Figure~\ref{fig:S1_nrootn_partition} (c)). Then, we transfer each line segment all the way down to the bottommost row of the diagonal $S_{D}$ (Figure~\ref{fig:S1_nrootn_partition} (d)). Finally, we change the orientation of all line segments to form the target spanning line (Figure~\ref{fig:S1_nrootn_partition} (e)).
 
\begin{figure}
	\centering
	\subcaptionbox{}
	{\includegraphics[scale=0.55]{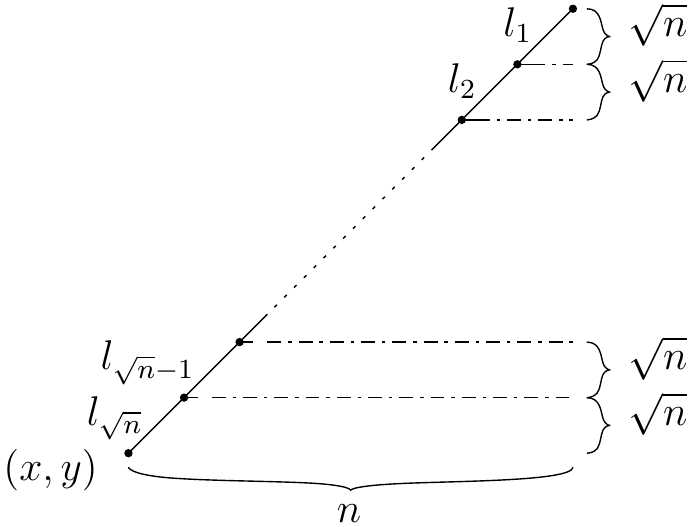}}	 \qquad
	\subcaptionbox{}
	{\includegraphics[scale=0.55]{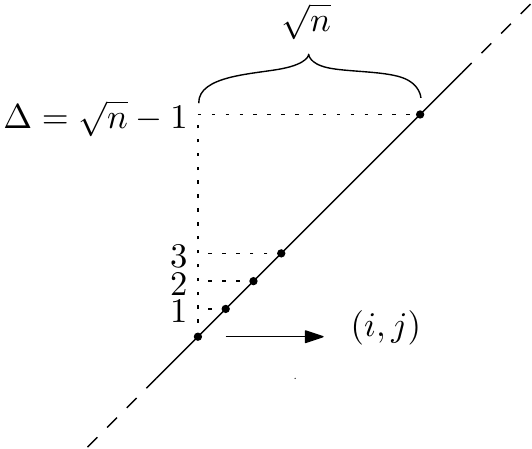}}	\qquad
	\subcaptionbox{}
	{\includegraphics[scale=0.55]{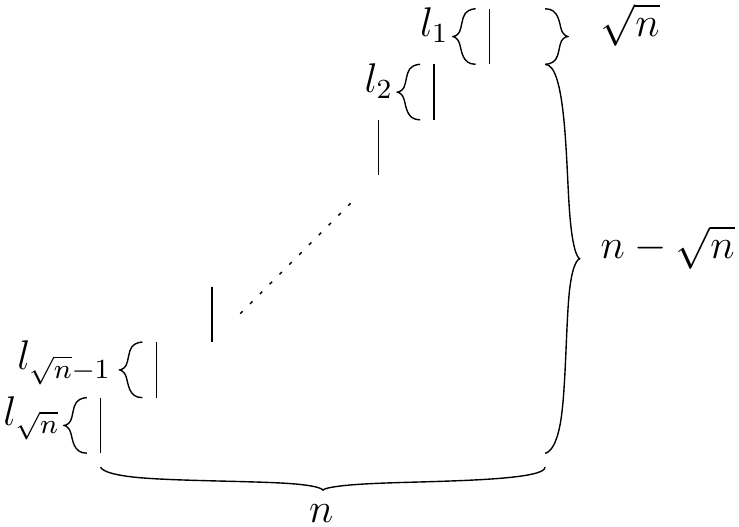}}	\\
	\subcaptionbox{}
	{\includegraphics[scale=0.60]{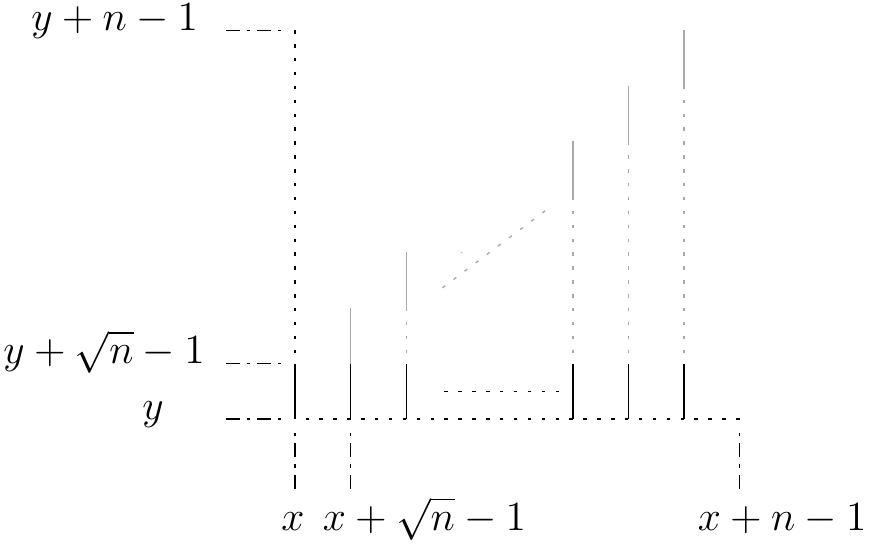}}  \qquad  \quad
	\subcaptionbox{}
	{\includegraphics[scale=0.60]{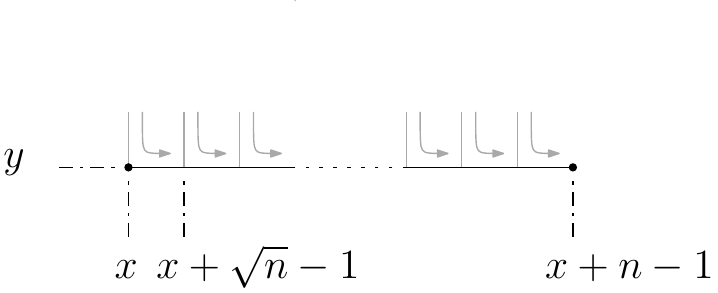}}
	\caption{(a) Dividing the diagonal into $\sqrt{n}$ segments of length $\sqrt{n}$ each (integer $\sqrt{n}$ case). (b) A closer view of a single segment, where $1,2,3, \ldots , \sqrt{n} -1$ are the required distances for the nodes to form a line segment at the leftmost column (of the segment). (c) Each line segment is transformed into a line and transferred towards the bottommost row of the shape, ending up as in (d). (e) All line segments are turned into the bottommost row to form the target spanning line.}
	\label{fig:S1_nrootn_partition}	
\end{figure}

More formally, let $S_{D}$ be a diagonal, occupying  $ (x,y), (x+1,y+1), \ldots, (x+n-1,y+n-1) $, such that $x$ and $y$ are the leftmost column and the bottommost row of $S_{D}$, respectively. $S_{D}$ is divided into $ \ceil{\sqrt{n}} $ segments, $l_{1}, l_{2}, \ldots , l_{\ceil{\sqrt{n}}} $, each of length $ \floor{\sqrt{n}} $, apart possibly from a single smaller one. Figure~\ref{fig:S1_nrootn_partition} (a) illustrates the case of integer $\sqrt{n}$ and in what follows, w.l.o.g., we present this case for simplicity. This strategy (called \emph{DL-Partitioning}) consists of three phases:

\begin{itemize}
	\item \textbf{Phase 1}: Transforms each diagonal segment $l_{1}, l_{2}, \ldots , l_{\sqrt{n}} $ into a line segment. Notice that segment $l_{k}$, $1 \le k \le \sqrt{n}$, contains $ \sqrt{n} $ nodes occupying positions $(x+h_k,y+h_k), (x+h_k+1,y+h_k+1), \ldots, (x+h_k+\sqrt{n}-1,y+h_k+\sqrt{n}-1)$, for $h_k=n-k\sqrt{n}$; see Figure~\ref{fig:S1_nrootn_partition} (b). Each of these nodes moves independently to the leftmost column of $l_{k}$, namely column $x+h_k$, and the new positions of the nodes become $(x+h_k,y+h_k), (x+h_k,y+h_k+1), \ldots, (x+h_k,y+h_k+\sqrt{n}-1)$.
	By the end of Phase 1, $ \sqrt{n} $ vertical line segments have been created (Figure~\ref{fig:S1_nrootn_partition} (c)).
	
	\item\textbf{Phase 2}: Transfers all $ \sqrt{n} $ line segments from Phase 1 down to the bottommost row $y$ of the diagonal $S_{D}$. 
	Observe that line segment $l_{k}$ has to move distance $h_k$ (see Figure~\ref{fig:S1_nrootn_partition} (d)). 
	
	\item\textbf{Phase 3}: Turns  all $ \sqrt{n} $ line segments into the bottommost row $y$ (Figure~\ref{fig:S1_nrootn_partition} (e)). In particular, line $l_k$ will be occupying positions $(x+h_k,y), (x+h_k+1,y), \ldots, (x+h_k+\sqrt{n}-1,y)$.

\end{itemize}

Now, we are ready to analyse the running time of all phases of \emph{DL-Partitioning}.

\begin{theorem}
	Given an initial diagonal of $ n $ nodes, \emph{DL-Partitioning} solves the {\sc DiagonalToLine} problem in $O(n\sqrt{n})$ steps.
\end{theorem}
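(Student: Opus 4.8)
The plan is to bound the cost of each of the three phases separately and sum them. For Phase~1, each segment $l_k$ contains $\sqrt{n}$ nodes, and the $j$-th node (counting from the leftmost column, $j = 0, 1, \ldots, \sqrt{n}-1$) must be moved $j$ columns to the left to reach column $x+h_k$. Moving a single node a distance $j$ costs $j$ steps (a length-1 line move), so the cost per segment is $\sum_{j=0}^{\sqrt{n}-1} j = \Theta(n)$; hence over all $\sqrt{n}$ segments Phase~1 costs $\sqrt{n}\cdot\Theta(n) = \Theta(n\sqrt{n})$. I should note that these individual node moves through empty space are legal line moves of length~1 (Definition~\ref{def:permissible_line_move}), and that within a segment the target column is initially empty below the node being moved, so no collisions occur; the segments are disjoint in both rows and columns so they do not interfere with each other.

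For Phase~2, line segment $l_k$ is a vertical line of $\sqrt{n}$ nodes that must be pushed down a distance $h_k = n - k\sqrt{n}$. By the linear-strength pushing mechanism a whole vertical line moves one cell per step, so pushing $l_k$ down costs exactly $h_k$ steps (the cells below are empty since connectivity has already been abandoned and each segment occupies its own column range). Summing, $\sum_{k=1}^{\sqrt{n}} h_k = \sum_{k=1}^{\sqrt{n}} (n - k\sqrt{n}) = n\sqrt{n} - \sqrt{n}\cdot\frac{\sqrt{n}(\sqrt{n}+1)}{2} = \Theta(n\sqrt{n})$. For Phase~3, each of the $\sqrt{n}$ line segments changes orientation from vertical to horizontal; by Lemma~\ref{lem:Transfer_Line_H_to_V} this costs at most $2\sqrt{n}-2$ steps per segment, so Phase~3 costs $O(\sqrt{n}\cdot\sqrt{n}) = O(n)$ in total, an additive lower-order term. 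Here I would briefly remark that after re-orientation into the bottommost row $y$, the horizontal segments can be laid out consecutively (their column ranges $[x+h_k, x+h_k+\sqrt{n}-1]$ tile an interval), so they indeed form a single spanning line $S_L$; a few extra sliding moves to close any gaps are also $O(n)$.

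Adding the three contributions gives $\Theta(n\sqrt{n}) + \Theta(n\sqrt{n}) + O(n) = O(n\sqrt{n})$, which is the claimed bound. The non-integer $\sqrt{n}$ case only changes constants (one segment is shorter, ceilings/floors absorbed into the asymptotics), so it is handled by the same argument.

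The main obstacle I anticipate is not the arithmetic but the bookkeeping of \emph{legality and non-interference} of the moves: one must check that in Phase~1 the column to which a node slides is clear at the time of the move and that moves in different segments do not collide (which follows because the diagonal gives every node its own row and column, so the segments live in disjoint row-bands and column-bands), and that in Phase~2 the downward push of $l_k$ is not blocked by another segment or by nodes already deposited in row $y$ (which holds because segment $l_k$ sits strictly to the left of the column range of every lower-indexed segment). I expect the author's proof to state these facts somewhat informally, relying on the figure, and to spend most of its words on the two summations above; the one genuinely load-bearing external fact is Lemma~\ref{lem:Transfer_Line_H_to_V} for Phase~3, and the linear-strength property of the model for Phase~2.
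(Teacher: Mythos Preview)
Your proposal is correct and follows essentially the same approach as the paper: the same three-phase decomposition, the same per-phase summations yielding $\Theta(n\sqrt{n})$, $\Theta(n\sqrt{n})$, and $O(n)$ respectively, and the same appeal to Lemma~\ref{lem:Transfer_Line_H_to_V} for Phase~3. If anything, you are more careful than the paper about legality and non-interference of moves (the paper relies on the figures and the description for this), and your Phase~3 accounting is per-segment whereas the paper aggregates over all $n-\sqrt{n}$ nodes not yet on row~$y$; both give $O(n)$.
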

\begin{proof}
	From the above reasoning, the proof follows by analysing all phases of \emph{DL-Partitioning}. In the first phase, the cost of the trivial line formation is the run of all distances for $ \ceil{\sqrt{n}} $ nodes to be gathered at the leftmost column of a single segment $l_{k}$ for all $k$, where $1 \le k \le \sqrt{n}$. Observe that in Figure~\ref{fig:S1_nrootn_partition} (b), the $ \ceil{\sqrt{n}} $ nodes of $l_{k}$ have to move distances of $\Delta= 0, 1 ,2.3 , \ldots , (\sqrt{n}-1)$. Therefore, the total run of all distances in $l_{k}$ (\textit{except the bottommost node which stays still in place}),  is:
	\begin{align*}
	t_{1} &= 1 + 2 + \ldots + (\sqrt{n}-1)= \sum_{i=1}^{\sqrt{n}-1} i \\
	&=  \frac{\sqrt{n}(\sqrt{n}-1)}{2} = \frac{n -\sqrt{n}}{2}\\
	&= O (n),    \numberthis  \label{eqn101}
	\end{align*}
	From \eqref{eqn101}, the total run $T_{1}$ for all $ \sqrt{n} $ segments is given by:
	\begin{align*}
	T_{1} &=   t_{1} \cdot  \ceil{\sqrt{n}} \\
	&=  \frac{n -\sqrt{n}}{2} \cdot  \ceil{\sqrt{n}} =  \frac{n\sqrt{n} -n}{2} = \frac{n(\sqrt{n} -1)}{2}\\
	&= O (n\sqrt{n}).  \numberthis \label{eqn102}
	\end{align*} 	
	Next, in phase 2, all $ \sqrt{n} $ line segments transfer down into the bottommost of the diagonal except the one already there. As mentioned above, any line segment $l_{k}$ has to transfer a distance of $n-k\sqrt{n}$ to reach the $ y $ bottommost row in a total $T_{2}$ as follows:
	\begin{align*}
	T_{2} &=\sum_{k=1}^{\sqrt{n}-1} (n - k \sqrt{n}) = (n\sqrt{n} - n) - \sum_{k=1}^{\sqrt{n}-1} k \sqrt{n} \\
	&=(n\sqrt{n} - n) - \sqrt{n} \sum_{k=1}^{\sqrt{n}-1} k =(n\sqrt{n} - n) - \sqrt{n} \bigg(\frac{\sqrt{n}(\sqrt{n}-1)}{2}\bigg) \\
	&=(n\sqrt{n} - n) - n \bigg(\frac{\sqrt{n}-1}{2}\bigg) = \frac{n(\sqrt{n} -1)}{2}  \\
	&= O (n \sqrt{n}).     \numberthis \label{eqn103}
	\end{align*}
	The last phase is basically to turn (\textit{re-orientate})  every line segments of $ \sqrt{n} $ nodes into the $y$ bottommost row of length $n$. Now, each line segment $l_{k}$ contributes a single node at the bottommost row $y$, therefore, we have $ \sqrt{n} $ nodes are sitting on row $y$, and the other $n -  \sqrt{n} $ nodes are waiting to be pushed (\textit{turned}) into the bottommost row of length $ \Delta = n$, in order to form the goal spanning line. By Lemma \ref{lem:Linear_rearrangement_Of_Small_Lines} and  \ref{lem:Transfer_Line_H_to_V}, the transformation starts \textit{re-orientating} each $l_{k}$ towards the bottommost row $y$, until it is being completely filled in by $n$  nodes of all segments. Recall that turning each node of the $n -  \sqrt{n} $ into the bottommost row $y$ costs two line steps, therefore, the total $T_{3}$ for all $n - \sqrt{n}$ is:
	\begin{align*}
	T_{3} &= 2 \cdot  (n - \sqrt{n}) = 2n - 2\sqrt{n} = 2(n - \sqrt{n})\\
	&= O (n). \numberthis \label{eqn104}
	\end{align*}      
	Altogether, the sum of \eqref{eqn102}, \eqref{eqn103} and \eqref{eqn104}  gives the total cost $T$, in steps, for all phases,
	\begin{align*}
	T &= T_{1} + T_{2} +T_{3}\\
	&= \frac{n(\sqrt{n} -1)}{2} + \frac{n(\sqrt{n} -1)}{2} + 2(n - \sqrt{n})\\
	& = n(\sqrt{n} -1) + 2n -2\sqrt{n} = n\sqrt{n} + n -2\sqrt{n} = n (\sqrt{n} + 1) - 2\sqrt{n}\\
	& = O (n \sqrt{n}). 
	\end{align*}
	
\end{proof}

\subsection{Preserving Connectivity through Folding}
\label{subsec:DLC-Folding}

In this section, our purpose is to transom the diagonal $S_{D}$ into a line $S_{L}$ by exploiting the line mechanism, with preserving connectivity of the shape throughout transformations. This strategy, called \emph{DLC-Folding}, is as follows; we partition $S_{D}$ into several segments of the same lengths, as we did previously in \emph{DL-Partitioning} (see Figure \ref{fig:S1_nrootn_partition} (a)). Then, in each segment, we perform three operations: \emph{turn}, \emph{push} and \emph{turn}. From the topmost segment of  $S_{D}$, form a \emph{line} segment  by a trivial brute-force line formation, that is, move each node to the bottommost row in that segment. After that, push the \emph{line} segment towards the leftmost column of $S_{D}$ by a distance of  $ \Delta = \sqrt{n} $. Lastly, we fold this \emph{line} segment diagonally to align above the next following diagonal segment, therefore two parallel diagonal segments are formed, as illustrated in Figure \ref{fig:Zoom_in_Folding}. The strategy keeps folding segments above each other in this way, until finishing at the bottommost segment of $S_{D}$. In the end, $S_{D}$  transforms into a \textit{nice shape}, which is trivially converted into a line. \cref{fig:Exmpel_folding,fig:Exmpel_folding1,fig:Exmpel_folding2,fig:Exmpel_folding3,fig:Exmpel_folding4} demonstrate the above transformations on a diagonal of 25 nodes.

\begin{figure}[h!t]
	\centering
	\includegraphics[scale=0.7]{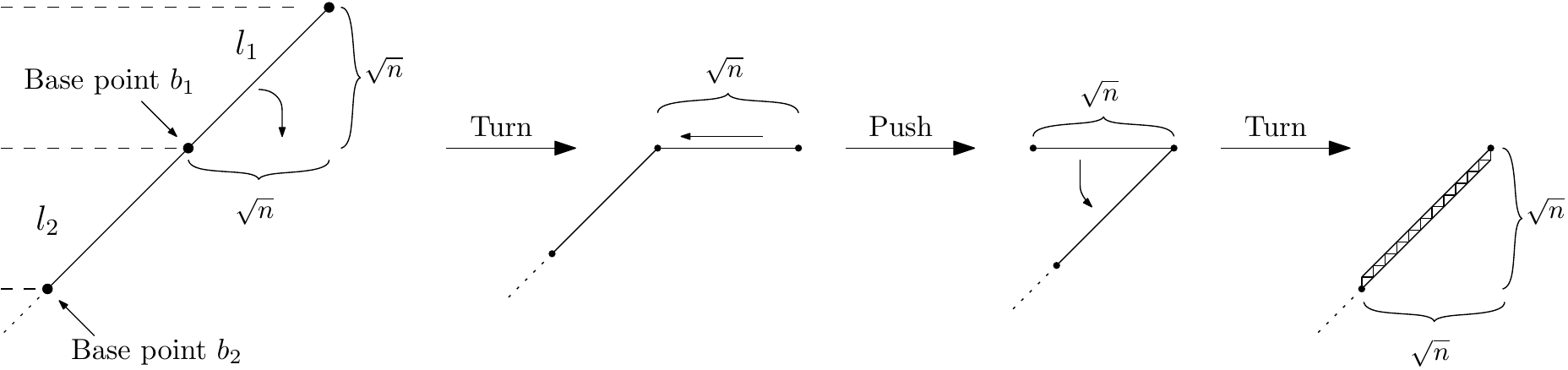}
	\caption{ Three operations (\emph{turn}, \emph{push} and \emph{turn}) for folding the topmost segment of a diagonal line.}
	\label{fig:Zoom_in_Folding}	
\end{figure} 

\begin{figure}[h!t]
	\centering
	\includegraphics[scale=0.6]{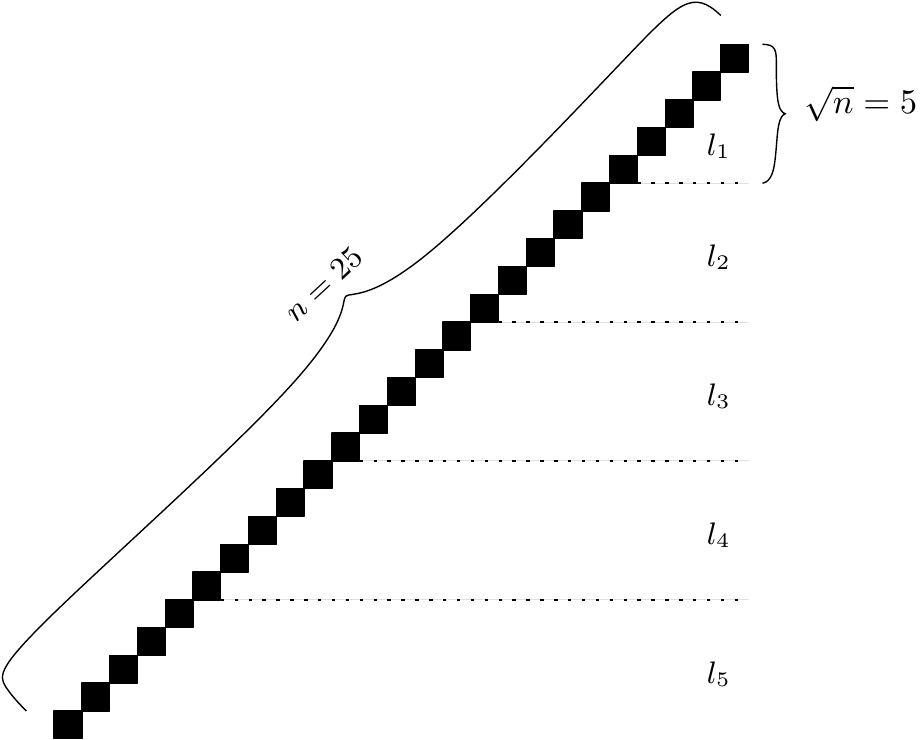}
	\caption{A diagonal line of $ 25 $ nodes.}
	\label{fig:Exmpel_folding}	
\end{figure}   	
\begin{figure}[h!t]
	\centering
	\includegraphics[scale=0.6]{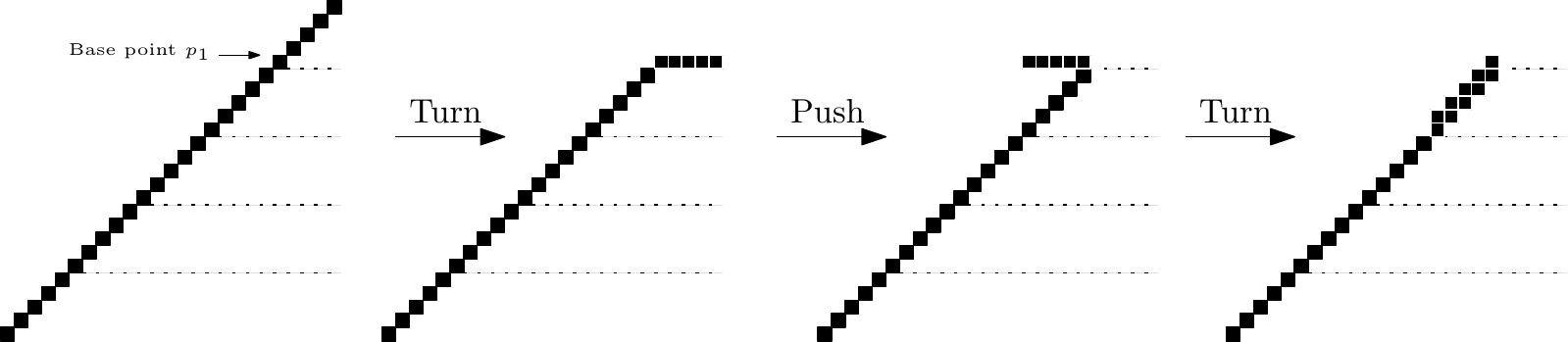}
	\caption{\emph{turn}, \emph{push} and \emph{turn} of the first phase.}
	\label{fig:Exmpel_folding1}	
\end{figure}   	
\begin{figure}[h!t]
	\centering
	\includegraphics[scale=0.6]{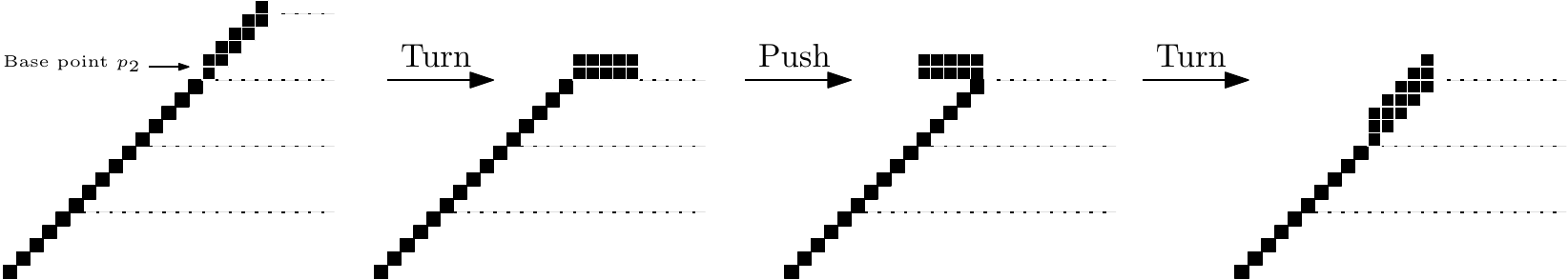}
	\caption{\emph{turn}, \emph{push} and \emph{turn} of the second phase.}
	\label{fig:Exmpel_folding2}	
\end{figure}   	
\begin{figure}[h!t]
	\centering
	\includegraphics[scale=0.6]{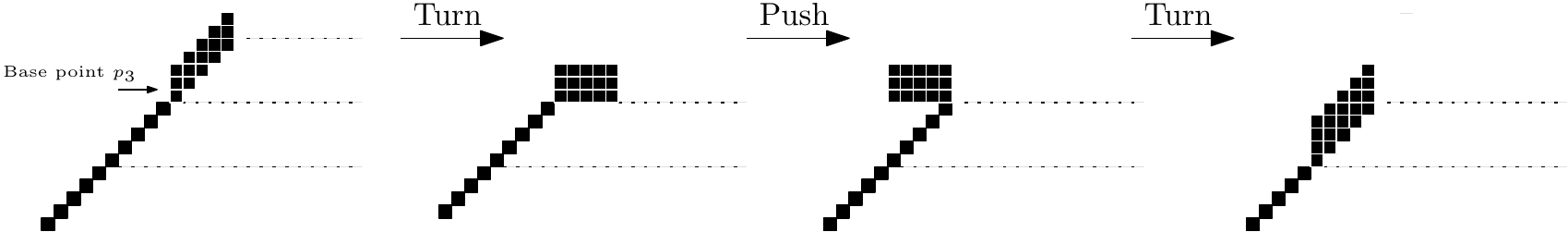}
	\caption{\emph{turn}, \emph{push} and \emph{turn} of the third phase.}
	\label{fig:Exmpel_folding3}	
\end{figure} 
\begin{figure}[h!t]
	\centering
	\includegraphics[scale=0.6]{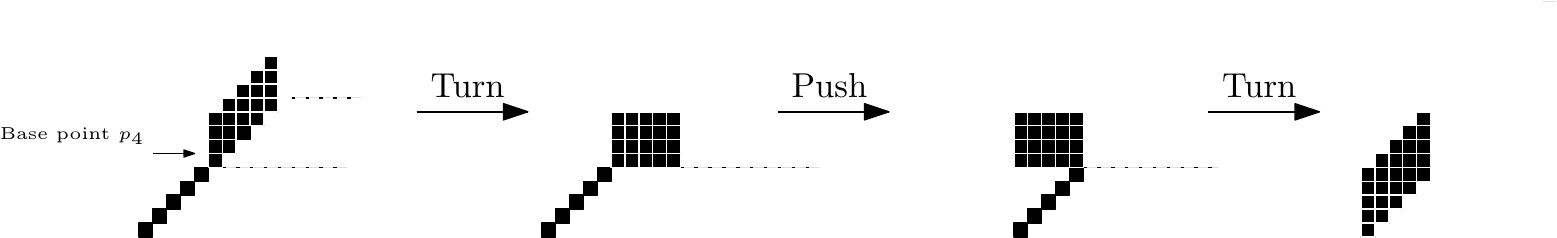}
	\caption{\emph{turn}, \emph{push} and \emph{turn} of the forth phase. Notice the resulting figure in the far right represents a \textit{nice shape}.}
	\label{fig:Exmpel_folding4}	
\end{figure}   

\subsubsection{Formal Description}
\label{subsubsec:DLC-Folding}   

Let $S_{D}$ be a diagonal of $n$ nodes occupying  $ (x,y), (x+1,y+1), \ldots, (x+n-1,y+n-1) $, such that $x$ and $y$ are the leftmost column and the bottommost row of $S_{D}$, respectively. $S_{D}$ is then divided into $ \sqrt{n} $ segments, $l_{1}, l_{2}, \ldots , l_{\sqrt{n}} $, each of which has a length of $ \sqrt{n} $, where $l_{1}$ and $l_{\sqrt{n}}$ are the topmost and bottommost segments of $S_{D}$, respectively. Observe that segment $l_{k}$, $1 \le k \le \sqrt{n}$, consists of $\sqrt{n} $ nodes occupying $(i,j), (i+1,j+1), \ldots, (i+\sqrt{n}-1,j+\sqrt{n}-1)$, where $ i = x+h_k$ and $  j= y+h_k$, for $h_k=n-k\sqrt{n}$. Here, $l_{k}$ has a base point $b_{k} = (i,j)$, which is the bottommost node of $l_{k}$. Moreover, \emph{DLC-Folding} completes its course in $k$ phases, in each phase $k$, we apply three operations (\emph{turn}, \emph{push} and \emph{turn}) to fold the corresponding segment(s) around the base point $b_{k}$. The first segment $l_{1}$ folds around $b_{1}$ in the first phase as follows \emph{(due to symmetry, it is sufficient to demonstrate one orientation)}: 
\begin{alphaenumerate} 
	\item  \textit{turn}. Moves all $\sqrt{n}$ nodes into the bottommost row of $l_{1}$ (brute-force line formation). Notice that the $l_{1}$ nodes change their positions from $(i,j), (i+1,j+1), \ldots, (i+\sqrt{n}-1,j+\sqrt{n}-1)$ into $(i,j), (i+1,j), \ldots, (i + \sqrt{n}-1 ,j)$. By the fist operation, a horizontal line segment of length $\sqrt{n}$ have been formed, and the base point $b_{1}$ keeps place at $(i,j) $.  \label{itm:turn}
	
	\item \emph{push}. Pushes the $l_{1}$ line segment $ \sqrt{n} $ steps towards the leftmost column of the diagonal $S_{D}$, i.e., the $y$ column. All $\sqrt{n}$ nodes of $l_{1}$ transfer altogether into $(i -\sqrt{n},j ), (i +1-\sqrt{n},j), \ldots, (i + \sqrt{n}-1 - \sqrt{n}, j)$.  \label{itm:push}
	
	\item \textit{turn}. Converts the line segment $l_{1}$ into diagonal again by moving its $ \sqrt{n} $ nodes down to align above the $ \sqrt{n} $ nodes of the next following diagonal segment $l_{2}$. This takes place by transferring them into positions  $(i -\sqrt{n},j -\sqrt{n}+1), (i+1 -\sqrt{n} ,j-\sqrt{n} +2), \ldots, (i + \sqrt{n}-1 - \sqrt{n}, j)$, \emph{(except the bottommost node-base point $b_{1} $ which stays still in place at ($i -\sqrt{n},j $))}. By the end of this phase, two parallel diagonal segments $l_{1}$ and $l_{2}$ have been created, as in Figure \ref{fig:Zoom_in_Folding}. \label{itm:turn_3} 
\end{alphaenumerate}

The two parallel segments  $l_{1}$ and $l_{2}$ are consisting of $2\sqrt{n}$ nodes and comprising of $2\sqrt{n}$ vertical lines. Consequently, a new \emph{connected} shape has been formed and is defined below. 

\begin{definition}  \label{def:Ladle_shape}
	A $Ladle$ is a connected shape of  $n$ nodes consisting of two parts, $\mathcal{D}$ and $ \mathcal{S}$. For a given phase $k$, where $2 \le k \le \ceil{\sqrt{n}}$, we have  $Ladle_{k} = \mathcal{D}_{k} + \mathcal{S}_{k}$, where $ \mathcal{D}_{k} $ and $ \mathcal{S}_{k} $ are connected via a base point $b_{k} = (i, j)$, such that:\\
	
	\item - $\mathcal{D}_{k} $, is a diagonal  line containing  $ n - k \sqrt{n} +1$ nodes occupying $ (x,y), (x+1,y+1), \ldots, (i, j) $, such that $x$ and $y$ are the leftmost column and the bottommost row of $Ladle_{k}$, respectively, where $ \sqrt{n} < i =j < n -\sqrt{n} +1 $.  $\mathcal{D}_{k}$ is connected  to $ \mathcal{S}_{k}$ via its topmost node at $(i,j)$.\\
	
	\item - $ \mathcal{S}_{k}$, is parallelogram consists of $k$ parallel diagonal segments of size $k\sqrt{n}$ nodes formed $\sqrt{n}$ lines. $ \mathcal{S}_{k} $ is connected to $\mathcal{D}_{k}$ via its bottommost node at $(i,j)$, as depicted in Figure~\ref{fig:Ladle_shape}. 
	\begin{figure}[h!t]
		\centering
		\includegraphics[scale=1.2]{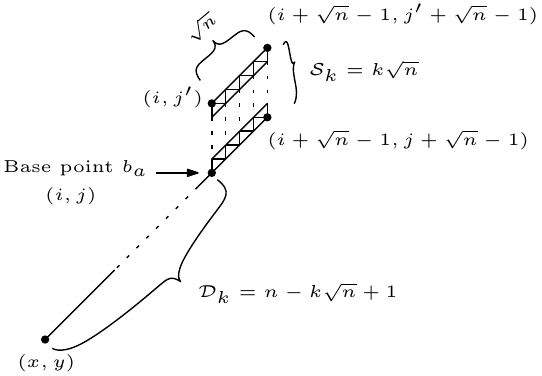}
		\caption{A $Ladle$ shape in phase $k$, where $j\prime = j+k-1$.}
		\label{fig:Ladle_shape}	
	\end{figure}   		
\end{definition}

 Throughout this section, we prove that at any phase $k$ of \emph{DLC-Folding}, there are $k$ parallel \emph{(diagonal)} segments containing $ k\sqrt{n} $ nodes and forming  $\sqrt{n}$ vertical lines. We now prove that the three operations (\emph{turn}, \emph{push} and \emph{turn}) transforms $S_{D}$ into a $Ladle$ by the end of the first phase of \emph{DLC-Folding}.
 
 \begin{lemma} \label{lem:CreateLadel}
 	Let $S_{D}$ be a diagonal of order $ n $ partitioned into $ \sqrt{n}  $ segments $l_{1},l_{2},...,l_{\sqrt{n}} $. \emph{DLC-Folding} converts $S_{D}$ into a $Ladle$ by the end of the first phase. 
 \end{lemma}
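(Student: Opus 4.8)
The plan is to carry out the three operations \emph{turn}, \emph{push}, \emph{turn} of the first phase one at a time, keeping exact track of the cell occupied by every node, and then to verify two things: that the whole course is realised by permissible line moves (Definition~\ref{def:permissible_line_move}) that never disconnect the shape, and that the resulting configuration fits Definition~\ref{def:Ladle_shape} with $k=2$. Throughout I keep the notation of the formal description: $l_1$ starts on the diagonal cells $(i,j),(i+1,j+1),\ldots,(i+\sqrt{n}-1,j+\sqrt{n}-1)$ with $i=x+n-\sqrt{n}$ and $j=y+n-\sqrt{n}$; the segment $l_2$ occupies $(i-\sqrt{n},j-\sqrt{n}),\ldots,(i-1,j-1)$, and we write $b_2=(i-\sqrt{n},j-\sqrt{n})$ for its base point; and $l_2,l_3,\ldots,l_{\sqrt{n}}$ together form the diagonal line $(x,y),\ldots,(i-1,j-1)$.

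I would first realise the three operations. The first \emph{turn} brings the $\sqrt{n}$ nodes of $l_1$ straight down onto row $j$, each node travelling inside its own column by single-node slides, i.e.\ length-$1$ line moves, hence permissible; this ends with $l_1$ at $(i,j),(i+1,j),\ldots,(i+\sqrt{n}-1,j)$ and with $b_1=(i,j)$ unmoved, and no collision occurs since the columns of $l_1$ are disjoint from those of $l_2,\ldots,l_{\sqrt{n}}$. The \emph{push} then pushes this horizontal line leftwards $\sqrt{n}$ times; each push is permissible because row $j$ originally contains exactly one node of $S_D$, namely $b_1$, so every cell of row $j$ to the left of column $i$ is empty; afterwards $l_1$ occupies $(i-\sqrt{n},j),(i+1-\sqrt{n},j),\ldots,(i-1,j)$. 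The second \emph{turn} slides the nodes of $l_1$ down once more, the node in column $i-\sqrt{n}+t$ dropping to row $j-\sqrt{n}+1+t$ for $t=0,\ldots,\sqrt{n}-1$; since $l_2$ in column $i-\sqrt{n}+t$ sits at row $j-\sqrt{n}+t$, this places $l_1$ exactly one row above $l_2$, column for column, while the node in the rightmost column $i-1$ does not move (it is already at row $j$).

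Next I would read off the final shape. What remains is the diagonal line $(x,y),\ldots,b_2$ on $n-2\sqrt{n}+1$ nodes (namely $l_3\cup\cdots\cup l_{\sqrt{n}}$ together with $b_2$), together with the two parallel diagonal segments $l_1$ and $l_2$, which comprise $2\sqrt{n}$ nodes and, taken column by column, form $\sqrt{n}$ vertical lines of length $2$, the common bottommost node of $l_1\cup l_2$ being $b_2$. Taking $\mathcal{D}_2$ to be the first part and $\mathcal{S}_2$ the second, the two parts meet exactly in $b_2$ (so the node count is $(n-2\sqrt{n}+1)+2\sqrt{n}-1=n$), $\mathcal{D}_2$ is joined to $\mathcal{S}_2$ through its topmost node $b_2$, $\mathcal{S}_2$ is joined to $\mathcal{D}_2$ through its bottommost node $b_2$, and $\mathcal{S}_2$ is a parallelogram of $k=2$ parallel diagonal segments on $2\sqrt{n}$ nodes forming $\sqrt{n}$ lines; this is exactly Definition~\ref{def:Ladle_shape} for $k=2$, and it matches Figure~\ref{fig:Zoom_in_Folding}.

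The main obstacle is connectivity during the two brute-force \emph{turns}. For \emph{push} connectivity is immediate: after $t$ pushes with $1\le t\le\sqrt{n}$ the moved line still occupies the cell $(i-1,j)$, which is vertically adjacent to $(i-1,j-1)$ of the static diagonal, and for $t=0$ the cell $b_1=(i,j)$ is diagonally adjacent to it. But a diagonal segment is not a line in our model, so in a \emph{turn} its nodes must move one at a time, and the obvious bottom-up or top-down order detaches the not-yet-moved part once the segment is longer than a constant. I would resolve this by ordering the single-node slides so that the height profile of the segment (the heights of its $\sqrt{n}$ cells, read from left to right) stays $1$-Lipschitz, i.e.\ consecutive heights always differ by at most $1$, at every intermediate configuration: both endpoints of each \emph{turn} (a diagonal segment and a line segment) have $1$-Lipschitz profiles, and I would show by induction on $\sqrt{n}$ that from any $1$-Lipschitz profile other than the target one can always lower some cell by one toward its target value while keeping the profile $1$-Lipschitz and non-negative. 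Running the same argument on the descent of $l_1$ onto $l_2$ settles the second \emph{turn}, where moreover each node becomes vertically adjacent to the $l_2$ node beneath it the moment it reaches its target row, so the fold stays connected as it completes. Combined with the routine check that $\mathcal{D}_2$ stays attached to the moving part throughout (always through the cell $(i-1,j-1)$, which remains adjacent to some node of $l_1$), this yields the claim.
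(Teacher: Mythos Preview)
Your proposal is correct and follows the same high-level approach as the paper: carry out the three operations \emph{turn}, \emph{push}, \emph{turn} on the topmost segment $l_1$, track the resulting coordinates, and verify that the outcome matches Definition~\ref{def:Ladle_shape} (with $k=2$). The coordinate bookkeeping and the final identification of $\mathcal{D}_2$ and $\mathcal{S}_2$ are essentially identical to what the paper does, only written out more explicitly.

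Where you genuinely go beyond the paper is the connectivity analysis. The paper's own proof simply asserts that the resulting shape is connected and matches the $Ladle$ definition; it never argues that connectivity is preserved \emph{during} the two brute-force \emph{turns}, even though this is the \textsc{DiagonalToLineConnected} setting. Your $1$-Lipschitz height-profile invariant, together with the anchoring observation that the cell $(i-1,j-1)$ always has a neighbour in the moving part, is a clean way to fill that gap, and it is not present in the paper at all. So your argument is strictly more complete: the paper buys brevity by leaving the intermediate-connectivity claim implicit, while your version actually justifies it.
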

\begin{proof}
	Consider a diagonal $S_{D}$ of $n$ nodes as defined previously, which is partitioned into $\sqrt{n}$ segments of length $\sqrt{n}$ each. Now, perform the three operations (\emph{turn}, \emph{push} and \emph{turn}) described above on the topmost segment of $S_{D}$ \emph{(it is sufficient due to symmetry)}, we will obtain a connected shape consists of two parts, a diagonal line whose nodes occupy $(x,y), (x+1,y+2), \ldots, (x+ n - \sqrt{n}-1, y+n-\sqrt{n}-1)$ and two parallel diagonal segments of $2\sqrt{n}$ nodes. Both are connected via the base point $(x+ n - \sqrt{n}-1, y+n-\sqrt{n}-1)$, which is the topmost node of the diagonal part and the bottommost of the two parallel diagonal segments. As a result, a one can easily find out that the new shape constructed by the end of the first phase meets all conditions mentioned in Definition \ref{def:Ladle_shape}, therefore, it is a $Ladle$.
\end{proof}

The following lemma shows that the three operations of \emph{DLC-Folding} hold  in any phase $k$, where $2 \le k < \sqrt{n}$, such that in phase $k+1$, the size of  $\mathcal{S}_{k}$ increases by $ \sqrt{n} $, and conversely the $\mathcal{D}_{k}$ length decreases by $ \sqrt{n} $.

\begin{lemma}	\label{lem:Folding_increses_SectionSize} 
	Consider a $Ladle$ of $n$ nodes in phase $k$, where $1 < k \le \sqrt{n} $. Then, in phase $k+1$, \emph{DLC-Folding} increases the size of $ \mathcal{S}_{k}$ by $ \sqrt{n}$  and  decreases the length of $ \mathcal{D}_{k} $  by $\sqrt{n} $.     		   	 	
\end{lemma}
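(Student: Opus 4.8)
The plan is to treat this lemma as the inductive step of the invariant announced just before it — that after phase $k$ the shape is a $Ladle$ in which $\mathcal{S}_k$ consists of $k$ parallel diagonal segments ($k\sqrt{n}$ nodes forming $\sqrt{n}$ vertical lines) and $\mathcal{D}_k$ is a diagonal of $n-k\sqrt{n}+1$ nodes — the base case $k=2$ being already supplied by Lemma~\ref{lem:CreateLadel}. Granting this induction hypothesis, the two size claims of the lemma are immediate from Definition~\ref{def:Ladle_shape}: $|\mathcal{S}_{k+1}|-|\mathcal{S}_k| = (k+1)\sqrt{n}-k\sqrt{n} = \sqrt{n}$ and $|\mathcal{D}_{k+1}|-|\mathcal{D}_k| = (n-(k+1)\sqrt{n}+1)-(n-k\sqrt{n}+1) = -\sqrt{n}$. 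So the substance of the argument is to verify that one further application of \emph{turn}, \emph{push}, \emph{turn} carries $Ladle_k$ to $Ladle_{k+1}$.

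First I would isolate the segment that phase $k+1$ folds. Since $k<\sqrt{n}$ (so that phase $k+1$ is still within the course of \emph{DLC-Folding}), we have $|\mathcal{D}_k| = n-k\sqrt{n}+1 \ge \sqrt{n}+1$, hence the diagonal remnant $\mathcal{D}_k$ has a well-defined topmost segment $\sigma$ of exactly $\sqrt{n}$ nodes sitting just below the attachment point $b_k$, and deleting $\sigma$ leaves a shorter diagonal of $n-(k+1)\sqrt{n}+1$ nodes whose new topmost node I call $b_{k+1}$. Then I would trace the three operations on $\sigma$ while carrying the explicit coordinates of Definition~\ref{def:Ladle_shape}: (a) \emph{turn} collapses $\sigma$ into a horizontal line of length $\sqrt{n}$ in $\sigma$'s bottom row via a brute-force line formation — each of whose single-node moves is a permissible line move in the sense of Definition~\ref{def:permissible_line_move}, because every cell it traverses lies strictly between $\mathcal{D}_k$ and $\mathcal{S}_k$ and is therefore empty — with the endpoint adjacent to $b_{k+1}$ left in place, so connectivity to the rest of $\mathcal{D}_k$ is never lost; (b) \emph{push} translates this line by exactly $\sqrt{n}$ cells, which is precisely the offset that places it in the same $\sqrt{n}$ columns as the parallelogram $\mathcal{S}_k$; (c) \emph{turn} refolds it into a diagonal segment one row above the topmost segment of $\mathcal{S}_k$ (by Lemma~\ref{lem:Transfer_Line_H_to_V} at cost two steps per node), the bottommost node of which becomes $b_{k+1}$, re-establishing the link with $\mathcal{D}_{k+1}$. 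Since the new segment occupies exactly the $\sqrt{n}$ columns of $\mathcal{S}_k$ and sits directly on top of it, each of the $\sqrt{n}$ vertical lines of $\mathcal{S}_k$ is lengthened by one node, so $\mathcal{S}_{k+1}$ is again a parallelogram of $k+1$ parallel diagonal segments ($(k+1)\sqrt{n}$ nodes, $\sqrt{n}$ vertical lines); together with the diagonal $\mathcal{D}_{k+1}$ attached at $b_{k+1}$ this is exactly a $Ladle$, which closes the induction and yields the claimed $+\sqrt{n}$ and $-\sqrt{n}$ changes.

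I expect the main obstacle to be the positional bookkeeping in steps (b)–(c): one must pin down from the coordinates of Definition~\ref{def:Ladle_shape} that the push distance is \emph{exactly} $\sqrt{n}$ — no more, no less — so that the refolded segment lands flush against $\mathcal{S}_k$ and the ``$\sqrt{n}$ vertical lines'' property is preserved, and simultaneously that none of the intermediate single-node moves of the brute-force turns ever targets an occupied cell and that the shape stays connected at \emph{every} step, not merely at phase boundaries. These checks are routine but error-prone; the safest route is to carry the full coordinate description of $Ladle_k$ through the three operations, mirroring the coordinate-level treatment already used for the first phase, rather than to argue abstractly.
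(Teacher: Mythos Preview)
Your inductive framing is fine, but the step itself misidentifies what \emph{DLC-Folding} actually moves in a general phase. From phase~$2$ onward the three operations \emph{turn}, \emph{push}, \emph{turn} are applied to the \emph{entire parallelogram} $\mathcal{S}_k$ (all $k\sqrt{n}$ nodes), not to a single segment $\sigma$ carved out of $\mathcal{D}_k$. Concretely: the first \emph{turn} slides the $\sqrt{n}$ vertical lines of $\mathcal{S}_k$ down into a $k\times\sqrt{n}$ rectangle whose bottom row sits on row $b_k.y$; \emph{push} translates that rectangle $\sqrt{n}$ cells toward the leftmost column; the second \emph{turn} drops the $\sqrt{n}$ columns so that each lands directly above a node of the next diagonal segment $l_{k+1}\subseteq\mathcal{D}_k$. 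The new $\mathcal{S}_{k+1}$ is then the moved $\mathcal{S}_k$ together with $l_{k+1}$, giving $k+1$ parallel diagonals attached at the new base point $b_{k+1}=(b_k.x-\sqrt{n},\,b_k.y-\sqrt{n})$. This is what the figures for phases $2$--$4$ depict and what the $k\sqrt{n}$ term in the push cost of the running-time analysis presupposes.

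Your version cannot be repaired as written, for two reasons. First, the geometry is inverted: by Definition~\ref{def:Ladle_shape} the base point $b_k$ is the \emph{top} of $\mathcal{D}_k$ and the \emph{bottom} of $\mathcal{S}_k$, so your $\sigma$ lies \emph{below and to the left} of $\mathcal{S}_k$; collapsing $\sigma$ to its bottom row and pushing toward the leftmost column carries it further down-left, not ``into the same $\sqrt{n}$ columns as $\mathcal{S}_k$'', and it can never end up ``one row above the topmost segment of $\mathcal{S}_k$'' as claimed in~(c). Second, connectivity fails: the only link between $\mathcal{S}_k$ and the rest of the shape runs through $b_k$, and any nontrivial movement of $\sigma$ (whether or not it contains $b_k$) severs that link, so the {\sc DiagonalToLineConnected} guarantee is lost already during step~(a). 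The correct argument tracks the whole $\mathcal{S}_k$ through the three operations and observes that it lands flush on top of $l_{k+1}$, which is thereby absorbed into $\mathcal{S}_{k+1}$ while $\mathcal{D}_{k+1}$ shrinks by $\sqrt{n}$.
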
 
\begin{proof}
	The size of the $Ladle  = |n|$ must be the same each phase and all time over transformations. In phase $k$, a $Ladle_{k} $ consists of two parts, $ \mathcal{D}_{k} = |n - k \sqrt{n} +1|$ and $ \mathcal{S}_{k} = |k\sqrt{n}| $, where both are connected via a common node $(i,j)$ (see Definition \ref{def:Ladle_shape}). Now, perform the three operations (\emph{turn}, \emph{push} and \emph{turn}) on the $ \mathcal{S}_{k}$ part that contains $k$ segments of length $\sqrt{n}$  aligned diagonally on top of each other. Let assume that the $k$ segments form $\sqrt{n}$ vertical lines since the same argument applies symmetrically to a different orientation. First, we move all vertical $\sqrt{n}$ lines  downwards to the bottommost row $i$ of $ \mathcal{S}_{k}$, which shall form  $k$ horizontal lines by completely filling in the $k$ bottom rows of $ \mathcal{S}_{k} $. Therefore, those horizontal lines create a rectangle, as depicted in Figure \ref{fig:Ladel_Process} (a). Then, the second operation pushes the $k$ vertical lines of the rectangle $\sqrt{n} $ steps horizontally towards the $x$ leftmost column of the $Ladle_{k}$, as in Figure \ref{fig:Ladel_Process} (b). Lastly, the strategy completes folding by translating the vertical $\sqrt{n} $ lines downwards, until each of them stays above a node of the next following segment \emph{(notice that every vertical line is moved except the rightmost one)}, see Figure \ref{fig:Ladel_Process} (c).  By the end of phase $k+1$, a new  $Ladle $ has been created, which is consisting of $ \mathcal{D}_{k+1} = |n - (k-1) \sqrt{n} +1| $ and  $ \mathcal{S}_{k+1} = |(k+1)\sqrt{n}|$  connected via the common $b_{k+1}$ base point at $(i- \sqrt{n}, j -\sqrt{n})$. Hence, we conclude that in phase $k+1$, the size of $ \mathcal{S}_{k+1} $ increased by $\sqrt{n}$ nodes, while the length of  $ \mathcal{D}_{k+1} $ decreased by $\sqrt{n}$, and this holds trivially and inductively for any phase $k$, where $1 < k \le \sqrt{n} $. 
	\begin{figure}
		\centering
		\subcaptionbox{}
		{\includegraphics[scale=1.2]{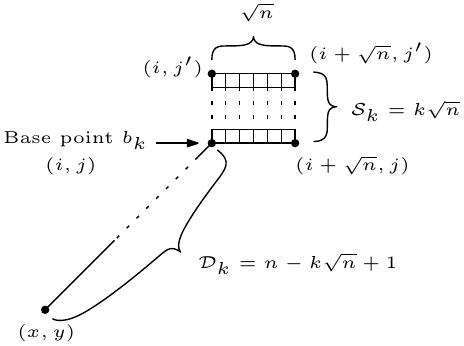}}	 \qquad
		\subcaptionbox{}
		{\includegraphics[scale=1.2]{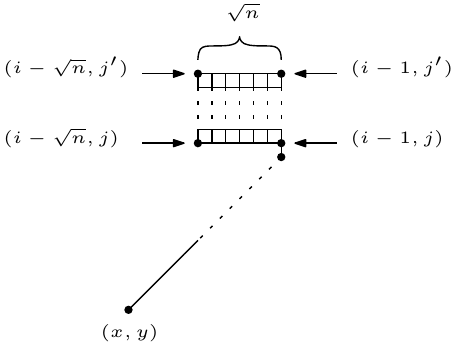}}	\qquad
		\subcaptionbox{}
		{\includegraphics[scale=1.2]{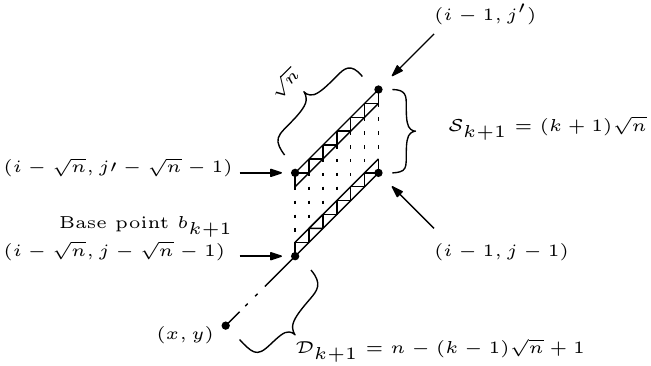}}			
		\caption{Folding a $Ladle_{k}$ over phase $k$, , where $j\prime = j+k-1$, see Lemma \ref{lem:Folding_increses_SectionSize} for further explanation. }
		\label{fig:Ladel_Process}	
	\end{figure}
\end{proof}	

Now, we prove that \emph{DLC-Folding} transforms $S_{D}$ into a \textit{nice} shape in $\sqrt{n}$ phases. 

\begin{lemma} \label{lem:Fold_Digonal_to_NICE}
		Given a diagonal $S_{D}$ of order $ n $ partitioned into $ \sqrt{n} $ segments,  \emph{DLC-Folding} converts $S_{D}$ into a \textit{nice}  shape in $ \sqrt{n}  $ phases. 
\end{lemma}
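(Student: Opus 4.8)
The plan is to prove Lemma~\ref{lem:Fold_Digonal_to_NICE} by induction on the phase number, building on the two preceding lemmas, and to close with a direct verification that the shape produced by the last fold is \emph{nice}. Concretely, I would maintain the invariant that at the end of phase $k$, for $1\le k\le\sqrt n$, the current shape is a $Ladle$ whose diagonal part $\mathcal{D}_k$ has length $n-k\sqrt n+1$ and whose parallelogram part $\mathcal{S}_k$ consists of $k\sqrt n$ nodes arranged into $\sqrt n$ vertical lines of length $k$, occupying $\sqrt n$ consecutive columns, with the row-window of each column shifted by exactly one cell relative to that of the previous column. The base case $k=1$ is exactly Lemma~\ref{lem:CreateLadel}, and the inductive step $k\to k+1$ is exactly Lemma~\ref{lem:Folding_increses_SectionSize}: the \emph{turn}--\emph{push}--\emph{turn} of phase $k+1$ pushes $\mathcal{S}_k$ by $\sqrt n$ and re-attaches it one segment lower, so $\mathcal{S}$ gains $\sqrt n$ nodes, $\mathcal{D}$ loses $\sqrt n$ nodes, and the result is again a $Ladle$ of the stated form. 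Connectivity is preserved at every step, since each intermediate shape is a $Ladle$ (connected by Definition~\ref{def:Ladle_shape}) and the three operations are realised by permissible line moves in the sense of Definition~\ref{def:permissible_line_move}, none of which detaches a node.

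I would then run the invariant to $k=\sqrt n$. There $\mathcal{D}_{\sqrt n}$ has length $n-n+1=1$, i.e.\ it has collapsed to the single base point, which is also the bottommost node of $\mathcal{S}$; hence the whole shape coincides with $\mathcal{S}_{\sqrt n}$, a sheared parallelogram of $n$ nodes consisting of $\sqrt n$ vertical lines of length $\sqrt n$ in consecutive columns $c_1,\ldots,c_{\sqrt n}$, the row-window of $c_m$ being an interval of $\sqrt n$ consecutive rows shifted by one relative to that of $c_{m-1}$.

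Finally I would exhibit the central line. Writing the row-window of $c_m$ as $[a+m,\,a+m+\sqrt n-1]$ (the opposite shift direction is symmetric), the intersection of all $\sqrt n$ windows is the single row $y^\ast=a+\sqrt n$, so the horizontal segment $L_C=\{(c_1,y^\ast),\ldots,(c_{\sqrt n},y^\ast)\}$ is a fully occupied line of the shape; moreover any node $(c_m,r)$ of the shape lies in the same column-window as $(c_m,y^\ast)$, which is an interval of rows, so the vertical segment joining them lies inside the shape and is perpendicular to $L_C$. By Definition~\ref{def:nice_shape} the shape is \emph{nice}, which together with the preserved connectivity proves the lemma; the non-integer $\sqrt n$ case (one shorter segment) is handled identically, the last fold simply yielding a parallelogram with one shorter column, whose central row still lies in every column-window.

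The step I expect to be the real obstacle is the bookkeeping inside the inductive step: confirming that after the \emph{push} by $\sqrt n$ and the diagonal \emph{re-turn}, the $\sqrt n$ vertical lines of $\mathcal{S}_{k+1}$ again occupy $\sqrt n$ consecutive columns with a \emph{uniform} vertical offset of one — so that, once $\mathcal{D}$ vanishes, the column-windows still share a common row. This requires tracking the exact coordinates in Lemma~\ref{lem:Folding_increses_SectionSize} (not just the cardinalities of $\mathcal{D}$ and $\mathcal{S}$) and checking that no horizontal drift accumulates over the $\sqrt n$ phases (the columns staying flush with the leftmost column of $S_D$) and that the vertical alignment between successive folded segments remains exactly one cell.
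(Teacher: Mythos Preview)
Your proposal is correct and follows essentially the same approach as the paper: use Lemma~\ref{lem:CreateLadel} as the base case and Lemma~\ref{lem:Folding_increses_SectionSize} as the inductive step to show that after $\sqrt{n}$ phases the diagonal part $\mathcal{D}$ is exhausted and the whole shape coincides with the parallelogram $\mathcal{S}_{\sqrt{n}}$, which is then declared \emph{nice}. The paper's own proof is much terser---it simply asserts that the resulting $\sqrt{n}$ vertical lines ``comply perfectly'' with Definition~\ref{def:nice_shape}---whereas you go further by explicitly exhibiting the central row $y^\ast$ as the unique row common to all column-windows and checking the perpendicular-connection condition; this extra care (and your handling of the non-integer $\sqrt{n}$ case) is an improvement in rigour, not a departure in strategy.
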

\begin{proof}
	By following Lemma \ref{lem:CreateLadel}, $S_{D}$ converts into a $Ladle_{2}$ which consists of two parts $\mathcal{D}_{2} = |n-2\sqrt{n}+1|$ and $ \mathcal{S}_{2} = |2\sqrt{n}| $. Then, by Lemma  \ref{lem:Folding_increses_SectionSize}, through the final phase $k = \sqrt{n}$,  all segments are being folded diagonally over each other, therefore the diagonal part of the $Ladle$ will be exhausted $\mathcal{D}_{\sqrt{n}} = \phi $, whilst the parallelogram part acquires all $n$ nodes, $ \mathcal{S}_{\sqrt{n}} = |n| $. The resulting shape of $\sqrt{n}$ vertical (horizontal) lines at the end of the final phase is no longer $Ladle$ and complies perfectly with all standards and properties of \textit{nice shapes}, see Definition \ref{def:nice_shape}; as a result, it is a \textit{nice} shape.
\end{proof}

At this point, we are ready to analyse the running time of \emph{DLC-Folding}  that preserves connectivity over its course.

\begin{lemma}	\label{lem:Fold_FirstSection_of_Stair_To_Line}   	
	Given a diagonal $S_{D}$ of order $ n $ partitioned into $ \sqrt{n}  $ segments, \emph{DLC-Folding} folds the topmost (bottommost) segment in $O(n)$ steps.  
\end{lemma}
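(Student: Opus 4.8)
The plan is to account, step by step, for the three operations (\emph{turn}, \emph{push}, \emph{turn}) that \emph{DLC-Folding} performs in a single phase on one $\sqrt{n}$-length segment, and then add the three costs. Fix the topmost segment $l_{1}$, a diagonal of $\sqrt{n}$ nodes occupying $(i,j),(i+1,j+1),\ldots,(i+\sqrt{n}-1,j+\sqrt{n}-1)$ with base point $b_{1}=(i,j)$; the bottommost-segment case follows by symmetry.

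For the first \emph{turn} I would argue exactly as in the analysis of \emph{DL-Partitioning}: the $t$-th node of $l_{1}$ above $b_{1}$ sits at $(i+t,j+t)$ and has to reach $(i+t,j)$, and since the whole column $i+t$ is empty below it in the diagonal, that single node (a line of length $1$) can be pushed down once per row, using $t$ permissible moves in the sense of Definition~\ref{def:permissible_line_move}. Summing over $t$ gives $\sum_{t=1}^{\sqrt{n}-1} t = \frac{\sqrt{n}(\sqrt{n}-1)}{2}=\frac{n-\sqrt{n}}{2}=O(n)$, as in~\eqref{eqn101}. The \emph{push} then translates the resulting horizontal line of length $\sqrt{n}$ by $\sqrt{n}$ positions towards the leftmost column; since row $j$ is empty to the left of the line (the only node of $S_{D}$ in row $j$ was $(i,j)$, and the remaining segments lie in strictly lower rows), each unit translation is one line move by the linear-strength rule, for a total of $\sqrt{n}=O(\sqrt{n})$ steps. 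The final \emph{turn} is the mirror image of the first: it spreads the horizontal line back onto a diagonal aligned above $l_{2}$, the $t$-th node from the right descending $t$ rows through an empty column, again at cost $\sum_{t=1}^{\sqrt{n}-1} t=O(n)$ (one could alternatively invoke Lemma~\ref{lem:Transfer_Line_H_to_V} for the orientation-changing portion). Adding the three contributions yields $O(n)+O(\sqrt{n})+O(n)=O(n)$, the claimed bound.

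I expect the only genuinely delicate point to be the bookkeeping of empty cells rather than the arithmetic: for each move above one must check that the target cell is unoccupied — in particular that the \emph{push} does not collide with the already-folded portion of the shape and that both \emph{turn}s operate over empty columns — so that every move is permissible per Definition~\ref{def:permissible_line_move}. This is precisely what the structural description of a $Ladle$ (Definition~\ref{def:Ladle_shape}), together with Lemmas~\ref{lem:CreateLadel} and~\ref{lem:Folding_increses_SectionSize}, supplies, by pinning down the exact positions of the nodes of $\mathcal{D}_{k}$ and $\mathcal{S}_{k}$ during the phase; with that in hand the running-time bound follows immediately from the three sums.
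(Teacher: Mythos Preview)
Your proposal is correct and follows essentially the same approach as the paper: decompose the single folding phase into its three operations (\emph{turn}, \emph{push}, \emph{turn}), bound each by $\frac{n-\sqrt{n}}{2}$, $\sqrt{n}$, and $\frac{n-\sqrt{n}}{2}$ respectively, and sum to $n=O(n)$. Your additional care about verifying cell emptiness for permissibility (via the $Ladle$ structure) is a nice touch that the paper's own proof omits, but the core argument is identical.
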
  
\begin{proof}
     In the first phase, we perform  \emph{turn}, \emph{push} and \emph{turn} on the first topmost (bottommost) segment of $S_{D}$ of length $ \sqrt{n} $. The first operation (\emph{turn}) is a brute-force line formation that is trivially computed by:
	\begin{align*}
	  1 + 2 + ... + (\sqrt{n}-1) = \frac{\sqrt{n}(\sqrt{n}-1)}{2} = \frac{n -\sqrt{n}}{2}, \numberthis  \label{eqn1111}
	\end{align*}
	 Then,  the second operation \emph{pushes} a line of $ \sqrt{n} $ length in $ \sqrt{n} $ line steps. Again, the last operation costs as much as the first \emph{turn}, namely $\frac{n -\sqrt{n}}{2}$. Altogether, the total cost required to fold the first segment thoroughly is at most:
	 \begin{align*}
	 t_{1} &= \frac{n -\sqrt{n}}{2} + \sqrt{n} + \frac{n -\sqrt{n}}{2} = n -\sqrt{n}+ \sqrt{n} = n \\
	 &= O(n).    
	 \end{align*}  
\end{proof}

\begin{lemma}   \label{lem:Folding_T_a_to_a+1}
	By the end of phase $k$, for all $1 < k \le \sqrt{n}$, \emph{DLC-Folding} folds $Ladle_{k}$ in $ O(n) $ steps.
\end{lemma}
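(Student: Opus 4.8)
The plan is to carry out the same cost accounting as in Lemma~\ref{lem:Fold_FirstSection_of_Stair_To_Line}, but now applied to the parallelogram part $\mathcal{S}_k$ of the $Ladle_k$ in phase $k$, since by Lemma~\ref{lem:Folding_increses_SectionSize} the operation performed in phase $k+1$ acts precisely on this block of $k$ parallel diagonal segments of length $\sqrt{n}$ (equivalently $\sqrt{n}$ vertical lines of height $k$). First I would break the phase into its three constituent operations \emph{turn}, \emph{push}, \emph{turn} exactly as in the formal description, and bound each separately. For the first \emph{turn}: it moves the $\sqrt{n}$ vertical lines (each of height $k$) down so that the $k$ bottom rows of $\mathcal{S}_k$ become full horizontal lines, i.e.\ it converts the parallelogram into a $k\times\sqrt{n}$ rectangle. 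Each of the $\sqrt{n}$ vertical lines must travel a distance $0,1,\ldots,\sqrt{n}-1$ to reach its target column-block, so the total is $\sum_{i=0}^{\sqrt{n}-1} i = \tfrac{n-\sqrt{n}}{2} = O(n)$ steps — note that here a whole vertical line of $k$ nodes moves in a single step by the linear-strength mechanism, so the height $k$ does not enter the count.

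Next I would bound the \emph{push}: it translates the $k$ vertical lines of the rectangle horizontally by $\sqrt{n}$ positions towards the leftmost column of the $Ladle$. That is $k$ lines each pushed $\sqrt{n}$ steps, i.e.\ $k\sqrt{n}$ steps total, which is $O(n)$ since $k \le \sqrt{n}$. Finally, the second \emph{turn} re-orients the rectangle back into $k$ parallel diagonal segments aligned above the next segment: each of the $\sqrt{n}$ vertical lines (except the rightmost, which stays) is slid downward by the appropriate offset, again a distance at most $\sqrt{n}-1$, for a total of at most $\tfrac{n-\sqrt{n}}{2} = O(n)$ steps. Summing the three contributions gives $t_k = \tfrac{n-\sqrt{n}}{2} + k\sqrt{n} + \tfrac{n-\sqrt{n}}{2} = (n-\sqrt{n}) + k\sqrt{n} \le 2n - \sqrt{n} = O(n)$, which is the claimed bound. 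I would display this as
\begin{align*}
t_k &= \frac{n-\sqrt{n}}{2} + k\sqrt{n} + \frac{n-\sqrt{n}}{2} = (n-\sqrt{n}) + k\sqrt{n} \\
&\le (n-\sqrt{n}) + n = 2n - \sqrt{n} = O(n).
\end{align*}

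The main obstacle I anticipate is the correctness bookkeeping rather than the arithmetic: I must make sure that the distances quoted for the two \emph{turn} operations really are bounded by $\sqrt{n}-1$ uniformly over all $k$, i.e.\ that the parallelogram-to-rectangle and rectangle-to-parallelogram reshufflings never require a vertical line to move more than $\sqrt{n}-1$ positions regardless of how tall ($k$) the block has become. This follows from the geometry fixed by Lemma~\ref{lem:Folding_increses_SectionSize} (the parallelogram has exactly $\sqrt{n}$ columns and the horizontal skew between consecutive rows is one cell, so the total horizontal extent of the skew is $\sqrt{n}-1$), but it is worth stating explicitly. A secondary point to verify is that the empty space needed for each push and turn is available — this is guaranteed because phase $k$ only touches $\mathcal{S}_k$ and the region to its left, which the diagonal part $\mathcal{D}_k$ has already vacated by the invariant of Lemma~\ref{lem:Folding_increses_SectionSize}. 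Once these are in place, the $O(n)$ bound per phase is immediate, and (combined with the $\sqrt{n}$ phases of Lemma~\ref{lem:Fold_Digonal_to_NICE}) it yields the overall $O(n\sqrt{n})$ running time of \emph{DLC-Folding}.
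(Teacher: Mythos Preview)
Your proposal is correct and follows essentially the same approach as the paper: both decompose phase $k$ into the three operations \emph{turn}, \emph{push}, \emph{turn}, bound them respectively by $\frac{n-\sqrt{n}}{2}$, $k\sqrt{n}$, and $\frac{n-\sqrt{n}}{2}$, and sum to $t_k = (n-\sqrt{n}) + k\sqrt{n} = O(n)$. Your additional remarks on why the turn distances are bounded by $\sqrt{n}-1$ independently of $k$ and on the availability of empty space are in fact more careful than the paper's own treatment.
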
 
\begin{proof}	
	In phase $k$, $Ladle_{k}$ holds two parts, $ \mathcal{D}_{k} = |n - k \sqrt{n}+1|  $ and  $ \mathcal{S}_{k} = |k \sqrt{n}|$. In the first operation of \emph{DLC-Folding},  by exploiting the linear mechanisms, all $\sqrt{n}$ lines of $ \mathcal{S}_{k}$ translate in  a distance equals to \eqref{eqn1111}, namely $ \frac{n -\sqrt{n}}{2} $.  Now, the $\sqrt{n}$ lines have moved and formed another $k$ lines in a different orientation. Therefore, in the second operation, we push those $k$ lines $\sqrt{n}$ steps  in a total of:
   \begin{align*}
   k\sqrt{n} = O(n),  \numberthis  \label{eqn112}
   \end{align*}
  And the third move is completing the process by (folding) the $\sqrt{n}$ lines diagonally above the next segment, with the same cost of \eqref{eqn1111}: 
  \begin{align*}
  \frac{n -\sqrt{n}}{2} = O(n),  \numberthis  \label{eqn113}
  \end{align*}

   With this, by summing \eqref{eqn1111} , \eqref{eqn112} and \eqref{eqn113}, the total steps performed by the end of phase  $k$ is given by:
  \begin{align*}
  t_{k} &=   \frac{n -\sqrt{n}}{2} + k\sqrt{n} + \frac{n -\sqrt{n}}{2} =  n -\sqrt{n}+ k\sqrt{n} \\  
  &=  O(n).     \numberthis  \label{eqn114}
  \end{align*}
  
   This holds trivially from phase 2 and inductively for every phase $k$, for all $1 < k \le \sqrt{n}$.
\end{proof} 

Altogether, Proposition \ref{prop:Niceshape_to_line} and Lemmas \ref{lem:Fold_FirstSection_of_Stair_To_Line} and \ref{lem:Folding_T_a_to_a+1}, the running time of \emph{DLC-Folding} is,

\begin{theorem} \label{theo:Sahpe_To_Line_with_connectivty_Strategy_1} 
	Given an initial connected diagonal of $ n $ nodes, \emph{DLC-Folding} solves the {\sc DiagonalToLineConnected} problem in $O(n\sqrt{n})$ steps.    	
\end{theorem}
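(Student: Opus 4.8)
The plan is to compose the per-phase bounds already established for \emph{DLC-Folding} with the cost of the final line formation from a \emph{nice} shape, and show the sum is $O(n\sqrt{n})$. First I would invoke Lemma~\ref{lem:Fold_Digonal_to_NICE} to assert that \emph{DLC-Folding} runs in exactly $\sqrt{n}$ phases and terminates with a \emph{nice} shape, and Lemma~\ref{lem:CreateLadel} together with Lemma~\ref{lem:Folding_increses_SectionSize} to assert that the intermediate configurations are all connected ($Ladle$ shapes), so connectivity is preserved throughout — which is what {\sc DiagonalToLineConnected} demands.

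Next I would bound the total number of steps. By Lemma~\ref{lem:Fold_FirstSection_of_Stair_To_Line}, the first phase costs $O(n)$ steps. By Lemma~\ref{lem:Folding_T_a_to_a+1}, each phase $k$ with $1 < k \le \sqrt{n}$ also costs $t_k = n - \sqrt{n} + k\sqrt{n} = O(n)$ steps (equation~\eqref{eqn114}). Summing over all $\sqrt{n}$ phases gives
\begin{align*}
T_{\text{fold}} &= \sum_{k=1}^{\sqrt{n}} t_{k} \le \sqrt{n}\cdot O(n) = O(n\sqrt{n}),
\end{align*}
and if one wants the exact leading term, $\sum_{k=1}^{\sqrt{n}}(n-\sqrt{n}+k\sqrt{n}) = \sqrt{n}(n-\sqrt{n}) + \sqrt{n}\cdot\frac{\sqrt{n}(\sqrt{n}+1)}{2} = O(n\sqrt{n})$. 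Finally, once the \emph{nice} shape is reached, Proposition~\ref{prop:Niceshape_to_line} converts it into the target spanning line $S_{L}$ in $O(n)$ further steps while preserving connectivity. Adding this linear term to $T_{\text{fold}}$ leaves the bound at $O(n\sqrt{n})$, completing the argument.

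The step I expect to require the most care is the bookkeeping that keeps the $Ladle$ invariant intact across phases — specifically verifying that after the \emph{turn}--\emph{push}--\emph{turn} triple of phase $k$, the base point has shifted to $(i-\sqrt{n}, j-\sqrt{n})$ and the parallelogram part $\mathcal{S}$ has grown by exactly $\sqrt{n}$ nodes while the diagonal part $\mathcal{D}$ has shrunk by exactly $\sqrt{n}$, with no collisions and no disconnection during any of the three operations. However, all of this is already discharged by Lemmas~\ref{lem:CreateLadel}, \ref{lem:Folding_increses_SectionSize} and \ref{lem:Fold_Digonal_to_NICE}, so in the proof itself this reduces to citing those lemmas; the only genuinely new content is the summation of the phase costs and the concatenation with Proposition~\ref{prop:Niceshape_to_line}, both of which are routine. (One minor subtlety worth a sentence: the non-integer-$\sqrt{n}$ case is handled exactly as in \emph{DL-Partitioning}, with one possibly-shorter segment, which affects only lower-order terms.)
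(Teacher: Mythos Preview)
Your proposal is correct and follows essentially the same approach as the paper: cite the per-phase $O(n)$ bounds from Lemmas~\ref{lem:Fold_FirstSection_of_Stair_To_Line} and~\ref{lem:Folding_T_a_to_a+1}, sum over the $\sqrt{n}$ phases to get $O(n\sqrt{n})$, and add the $O(n)$ cost of Proposition~\ref{prop:Niceshape_to_line} for the final nice-shape-to-line conversion. If anything, your version is slightly more careful in explicitly flagging connectivity preservation via the $Ladle$ invariant and in mentioning the non-integer-$\sqrt{n}$ case, both of which the paper's proof leaves implicit.
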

\begin{proof}
    By Lemma \ref{lem:Fold_FirstSection_of_Stair_To_Line}, \emph{DLC-Folding} creates a $Ladle$ in a total of:
	\begin{align*}
	T_{1} &=  \frac{n -\sqrt{n}}{2}.   \numberthis  \label{eqn115}
	\end{align*}
	 
   Now, Lemma \ref{lem:Folding_T_a_to_a+1} provides the running time of phase $k$, for all $1 < k \le \sqrt{n}$, therefore the total run  for all phases is computed by \emph{(except the first phase)}:
	 \begin{align*}
	 T_{2} &= \sum_{i=1}^{\sqrt{n}-1}  n -\sqrt{n}+ i\sqrt{n} = n\sqrt{n} -2n -\sqrt{n} +  \sum_{i=1}^{\sqrt{n}-1} i\sqrt{n}\\ 
	 &= n\sqrt{n} -2n -\sqrt{n} + \sqrt{n } \sum_{i=1}^{\sqrt{n}-1} i= n\sqrt{n} -2n -\sqrt{n} + n \bigg(\frac{\sqrt{n}-1}{2}\bigg) \\
	 &= n\sqrt{n} -2n -\sqrt{n} +  \bigg(\frac{n\sqrt{n}-n}{2}\bigg) = \dfrac{n\sqrt{n} -5n-2\sqrt{n}}{2} \\
	 &=  O(n\sqrt{n}).     \numberthis  \label{eqn116}
	 \end{align*}	
		
	 Then, the total cost for all phases of \emph{DLC-Folding} is given by summing \eqref{eqn115} and \eqref{eqn116} :
	\begin{align*}
	T_{3} &=  T_{1} + T_{2} \\
	 &= \frac{n -\sqrt{n}}{2}  + \dfrac{n\sqrt{n} -5n-2\sqrt{n}}{2}  = \dfrac{2n - 2\sqrt{n} + 2n\sqrt{n} - 10n - 4\sqrt{n}}{4} \\
	 &= \dfrac{2n\sqrt{n} -8n - 6\sqrt{n}}{4} = \dfrac{n\sqrt{n} -4n - 3\sqrt{n}}{2} \\
	&=  O(n\sqrt{n}).     \numberthis  \label{eqn117}
	\end{align*}
	
	Finally, the resulting shape of \emph{DLC-Folding} is a \textit{nice} shape, which transforms into a line $S_{L}$ in $O(n)$ steps(see Proposition \ref{prop:Niceshape_to_line}), then the total cost $T$ required to transform $S_{D}$ into $S_{L}$, is bounded above by:
	\begin{align*}
	T &=  T_{3} + O(n) \\
	&= O(n\sqrt{n}) + O(n) \\
	&=  O(n\sqrt{n}).     
	\end{align*}
\end{proof}

\subsection{Preserving Connectivity through Extending}
\label{subsec:nrootn_S2}

The current transformation, called \emph{DLC-Extending}, is another strategy to transform the diagonal $S_{D}$ into a line $S_{L}$ in $O(n\sqrt{n})$ steps, with preserving connectivity throughout transformations. As mentioned earlier, $S_{D}$ is partitioned into $ \sqrt{n} $ segments of length $ \sqrt{n} $ each. The implementation of this strategy lasts for $ \sqrt{n} $  phases equal to the number of segments. In each phase, we perform two types of movements, $ turn $ and $ push $. Generally speaking, \emph{DLC-Extending} starts building the spanning line by first performing a line formation on the \emph{topmost} \emph{(bottommost)} diagonal segment of $S_{D}$; after that, convert the rest of the diagonal segments into lines and include them to the main spanning line, sequentially one after the other. Notice that  \emph{DLC-Extending} extends the main spanning line gradually every phase by a length of $ \sqrt{n} $. Figure \ref{fig:EX_DLC-Extending} shows the performance of  \emph{DLC-Extending} on a diagonal of 25 nodes.

\begin{figure}
	\centering
	
	{\includegraphics[scale=0.6]{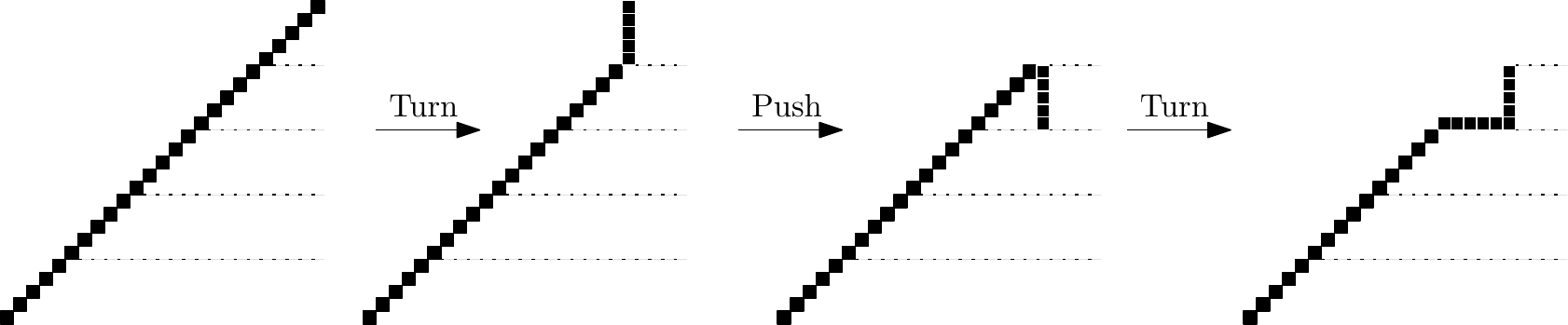}}	  \vspace{20px}

	{\includegraphics[scale=0.6]{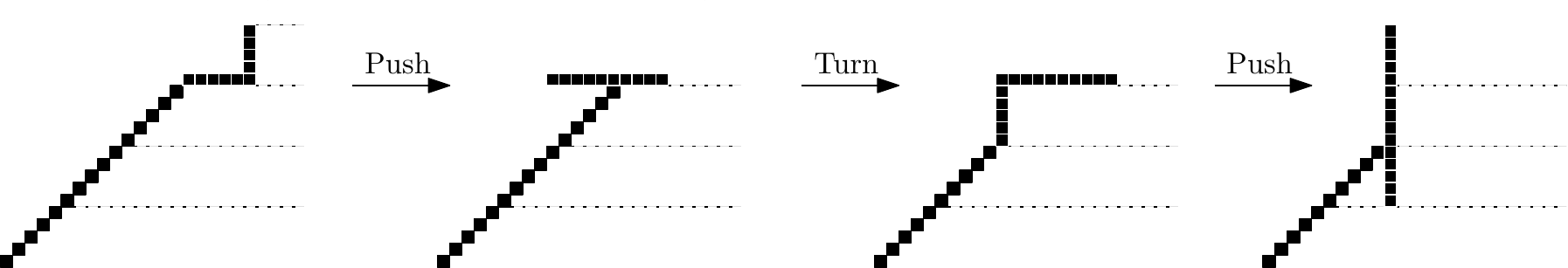}}     \vspace{20px}
	
	{\includegraphics[scale=0.6]{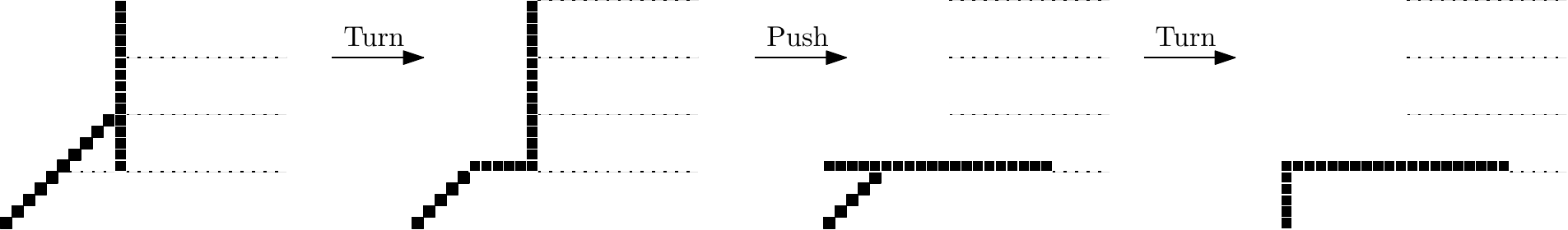}}	  \vspace{20px}
	
	{\includegraphics[scale=0.6]{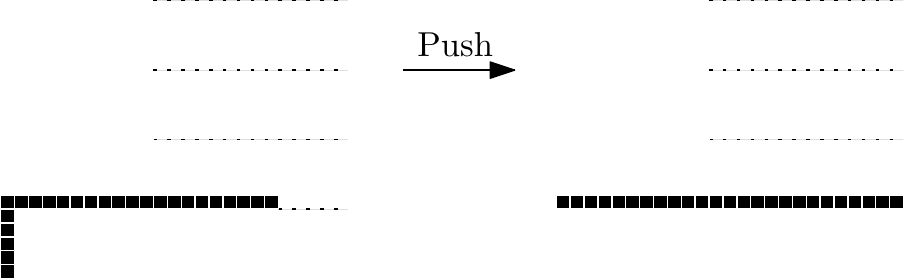}}	  
	\caption{Shows all transformations of  \emph{DLC-Extending} on a diagonal of 25 nodes.}
	\label{fig:EX_DLC-Extending}	
\end{figure} 

\subsubsection{Formal Description}
\label{subsubsec:DLC-Extending}   

Consider an initial diagonal $S_{D}$ of $n$ nodes defined and partitioned as in Section \ref{subsubsec:DLC-Folding}. In phase $k$, for all $1 \le k \le \sqrt{n}$, we perform two line operations, \emph{turn} and \emph{push}, on a diagonal segment $l_{k} \in S_{D}$ of length  $\sqrt{n} $ nodes. Assume $ l_k$ occupies $(i,j), (i+1,j+1), \ldots, (i+\sqrt{n}-1,j+\sqrt{n}-1)$,  where $ i = x+h_k$ and $ j= y+h_k$, for $h_k=n-k\sqrt{n}$.  Here, the bottommost node $b_{k} = (i,j)$ is the base point of $l_{k}$. Due to symmetry, we only show transformations starting vertically from the topmost segment of $S_{D}$. In the first phase, \emph{DLC-Extending} forms a line at the topmost segment $l_{1}$, by performing a brute-force line formation to transfer all nodes into the leftmost column of $l_{1}$. Consequently, all nodes move from $(i,j), (i+1,j+1), \ldots, (i+\sqrt{n}-1,j+\sqrt{n}-1)$ into $(i,j), (i,j+1), \ldots, (i  ,j + \sqrt{n}-1)$. Secondly, it \emph{pushes} this \emph{line} segment $ \sqrt{n} $ steps towards the bottommost row of $S_{D}$ to occupy $(i,j-\sqrt{n}), (i,j+1-\sqrt{n}), \ldots, (i  ,j + \sqrt{n}-1 -\sqrt{n})$. By the end of the first phase, \emph{DLC-Extending} shall construct a spanning line of length $ \sqrt{n} $.  For example, Figure~\ref{fig:FirstSection_of_Stair_To_Line} demonstrates the two operations (\emph{turn} and \emph{push}) of the first phase.  

\begin{figure}[h!t]
	\centering
	\includegraphics[scale=0.6]{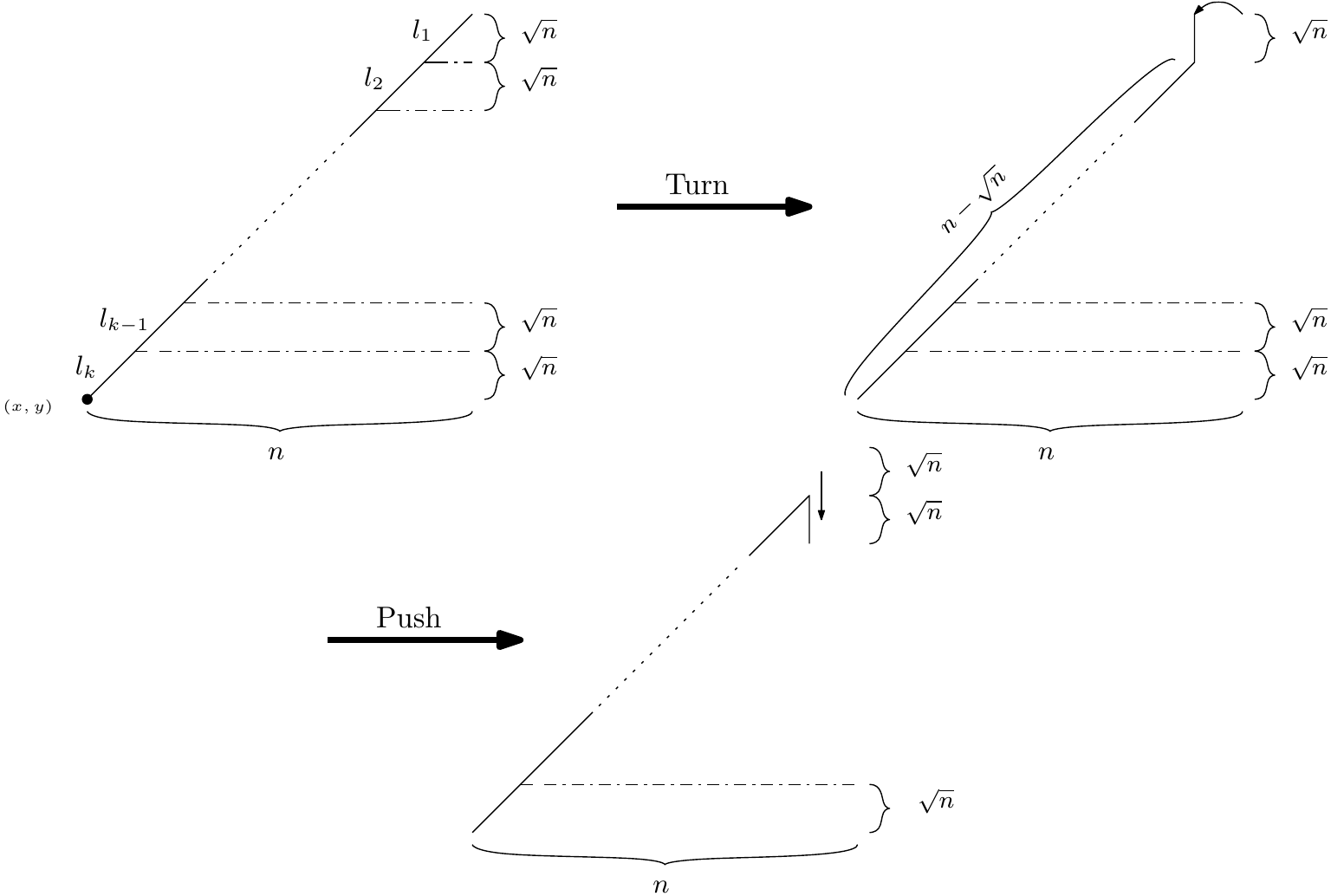}
	\caption{The process of the two operations (\emph{turn} and \emph{push}) in the first phase.}
	\label{fig:FirstSection_of_Stair_To_Line}	
\end{figure}

\begin{figure}[h!t]
	\centering
	\includegraphics[scale=0.6]{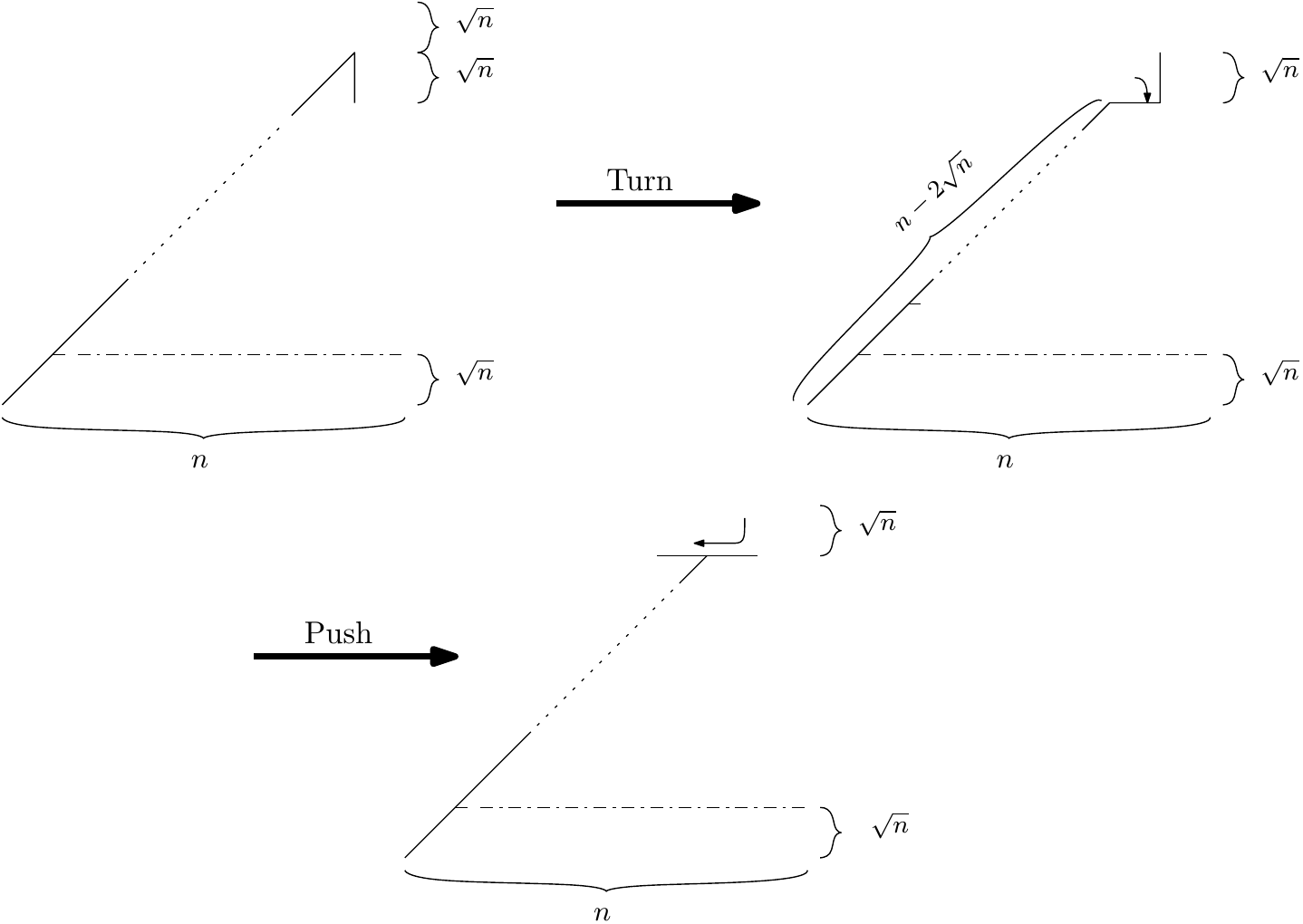}
	\caption{The process of the two operations (\emph{turn} and \emph{push}) in the second phase.}
	\label{fig:FirstSection_of_Stair_To_Line_2}	
\end{figure} 

In the second phase, \emph{DLC-Extending} turns $l_{2}$ into a \emph{line} by sending all $l_{2}$ nodes towards its bottommost row(\emph{in a sequential order}). Observe that $l_{1}$ and $l_{2}$ are now connected perpendicularly via $ (i,j - \sqrt{n}) $ (Figure \ref{fig:FirstSection_of_Stair_To_Line_2}), which then provides the ability to \emph{push}  $l_{1}$  into  $l_{2}$  a distance of $ \sqrt{n} $ vertically towards the leftmost column of $S_{D}$. By the end of this phase, the length of the spanning line is extended by $ \sqrt{n} $. As a result, we have obtained a specific \emph{connected} shape, called $\mathcal{T}\_shape$ and defined as follows;

\begin{definition}
	A $\mathcal{T}\_shape$ is a connected shape of $ n $ consisting of two parts, $ \mathcal{D} $ and $ \mathcal{S} $. Both are connected via a common intersection point $(i, j)$. Figure~\ref{fig:T_shap} shows an example of $\mathcal{T}\_shape$ in phase $k$ for all $ 1 < k < \sqrt{n} $, such that $ \mathcal{T}\_shape_{k} = \mathcal{D}_{k} + \mathcal{S}_{k} $, where\\
	\begin{alphaenumerate}
		\item - $\mathcal{D}_{k} $, is a diagonal line containing  $ n - k \sqrt{n} +1$ nodes and occupying $ (x,y), (x+1,y+1), \ldots, (i, j) $, such that $x$ and $y$ are the leftmost column and the bottommost row of $\mathcal{T}\_shape_{k}$, respectively, where $ \sqrt{n} < i =j < n -\sqrt{n} +1 $.  \\
		
		\item - $ \mathcal{S}_{k}$, is a horizontal or vertical line of length $k\sqrt{n}$ nodes occupying  $(i-\sqrt{n}-1, j), (i+1-\sqrt{n}-1, j), \ldots ,  (i\prime, j) $ or  $ (i, j-\sqrt{n}-1),(i, j+1-\sqrt{n}-1), \ldots, (i, j\prime) $, respectively,  where $i\prime = i+k\sqrt{n}-1$ and $j\prime = j+k\sqrt{n}-1$.
	\end{alphaenumerate}	
	\begin{figure}[h!t]
		\centering 
		\includegraphics[scale=1]{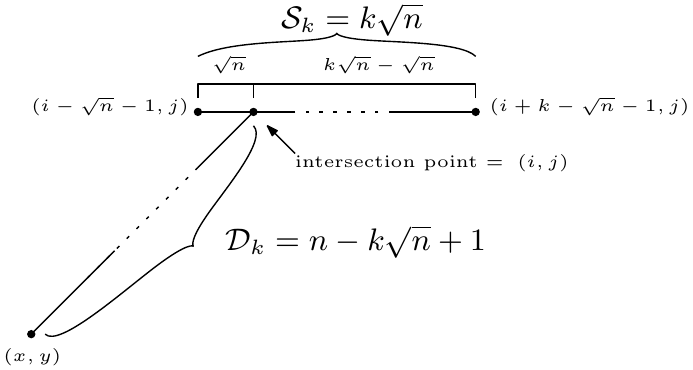}
		\caption{An example of a $\mathcal{T}\_shape$ in phase $k$.}
		\label{fig:T_shap}	
	\end{figure}   	
	\label{def:T_shape}
\end{definition}  

We turn now to prove correctness of \emph{DLC-Extending}. First, we show that the first and second phase converts a diagonal $S_{D}$ into a $\mathcal{T}\_shape$. 

\begin{lemma}  \label{lem:FirstSection_of_Stair_To_Line}   
	Let $S_{D}$ be a diagonal of order $ n $ partitioned into $ \ceil{\sqrt{n}}  $ segments $l_{1},l_{2},...,l_{\sqrt{n}} $. By the end of the second phase, \emph{DLC-Extending}  converts $S_{D}$ into a $\mathcal{T}\_shape$.
\end{lemma}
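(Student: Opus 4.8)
The plan is to prove the statement by directly tracing the four line operations performed in Phases~1 and~2 of \emph{DLC-Extending} on the coordinates fixed in Section~\ref{subsubsec:DLC-Extending}, and checking after each one that (i)~it is permissible in the sense of Definition~\ref{def:permissible_line_move}, (ii)~connectivity of the shape is retained, and (iii)~the configuration reached at the end of Phase~2 satisfies every clause of Definition~\ref{def:T_shape} for $k=2$.

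First I would pin down the positions. With $S_D$ occupying $(x,y),(x+1,y+1),\ldots,(x+n-1,y+n-1)$, the topmost segment $l_1$ is the run of $\sqrt{n}$ nodes from $b_1=(i_1,j_1)$ with $i_1=x+n-\sqrt{n}$, $j_1=y+n-\sqrt{n}$; the next segment $l_2$ runs from $b_2=(i_2,j_2)=(i_1-\sqrt{n},\,j_1-\sqrt{n})$; and $l_3\cup\cdots\cup l_{\sqrt{n}}$ stays a diagonal from $(x,y)$ to $(i_2-1,j_2-1)$, of $n-2\sqrt{n}$ nodes, throughout Phases~1--2. For Phase~1 I would argue: the \emph{turn} is the brute-force line formation that moves each node of $l_1$ leftward into column $i_1$; each such node's horizontal path lies in a row strictly above $j_1$, where all cells other than the diagonal node itself are empty, so every one of these is a permissible length-$1$ move, the base node $b_1$ never moves, and $l_1$ becomes the vertical line on column $i_1$, rows $j_1,\ldots,j_1+\sqrt{n}-1$. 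The \emph{push} then translates this line $\sqrt{n}$ cells downward, which is permissible since column $i_1$ is empty below it, placing it on rows $j_1-\sqrt{n},\ldots,j_1-1$; its top node $(i_1,j_1-1)$ is horizontally adjacent to $(i_1-1,j_1-1)$, the top of the untouched diagonal, so the shape stays connected.

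For Phase~2 I would repeat the same kind of check. The \emph{turn} sends every node of $l_2$ down to row $j_2$ — again a sequence of permissible moves through empty cells, each at a fixed column to the left of $i_1$ — turning $l_2$ into the horizontal segment on row $j_2$, columns $i_2,\ldots,i_2+\sqrt{n}-1=i_1-1$; I would then verify the two adjacencies that keep the shape connected and realise the perpendicular junctions quoted in the text: $(i_2,j_2)\sim(i_2-1,j_2-1)$ joins the turned $l_2$ to $l_3\cup\cdots\cup l_{\sqrt{n}}$, and, since $j_2=j_1-\sqrt{n}$, the right endpoint $(i_1-1,j_2)$ of the turned $l_2$ lies next to the bottom node $(i_1,j_1-\sqrt{n})=(i_1,j_2)$ of the line produced in Phase~1. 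The final \emph{push} moves the Phase-1 line another $\sqrt{n}$ positions downward through an empty stretch of column $i_1$, extending the spanning-line part by $\sqrt{n}$. I would close by substituting $k=2$ into Definition~\ref{def:T_shape}: the result decomposes into a diagonal tail $\mathcal{D}_2$ of $n-2\sqrt{n}+1$ nodes (the surviving diagonal together with the junction node $(i_2,j_2)$) and a spanning-line part $\mathcal{S}_2$ of $2\sqrt{n}$ nodes, meeting only at the intersection point, i.e.\ a $\mathcal{T}\_shape$.

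The step I expect to be the real obstacle is the last one: matching the reached configuration \emph{exactly} to Definition~\ref{def:T_shape}. One has to be careful that the two \emph{turn} operations, which rearrange nodes individually rather than as a rigid line, never call for an already-occupied cell in any intermediate configuration, and that after the two pushes the coordinates of $\mathcal{D}_2$ and $\mathcal{S}_2$ line up precisely with the $i,j,i',j'$ prescribed there, with no cell occupied twice and with $|\mathcal{D}_2|+|\mathcal{S}_2|-1=n$. Once those identifications are in place, the remainder is a routine substitution of $k=2$ into the formulas of Section~\ref{subsubsec:DLC-Extending}.
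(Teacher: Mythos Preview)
Your overall plan—tracing the four operations of Phases~1 and~2 on explicit coordinates and then matching the outcome against Definition~\ref{def:T_shape}—is exactly what the paper does, only in far greater detail. The gap is in your reading of the Phase~2 \emph{push}.

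You take the final push to ``move the Phase-1 line another $\sqrt{n}$ positions downward through an empty stretch of column $i_1$''. With that operation, after Phase~2 you are left with the turned $l_2$ as a \emph{horizontal} segment on row $j_2$ (columns $i_2,\ldots,i_1-1$) and the pushed $l_1$ as a \emph{vertical} segment on column $i_1$ (rows $j_2-\sqrt{n},\ldots,j_2-1$). That is an L-shaped arrangement of $2\sqrt{n}$ nodes, not ``a horizontal or vertical line of length $k\sqrt{n}$'' as Definition~\ref{def:T_shape} demands of $\mathcal{S}_k$. Your closing decomposition—``a spanning-line part $\mathcal{S}_2$ of $2\sqrt{n}$ nodes''—therefore fails: the $2\sqrt{n}$ nodes are present, but they are not collinear, so the shape you reach is not a $\mathcal{T}\_shape$.

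The \emph{push} of phase~$k$ in \emph{DLC-Extending} is not a rigid translation; it is the orientation-changing merge of Lemma~\ref{lem:Transfer_Line_H_to_V}, which re-orients the current spanning line (here the vertical $l_1$ of length $\sqrt{n}$) into the direction of the freshly formed segment, producing a single straight $\mathcal{S}_2$ of length $2\sqrt{n}$. This is corroborated by the cost accounting in Theorem~\ref{theo:DLC_Extending}, where the push of phase $k$ is charged $2(k-1)\sqrt{n}$ steps with an explicit appeal to Lemma~\ref{lem:Transfer_Line_H_to_V}; a bare translation of $\sqrt{n}$ cells would cost only $\sqrt{n}$. Once you replace your downward translation by this merge, the coordinate check you outline goes through and the resulting configuration matches Definition~\ref{def:T_shape} with $k=2$, which is precisely the paper's (much terser) argument.
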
 
\begin{proof}
	 By performing the above two main operations (\emph{turn} and \emph{push}) on the two topmost segments, $l_{2}$ and $l_{2}$, respectively, we shall acquire a connected shape of two parts: 1) A diagonal at $(x,y), (x+1,y+1), \ldots , (i, j)$, where $i = x+n - 2\sqrt{n} -2$ and $j =  y+n - 2\sqrt{n} -2$, and 2) A horizontal or vertical line of length $2\sqrt{n}$ occupying  $(i - 2\sqrt{n}, j), \ldots , ( i + 2\sqrt{n}, j) $ or  $ (i, j - 2\sqrt{n} ), \ldots, (i, j + 2\sqrt{n}) $, respectively. Observe that the two parts will have the same common intersection point at $(i, j)$ in all cases. Therefore, the resulting shape of the second phase meet all properties of Definition \ref{def:T_shape}, and hence, we conclude that this shape is a $\mathcal{T}\_shape$. 
\end{proof} 

The following lemma shows that the two operations of \emph{DLC-Extending} hold in any phase $k$, for all $ 2 \le k < \sqrt{n} $.

\begin{lemma}  \label{lem:SEconPhase_of_Stair_To_Line}   	
	Let a $\mathcal{T}\_shape_{k}$ of $n$ nodes be in phase $k$, for all $ 2 \le k < \sqrt{n}$. Therefore, in phase $k+1$, \emph{DLC-Extending} increases and decreases the length of $ \mathcal{S}$ and $ \mathcal{D} $ by $ \sqrt{n} $, respectively.         	
\end{lemma}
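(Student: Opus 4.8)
The plan is to proceed by induction on the phase index $k$, exactly mirroring the structure of Lemma \ref{lem:Folding_increses_SectionSize} in the folding strategy, but now with the simpler invariant that $\mathcal{S}_k$ is a single straight line rather than a parallelogram of diagonal segments. The base case is furnished by Lemma \ref{lem:FirstSection_of_Stair_To_Line}, which tells us that after phase $2$ we have a genuine $\mathcal{T}\_shape_2$ with $|\mathcal{D}_2| = n - 2\sqrt{n} + 1$ and $|\mathcal{S}_2| = 2\sqrt{n}$, the two parts sharing the common intersection point $(i,j)$. So I would assume we are given a $\mathcal{T}\_shape_k$ conforming to Definition \ref{def:T_shape}, with $\mathcal{D}_k$ a diagonal of $n - k\sqrt{n} + 1$ nodes ending at $(i,j) = (x + h_k, y + h_k)$ for $h_k = n - k\sqrt{n}$, and $\mathcal{S}_k$ a line of $k\sqrt{n}$ nodes attached at $(i,j)$, and show that applying \emph{turn} and \emph{push} to the segment $l_{k+1}$ produces a $\mathcal{T}\_shape_{k+1}$ with $|\mathcal{S}_{k+1}| = (k+1)\sqrt{n}$ and $|\mathcal{D}_{k+1}| = n - (k+1)\sqrt{n} + 1$, i.e.\ with the total order $n$ preserved.

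The key steps, in order: First, identify the segment $l_{k+1}$, the topmost (by symmetry, vertical case) remaining diagonal segment of $\mathcal{D}_k$, occupying $(i-\sqrt{n}, j-\sqrt{n}), (i-\sqrt{n}+1, j-\sqrt{n}+1), \ldots, (i-1, j-1)$ together with the shared base-type node at $(i,j)$ — more carefully, $l_{k+1}$ consists of the $\sqrt{n}$ topmost nodes of $\mathcal{D}_k$ below the attachment point. Second, apply \emph{turn}: by Lemma \ref{lem:Transfer_Line_H_to_V}, or more simply by the brute-force line formation used throughout, move each of these $\sqrt{n}$ nodes into the leftmost column of $l_{k+1}$, producing a vertical line segment of length $\sqrt{n}$ sitting directly below $\mathcal{S}_k$ and connected to it; crucially, this removes exactly $\sqrt{n}$ nodes from the diagonal part, leaving $\mathcal{D}$ with $n - k\sqrt{n} + 1 - \sqrt{n} = n - (k+1)\sqrt{n} + 1$ nodes, which is the claimed new length. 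Third, apply \emph{push}: the newly formed line segment is now collinear-adjacent (perpendicular in the other orientation — I would keep the orientation bookkeeping explicit here) to $\mathcal{S}_k$, so push $\mathcal{S}_k$ by $\sqrt{n}$ positions to merge it with the new segment; since the push operation is a single permissible line move repeated $\sqrt{n}$ times and does not change the node count of either part, the result is a single line of length $k\sqrt{n} + \sqrt{n} = (k+1)\sqrt{n}$. Fourth, verify that the resulting configuration still satisfies Definition \ref{def:T_shape}: the diagonal part is still anchored at $(x,y)$ with its topmost node now at $(i - \sqrt{n}, j - \sqrt{n})$, the line part $\mathcal{S}_{k+1}$ is attached there, the length constraint $\sqrt{n} < i - \sqrt{n} = j - \sqrt{n} < n - \sqrt{n} + 1$ holds as long as $k + 1 < \sqrt{n}$, and connectivity is maintained throughout because each intermediate move is a permissible line move leaving the shape connected (the diagonal is connected, the line is connected, and they always share the intersection node).

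The main obstacle I anticipate is the careful handling of the \emph{geometry at the intersection point} — specifically making sure that after the \emph{turn} operation the freshly created vertical line of $\sqrt{n}$ nodes really does line up to be pushable into $\mathcal{S}_k$ without colliding with the remaining diagonal tail, and that the common attachment node is accounted for exactly once (the off-by-one between ``$\sqrt{n}$ nodes'' and ``segment including the base point'' recurs throughout the paper's diagonal arguments). I would also need to confirm that there is always empty space in the direction of the \emph{push}, which follows because the diagonal $\mathcal{D}_k$ occupies a strictly lower-left staircase region and the column/row into which $\mathcal{S}_k$ is pushed is otherwise empty. The node-counting itself — $|\mathcal{D}|$ drops by $\sqrt{n}$, $|\mathcal{S}|$ rises by $\sqrt{n}$, sum stays $n$ — is routine once the geometry is pinned down, so I would state it tersely. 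The induction then closes immediately, and combined with the per-phase cost bound (analogous to Lemma \ref{lem:Folding_T_a_to_a+1}) this feeds into the overall $O(n\sqrt{n})$ running-time theorem for \emph{DLC-Extending}.
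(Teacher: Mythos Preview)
Your proposal is correct and follows essentially the same approach as the paper: both arguments take a $\mathcal{T}\_shape_k$, apply the \emph{turn} (brute-force line formation on the topmost diagonal segment of $\mathcal{D}_k$) followed by the \emph{push} of $\mathcal{S}_k$ through the newly formed segment, and then verify the resulting node counts and that Definition~\ref{def:T_shape} is still satisfied. The only cosmetic difference is that you wrap the argument as an explicit induction with Lemma~\ref{lem:FirstSection_of_Stair_To_Line} as base case, whereas the paper states it directly for arbitrary $k$; your added remarks about the orientation alternation and the off-by-one at the shared node are, if anything, more careful than the paper's own account.
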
 
\begin{proof}
	By Definition \ref{def:T_shape}, $\mathcal{T}\_shape_{k}$ in phase $k$ holds two segments, a diagonal $ \mathcal{D}_{k} = |n -k\sqrt{n}| $ occupies $ (x,y), \ldots, (i,j) $ and \emph{vertical (horizontal)} line $ \mathcal{S}_{k} = |k\sqrt{n}| $  at  $ (i - \sqrt{n}, j), \ldots , (i + k - \sqrt{n}, j)$, where both  intersect at a common point  $(i,j)$, as shown in Figure~\ref{fig:T_a_to_a+2} top-left. Assume that $\mathcal{S}_{k}$ is \emph{horizontal} (\textit{due to symmetry, it is sufficient to focus on only one orientation}). Therefore, we \emph{turn} the topmost diagonal segment  of $\mathcal{D}_{k}$ by performing a \emph{line formation} to collect all $ \sqrt{n} $ nodes at the leftmost column in that segment. Therefore, they move from cells $ (i - \sqrt{n}+1,j- \sqrt{n}+1), \ldots , (i-1,j-1) $  into  $ (i - \sqrt{n}+1,j- \sqrt{n}+1), \ldots , (i - \sqrt{n}+1,j-1) $. Notice that the shape still maintains connectivity. 
	
	Next, we \emph{push} $ \mathcal{S}_{k}$ to include the new constructed \emph{line} segment, which subsequently extends $ \mathcal{S}_{k}$  vertically with an increase of $\sqrt{n}$ in length, in order to occupy $ (i-\sqrt{n}, j-2\sqrt{n}), \ldots, (i-\sqrt{n}, j+k-2\sqrt{n}) $. in phase $k+1$, we obtain a new \textit{connected} shape consists of $ \mathcal{D}_{k+1}$ of length $ n - (k-1)\sqrt{n} $ and $ \mathcal{S}_{k+1}$ of  $ (k+1)\sqrt{n}$, and they both intersect at a common point $ (i - \sqrt{n}-1, j) $, as shown in Figure~\ref{fig:T_a_to_a+2}. Therefore, we derive that \emph{DLC-Extending} in phase $k+1$ decreases the length of $ \mathcal{D}_{k} $  by $ \sqrt{n} $ and increases  $ \mathcal{S}_{k} $ by $ \sqrt{n} $.  
	\begin{figure}[h!t]
		\centering
		\includegraphics[scale=1]{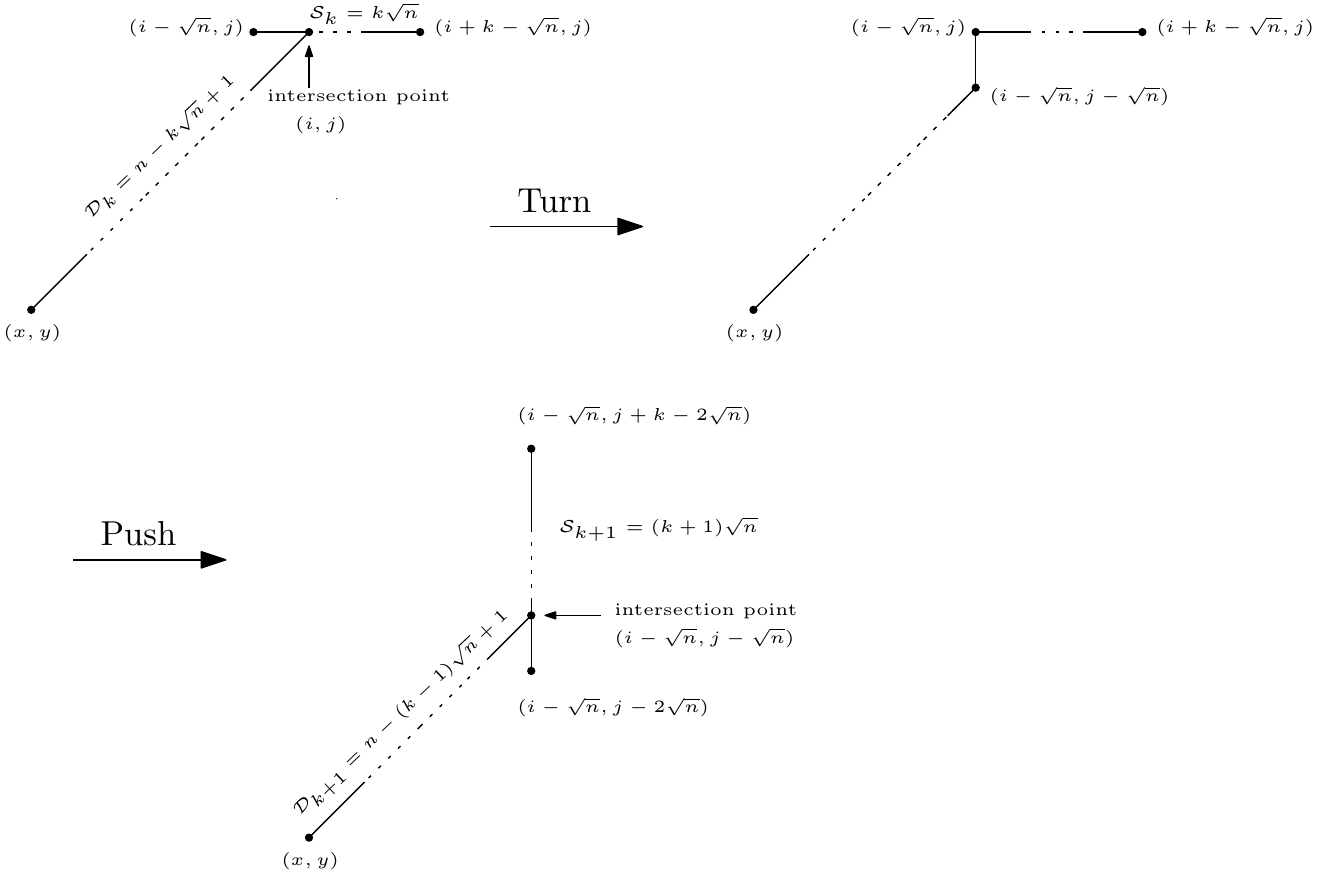}
		\caption{All transformations of \emph{DLC-Extending} on $\mathcal{T}\_shape$ from phase $k$ to $ k+1 $, see Lemma \ref{lem:SEconPhase_of_Stair_To_Line} for more details. }
		\label{fig:T_a_to_a+2}	
	\end{figure} 
\end{proof}

Here, we show that the last phase $\sqrt{n}$ of \emph{DLC-Extending} converts a $\mathcal{T}\_shape$ into a spanning line $S_{L}$.

\begin{lemma}\label{lem:T_a_to_a+2}   	
	Given a diagonal $S_{D}$ of $ n $ nodes and partitioned into $  \sqrt{n} $ segments, \emph{DLC-Extending} transforms a $\mathcal{T}\_shape$ into a spanning line $S_{L}$ by the end of the last phase.	
\end{lemma}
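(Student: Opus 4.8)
The plan is a short induction on the phase index that threads together the structural lemmas already established for \emph{DLC-Extending}. The base case is Lemma~\ref{lem:FirstSection_of_Stair_To_Line}: after the first two phases the configuration is a $\mathcal{T}\_shape_2$, i.e.\ a diagonal part $\mathcal{D}_2$ of $n-2\sqrt{n}+1$ nodes glued at a common point to a straight-line part $\mathcal{S}_2$ of $2\sqrt{n}$ nodes. For the inductive step, suppose that by the end of phase $k$, for some $2\le k<\sqrt{n}$, the shape is a $\mathcal{T}\_shape_k$ with $|\mathcal{S}_k| = k\sqrt{n}$ and $|\mathcal{D}_k| = n - k\sqrt{n} + 1$. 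Lemma~\ref{lem:SEconPhase_of_Stair_To_Line} then guarantees that phase $k+1$ is applicable on such a shape — its \emph{turn} is a brute-force line formation on the topmost remaining diagonal segment, and its \emph{push} slides $\mathcal{S}_k$ by $\sqrt{n}$ so as to absorb that straightened segment — that it preserves connectivity, and that it outputs a $\mathcal{T}\_shape_{k+1}$ with $|\mathcal{S}_{k+1}| = (k+1)\sqrt{n}$ and $|\mathcal{D}_{k+1}| = n - (k+1)\sqrt{n} + 1$. Iterating the step for $k = 2, 3, \ldots, \sqrt{n}-1$, the diagonal part loses $\sqrt{n}$ nodes and the line part gains $\sqrt{n}$ nodes per phase, so by the end of phase $\sqrt{n}$ we have $|\mathcal{S}_{\sqrt{n}}| = n$ and $|\mathcal{D}_{\sqrt{n}}| = 1$; the diagonal part has collapsed onto the shared base point, and therefore the whole shape coincides with the line $\mathcal{S}_{\sqrt{n}}$, which is exactly a spanning line $S_L$ of $n$ nodes.

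The one place worth arguing on its own is the terminal phase $k=\sqrt{n}$, which is mildly degenerate: the diagonal part entering it is a single segment of length $\sqrt{n}$ sitting at the bottom of the former diagonal, with no further segment beneath it. I would check that \emph{turn} and \emph{push} still apply verbatim here: \emph{turn} straightens this last segment into a line perpendicular to $\mathcal{S}$ at their junction, exactly as in the generic step, and \emph{push} slides $\mathcal{S}$ by $\sqrt{n}$, which now consumes precisely the last $\sqrt{n}$ nodes and leaves a single straight line. Connectivity is maintained throughout because at every intermediate configuration the two pieces still share their junction cell and each piece is internally connected. I would also note, as in the earlier sections, that the non-integer-$\sqrt{n}$ case is handled by permitting one shorter segment, which only introduces a lower-order additive term and leaves the argument intact.

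I expect the main obstacle to be purely in making this terminal phase airtight rather than in any conceptual difficulty: one must confirm that the final \emph{push} of $\mathcal{S}$ does not require an empty cell that has disappeared once $\mathcal{D}$ is nearly exhausted, and that the merged object really is one straight line and not, say, an L-shaped or staircase remnant. Once that case is dispatched, the statement follows immediately by induction from Lemmas~\ref{lem:FirstSection_of_Stair_To_Line} and~\ref{lem:SEconPhase_of_Stair_To_Line}.
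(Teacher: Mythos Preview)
Your proposal is correct and follows essentially the same approach as the paper: both arguments thread Lemmas~\ref{lem:FirstSection_of_Stair_To_Line} and~\ref{lem:SEconPhase_of_Stair_To_Line} inductively to track that $|\mathcal{D}|$ shrinks and $|\mathcal{S}|$ grows by $\sqrt{n}$ each phase, then handle the last phase separately by noting that the remaining diagonal segment of length $\sqrt{n}$ is straightened and absorbed by a final \emph{turn} and \emph{push}. Your treatment of the terminal phase and of connectivity is in fact more explicit than the paper's, but the underlying argument is the same.
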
 
\begin{proof}
	It follows from Lemmas \ref{lem:FirstSection_of_Stair_To_Line} and \ref{lem:SEconPhase_of_Stair_To_Line}. At any phase $k $, for all $2 \le k < \sqrt{n}$, $\mathcal{D}$ decreases and $\mathcal{S}$ increases by $\sqrt{n}$. Thus, we shall obtain the following in phase $ \sqrt{n}-1 $;
	\begin{align*}
	\mathcal{T}\_shape_{\sqrt{n}-1} &= \mathcal{D}_{\sqrt{n}-1} +  \mathcal{S}_{\sqrt{n}-1}\\ 	
	&=  \big[  n -( \sqrt{n}-1)\sqrt{n}  \big]  +  \big[  (\sqrt{n}-1)\sqrt{n} \big] \\
	&=   \big[   n- (n -\sqrt{n})  \big] + \big[ (n - \sqrt{n})\big]\\
	&= \big[ \sqrt{n} \big] + \big[  (n - \sqrt{n}) \big].
	\end{align*}
	As a result, we acquire $\mathcal{D}_{\sqrt{n}-1}$ of length $\sqrt{n}$ and $ \mathcal{S}_{\sqrt{n}-1} $ of $n - \sqrt{n}$ nodes. At this moment, in the last phase $\sqrt{n}$, we can trivially \emph{turn} and \emph{push} the diagonal $\mathcal{D}_{\sqrt{n}-1} = |\sqrt{n}|$ and include it to $\mathcal{S}_{\sqrt{n}-1}$ by the line formation and pushing mechanism, in order to form the final spanning line $S_{L}$.
	    
\end{proof} 

To sum up, the total running time for all $ \sqrt{n} $ phases of \emph{DLC-Extending} is analysed below.  

\begin{theorem} \label{theo:DLC_Extending}
	Given an initial connected diagonal of $ n $ nodes, \emph{DLC-Extending} solves the {\sc DiagonalToLineConnected} problem in $O(n\sqrt{n})$ steps.
	\label{prop:Sahpe_To_Line_with_connectivty} 
\end{theorem}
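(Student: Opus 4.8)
The plan is to mirror the proof of Theorem~\ref{theo:Sahpe_To_Line_with_connectivty_Strategy_1} for \emph{DLC-Folding}: first invoke the structural lemmas to obtain correctness and connectivity for free, and then carry out a per-phase running-time count over the $\sqrt{n}$ phases. Correctness is essentially already in hand: by Lemma~\ref{lem:FirstSection_of_Stair_To_Line} the first two phases turn $S_D$ into a $\mathcal{T}\_shape$; by Lemma~\ref{lem:SEconPhase_of_Stair_To_Line} each phase $k\to k+1$ moves exactly one length-$\sqrt{n}$ segment from the diagonal part $\mathcal{D}$ into the line part $\mathcal{S}$ while maintaining the $\mathcal{T}\_shape$ invariant and connectivity (the two parts stay joined at the intersection node, and a diagonal segment is collapsed into a line while still anchored at its base point); and by Lemma~\ref{lem:T_a_to_a+2} the final phase $k=\sqrt{n}$ empties $\mathcal{D}$ and yields the spanning line $S_L$. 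So all that remains is the time bound.

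For the time bound I would count each phase separately. Phase~$1$: the \emph{turn} is a brute-force line formation on a length-$\sqrt{n}$ diagonal segment, costing $\sum_{i=1}^{\sqrt{n}-1}i=(n-\sqrt{n})/2=O(n)$ steps; the \emph{push} slides a single length-$\sqrt{n}$ line by $\sqrt{n}$ cells, costing $\sqrt{n}$ steps; total $O(n)$. Phase~$k$ for $2\le k\le\sqrt{n}$: the \emph{turn} is again a brute-force line formation on one length-$\sqrt{n}$ diagonal segment, costing $(n-\sqrt{n})/2=O(n)$ steps; and here is the crucial observation — in \emph{DLC-Extending} the part $\mathcal{S}_k$ is a \emph{single} straight line (not the parallelogram of $k$ lines that \emph{DLC-Folding} must push), so by the linear-strength mechanism one line move shifts all of $\mathcal{S}_k$ by one cell irrespective of its length, and pushing it the required distance $\sqrt{n}$ onto the freshly formed segment costs only $\sqrt{n}=O(\sqrt{n})$ steps. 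Hence every phase costs $O(n)$, and over $\sqrt{n}$ phases the transformation costs $\sqrt{n}\cdot O(n)=O(n\sqrt{n})$; a final $O(n)$ clean-up via Proposition~\ref{prop:Niceshape_to_line}, if needed to straighten the last line, does not change this, so \emph{DLC-Extending} solves {\sc DiagonalToLineConnected} in $O(n\sqrt{n})$ steps.

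The hard part is not the arithmetic but making the per-phase $O(n)$ bound airtight, and I expect three points to need care. First, confirming that the push of $\mathcal{S}_k$ is genuinely a single-line push of cost $\Theta(\sqrt{n})$ and never degenerates into $k$ separate pushes — this is exactly where the ``extending'' design pays off over ``folding''. Second, checking that in every phase the target cells of the line formation and the cells the push advances into are empty, so that no move is ever blocked and no detours inflate the cost. Third, verifying that the brute-force line formation on a diagonal segment cannot transiently disconnect the shape, which holds because the segment is reshaped while still hanging off the base point $(i,j)$ that joins it to $\mathcal{D}$ and $\mathcal{S}$. Once these are settled the phase count is exactly $\sqrt{n}$ and the $O(n\sqrt{n})$ bound is immediate.
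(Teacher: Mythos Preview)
Your overall structure matches the paper's proof: invoke Lemmas~\ref{lem:FirstSection_of_Stair_To_Line}, \ref{lem:SEconPhase_of_Stair_To_Line}, and \ref{lem:T_a_to_a+2} for correctness and connectivity, then sum per-phase costs over $\sqrt{n}$ phases to get $O(n\sqrt{n})$. The final bound is right, and the turn cost $(n-\sqrt{n})/2$ per phase is exactly what the paper computes.

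One point needs correction, though. Your ``crucial observation'' that the push in phase $k$ costs only $\sqrt{n}$ steps misreads the algorithm. In the $\mathcal{T}\_shape$ evolution of Lemma~\ref{lem:SEconPhase_of_Stair_To_Line}, a horizontal $\mathcal{S}_k$ becomes a vertical $\mathcal{S}_{k+1}$ (and vice versa in the next phase): the freshly flattened diagonal segment is perpendicular to $\mathcal{S}_k$, not collinear with it, so the ``push'' is not a straight slide of $\sqrt{n}$ cells but a reorientation of a line of length $(k{-}1)\sqrt{n}$, which by Lemma~\ref{lem:Transfer_Line_H_to_V} costs $\Theta(k\sqrt{n})$. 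The paper accordingly charges $t_2\approx 2(k{-}1)\sqrt{n}$ to the push in phase $k$ and sums these to $O(n\sqrt{n})$. Your per-phase $O(n)$ bound survives anyway, because $k\sqrt{n}\le n$ and the turn already costs $\Theta(n)$; but the mechanism you cite (``single straight line, so one line move shifts all of it'') is not what makes the push cheap here, and the contrast you draw with \emph{DLC-Folding} on that basis is not the actual distinction between the two strategies. Also note that no clean-up via Proposition~\ref{prop:Niceshape_to_line} is needed: by Lemma~\ref{lem:T_a_to_a+2} the last phase already outputs the spanning line.
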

\begin{proof}
	By Lemmas \ref{lem:FirstSection_of_Stair_To_Line}, \ref{lem:SEconPhase_of_Stair_To_Line} and \ref{lem:T_a_to_a+2}, \emph{DLC-Extending} carries out transformations in $ \sqrt{n} $ phases. In phase $k$, for all $1 \le k \le \sqrt{n}$, the two main operations, \emph{turn} and \emph{push}, is performed. Here, the first operation \emph{turns} a corresponding segment $l_{k}$ by applying a brute-force line formation on its $ \sqrt{n} $ nodes \emph{(except the base point)} as follows;
	\begin{align*}
	  t_{1} &= 1 + 2 + \ldots + (\sqrt{n}-1)= \sum_{i=1}^{\sqrt{n}-1} i \\ 
	  &=  \frac{\sqrt{n}(\sqrt{n}-1)}{2} = \frac{n -\sqrt{n}}{2}\\
	  &= O (n),   \numberthis \label{BFS}
	\end{align*}
	
  Multiply \refeq{BFS} by $ \sqrt{n} $ to obtain the total \emph{turns} for all $ \sqrt{n} $ phases:
	\begin{align*}
	T_{1} &=  t_{1} \cdot \sqrt{n} = \frac{n -\sqrt{n}}{2} \cdot \sqrt{n} = \frac{n\sqrt{n} - n}{2}\\
	&= O (n\sqrt{n}).   \numberthis \label{Total_BFS}
	\end{align*}
		
	 The second operation of phase $k$  \emph{pushes} the spanning line  by $(k-1)\sqrt{n}$ steps horizontally or vertically in at most (see Lemma \ref{lem:Transfer_Line_H_to_V}):
	 \begin{align*}
	 	t_{2} &= 2 \cdot (k-1)\sqrt{n} = k\sqrt{n} - \sqrt{n}, \numberthis \label{Push_line}
	 \end{align*}
	 
	 Now, observe that the last phase only \emph{turns} and \emph{pushes} the last diagonal segment of length $ \sqrt{n} $ \emph{(dose not push the whole spanning line, see Lemma \ref{lem:T_a_to_a+2})}. Therefore, the total of \emph{push} operations for all $ \sqrt{n} $ phases is the summation of \refeq{Push_line} plus pushing the last segment of length $ \sqrt{n} $:
	 	\begin{align*}
	 	T_{2} &= \sum_{i=1}^{\sqrt{n}-1} t_{2} + 2\sqrt{n} \\
	  &= \sum_{i=1}^{\sqrt{n}-1} i\sqrt{n} +\sqrt{n} = \sqrt{n}  \sum_{i=1}^{\sqrt{n}-1} i + \sqrt{n}= \sqrt{n}  \cdot \frac{n -\sqrt{n}}{2} + \sqrt{n}  = \frac{n\sqrt{n} -n}{2} + \sqrt{n} \\
	 &= O (n).  \numberthis \label{Total_Push_line}
	 \end{align*}
	 Finally, putting \refeq{Total_BFS} and \refeq{Total_Push_line} together,  \emph{DLC-Extending} completes its course in a total cost $T$ computed by, 
    \begin{align*}
    T &= T_{1} + T_{2} \\
    &= \frac{n\sqrt{n} - n}{2} + n +\sqrt{n} = \frac{n\sqrt{n} - n +2n + 2\sqrt{n}}{2} = \frac{n\sqrt{n} - n + 2\sqrt{n}}{2}\\ 
    & = O(n \sqrt{n}).
    \end{align*} 	
\end{proof}

\subsection{An $O(n \log n)$-time Transformation}
\label{sub:nlogn_S}

We now investigate another approach (called \emph{DL-Doubling}) for {\sc DiagonalToLine} (i.e., without necessarily preserving connectivity). The main idea is as follows. The initial configuration can be viewed as $ n $ lines of length 1. We start (in \emph{phases}) to successively double the length of lines (while halving their number) by matching them in pairs through shortest paths, until a single spanning line remains.  
Let the lines existing in each phase be labelled $1,2,3,\ldots$ from top-right to bottom-left. 
In each phase, we shall distinguish two types of lines, \textit{free} and \textit{stationary}, which correspond to the odd  ($ 1, 3, 5, \ldots$) and  even ($2,4,6,\ldots$) lines from top-right to bottom-left, respectively. In any phase, only the \textit{free} lines move, while the \textit{stationary} stay still. 
In particular, in phase $i$, every free line $j$ moves via a shortest path to merge with the next (top-right to bottom-left) stationary line $j+1$. This operation merges two lines of length $k$ into a new line of length $2k$ residing at the column of the stationary line. In general, at the beginning of every phase $i$, $1\leq i\leq \log n$, there are $n/2^{i-1}$ lines of length $2^{i-1}$ each. These are interchangeably free and stationary, starting from a free top-right one, and at distance $2^{i-1}$ from each other. The minimum number of steps by which any free line of length $k_i$, $1\leq k_i\leq n/2$, can be merged with the stationary next to it is roughly at most $4k_i=4\cdot 2^{i}$ (by two applications of turning of Lemma \ref{lem:Transfer_Line_H_to_V}). By the end of phase $i$ (as well as the beginning of phase $i+1$), there will be $n/2^{i}$ lines of length $2^{i}$ each, at distances $2^{i}$ from each other. The total cost for phase $i$ is obtained then by multiplying $n/2^{i}$ free lines, each is paying at most $4\cdot 2^{i}$ to merge with the next stationary, thus, a linear cost in each one of $\log n$ phases in total.    
 
See Figure~\ref{fig:FreeStationary_PhaseOne} for an illustration of \emph{DL-Doubling}.
   
\begin{figure}[h!t]
	\centering
	\includegraphics[scale=0.6]{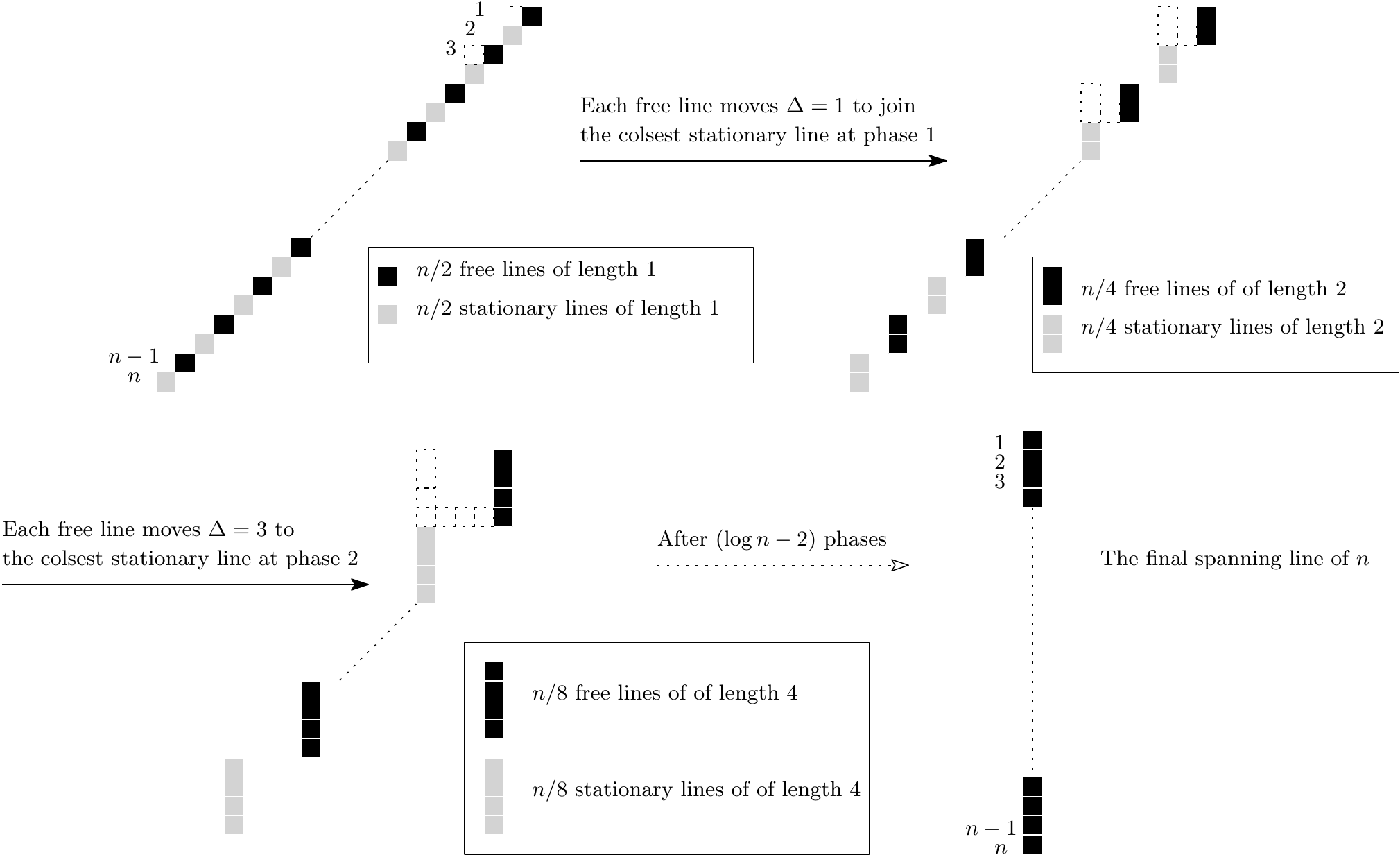}
	\caption{The process of the $O(n \log n)$-time \emph{DL-Doubling}. Nodes reside inside the black and grey cells.}
	\label{fig:FreeStationary_PhaseOne}	
\end{figure}

\subsubsection{Formal Description}
\label{subsubsec:DL-Gathering}  

Consider an initial diagonal line $S_{D}$ of $ n $ nodes occupying cells $(x_1,y_1), \ldots, (x_n, y_n)$, therefore, we can define the odd and even nodes into two different types of line of length 1, \textit{free} and \textit{stationary}. In any phase of \emph{DL-Doubling}, the permission of move is only given to \textit{free} lines, i.e., \textit{stationary} lines cannot move in that phase.  The $ \ceil{\frac{n}{2}}$ nodes in $(x_1,y_1), (x_3,y_3), \ldots, (x_{n-1},y_{n-1})$ are  \textit{free}, on the contrary, $(x_2,y_2), (x_4,y_4), \ldots,  (x_n,y_n)$ are occupied by $ \ceil{\frac{n}{2}}$ \textit{stationary} lines. Since \emph{DL-Doubling} keeps doubling the length of lines from 1 up to $n$, it consequently lasts for $\log n$ phases. In the first phase $i = 1$,  $1\leq i\leq \log n$, each of the $ \ceil{\frac{n}{2}}$ \textit{free} lines of length 1 moves a distance of $\Delta = 1$ to their left to occupy $(x_2,y_1), (x_4,y_3), \ldots, (x_n,y_{n-1})$, in order to merge with the next following $ \ceil{\frac{n}{2}}$ \textit{stationary} lines. As a result, we obtain $\ceil{\frac{n}{2}}$ (\textit{vertical}) lines of length 2, where their bottommost nodes occupying $(x_2,y_1), (x_4,y_3), \ldots, (x_n,y_{n-1})$, as depicted in the top of Figure~\ref{fig:FreeStationary_PhaseOne}. \emph{Due to symmetry, it is sufficient to show transformations occurring on one orientation, i.e., horizontal doubling holds}.

In the second phase, we divide the \textit{vertical} $\ceil{\frac{n}{2}}$ lines of length 2 each into $\ceil{\frac{n}{4}}$  \textit{free} and $\ceil{\frac{n}{4}}$ \textit{stationary}  lines interchangeably, starting from a free top-right one at $ (x_{n-2},y_{n-2}) $, and at distance $\Delta = 2^{2-1} = 2$ from each other. Now,  the $\ceil{\frac{n}{4}}$ \textit{free} lines must move at least $\Delta = 3$ steps to line up with the next following \textit{stationary} line to the left. Consequently, all $\ceil{\frac{n}{4}}$ \textit{free}  joint \textit{stationary} lines, which forms $\ceil{\frac{n}{4}}$  (\textit{vertical}) lines of length 4 each. Eventually, when \emph{DL-Doubling} repeats this process $\log n $ phases, it will form the goal spanning line $S_{L}$ of order $ n $ by the end of phase $\log n$. 

Here, we should also take into account all turns and delays of which a \textit{free} line asks for moving and reallocating above the next following \textit{stationary} line to the left. In particular, each  $ free $ line must first convert from \textit{vertical} into \textit{horizontal}, move one further step to fill the empty cell above the topmost node of the \textit{stationary} and finally converts again into \textit{vertical}. By Lemma \ref{lem:Transfer_Line_H_to_V}, at any phase $i$, for all $1\leq i\leq \log n$, a \textit{free} line of length $k_i$, where $1\leq k_i\leq n/2$, performs at most $4k_i = 4 \cdot 2^i$ steps to merge with closest \textit{stationary} line. Therefore, the following lemma shows that holds in all  $\log n $ phases.

\begin{lemma} \label{lem:Third_Strategy_a_to_a+1}  
		By the end of phase $i$, for all $1 \le i \le \log n $, \emph{DL-Doubling} has created $ n/2^{i} $ lines, each of length $ 2^{i} $, by performing $O( n )$ steps in that phase.			   				   	 	
\end{lemma}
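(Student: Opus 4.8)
The statement to prove is Lemma~\ref{lem:Third_Strategy_a_to_a+1}: at the start of phase $i$ there are $n/2^{i-1}$ lines of length $2^{i-1}$, alternating \emph{free} and \emph{stationary} and spaced $2^{i-1}$ apart, and by the end of phase $i$ there are $n/2^i$ lines of length $2^i$, with phase $i$ costing $O(n)$ steps. The plan is to proceed by induction on $i$, establishing two things in the inductive step: (i) the \emph{structural invariant} — that after phase $i$ the configuration again consists of equally spaced alternating free/stationary lines, now of doubled length and halved count; and (ii) the \emph{cost bound} — that the total number of line moves in phase $i$ is $O(n)$.

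For the base case $i=1$, I would observe directly from the formal description: the initial diagonal gives $n$ unit lines, the odd-indexed ones are free, the even-indexed ones stationary, consecutive diagonal nodes are at (Chebyshev) distance $1$, and each free line slides one step left onto its stationary partner, yielding $n/2$ vertical lines of length $2$ spaced $2$ apart, at a cost of $n/2$ single-node moves. For the inductive step, assume the invariant holds at the beginning of phase $i$, so we have $n/2^{i-1}$ lines of length $k_i := 2^{i-1}$, alternating free/stationary, at pairwise distance $k_i$. Each free line must be relocated to sit directly on top of (i.e., column-aligned and vertically adjacent to) its stationary partner to the lower-left. By Lemma~\ref{lem:Transfer_Line_H_to_V}, turning a line of length $k$ from vertical to horizontal (or back) costs $2k-2$ steps; the relocation of one free line consists of one such turn, a horizontal translation of the now-horizontal line by $O(k_i)$ positions to bring it over the stationary partner's column, one vertical translation step to close the remaining gap, and a final turn back to vertical — altogether at most $4k_i$ steps, matching the bound quoted in the text. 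Summing over the $n/2^i$ free lines gives a phase cost of at most $(n/2^i)\cdot 4k_i = (n/2^i)\cdot 4\cdot 2^{i-1} = 2n = O(n)$. After the merges, each stationary column now holds a line of length $2k_i = 2^i$; the surviving lines number $n/2^i$; and since the original spacing was $k_i = 2^{i-1}$ between \emph{consecutive} lines, the spacing between surviving (former stationary) lines is $2k_i = 2^i$. Re-labelling these alternately free/stationary from top-right restores the invariant for phase $i+1$.

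The main obstacle I expect is \emph{verifying that the shortest-path relocations of distinct free lines do not collide}, and more subtly, that each free line genuinely has the empty space it needs to perform its turn–translate–turn maneuver — the cost bound is easy arithmetic, but the correctness of the moves requires arguing that at phase $i$ the occupied cells are sparse enough (lines of length $2^{i-1}$ separated by gaps of $2^{i-1}$ in both coordinates along the diagonal direction) that each free line can route around, or simply does not meet, the stationary lines and the other free lines. I would handle this by noting that free line $j$ and its target stationary line $j+1$ together occupy a bounded diagonal ``cell'' of the grid of side $O(2^i)$, that these cells for different $j$ are disjoint, and that the whole turn–push–turn maneuver for free line $j$ stays within (a slight enlargement of) its own cell — so no two free lines ever contend for the same grid cell, and connectivity-breaking is harmless here since {\sc DiagonalToLine} does not require it. A second minor point is the handling of non-integer $\sqrt{n}$ / non-power-of-two $n$ (ceilings in the counts) and the final phase, where the last "merge" may involve a shorter line; these only affect constants and I would dispatch them with the same ceiling bookkeeping already used elsewhere in the paper.
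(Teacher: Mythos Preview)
Your proposal is correct and follows essentially the same approach as the paper: induction on the phase index, with the base case handled directly and the inductive step bounding the per-phase cost as (number of free lines) $\times$ (cost per merge) $= (n/2^i)\cdot O(2^{i-1}) = O(n)$, invoking Lemma~\ref{lem:Transfer_Line_H_to_V} twice for the turn--push--turn maneuver. If anything, you are more careful than the paper, which does not explicitly verify the structural invariant (spacing, orientation) or the non-collision of distinct free-line maneuvers; your observation that each free/stationary pair lives in its own disjoint diagonal cell is a welcome addition.
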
 
\begin{proof}	
	We use induction to prove this argument. Following the above reasoning of \emph{DL-Doubling} in Section \ref{subsubsec:DL-Gathering}, by the end of first phase (base case),  each of $ \frac{n}{2} $ \textit{free} lines of length 1 moves only $\Delta = 1$ step to form  $ \frac{n}{2} $ lines of length 2, which costs at most $1 \cdot \frac{n}{2} = O(n)$ steps in total. Hence, the base case is true for phase 1. Now, assume that holds in phase $i$, for all $1 \le i \le \log n $, where each free line of length $2^{i-1}$ pays at most $ 4\cdot 2^{i-1} =  2^{i+1} -3 $ steps to match with the closest \textit{stationary} lines (two applications of Lemma  \ref{lem:Transfer_Line_H_to_V}), therefore, a total cost $ t_{i} $ required to form $\frac{n}{2^{i}}$ lines of length $ 2^{i} $ in phase $i$ is given as follows;  
	\begin{align*}
		t_{i} &= \frac{n}{2^i} \cdot \big( 2^{i+1} -3 \big) \\
		& =  \frac{2^{i+1} \cdot n}{2^i}  - \frac{3n}{2^{i}}  = 2n  -\frac{3n}{2^i} = n \bigg( 2 - \frac{3}{2^i}\bigg)\\
		 & = O(n),       \numberthis  \label{FreeLinesMoveInPhase}
	\end{align*}	
			
	Now, we prove that this must hold in phase $i+1$,
	\begin{align*}
	t_{i+1} &= \frac{n}{2^{i+1}} \cdot \big( 2^{i+2} -3  \big) \\
	& =  \frac{2^{i+2} \cdot n}{2^{i+1}}  - \frac{3n}{2^{i+1}}  = 2n  -\frac{3n}{2^{i+1}} = n ( 2 - \frac{3}{2^{i+1}})\\
	& = O(n).       \numberthis  \label{FreeLinesMoveInPhase_i+1}
	\end{align*}	 
	
	As a result, our assumption is also true for phase $i$, therefore, from \refeq{FreeLinesMoveInPhase} and  \refeq{FreeLinesMoveInPhase_i+1},  we conclude that  \emph{DL-Doubling} pays at most $ O(n) $ steps in each phase.	         
\end{proof}

Utilising  Lemma~\ref{lem:Third_Strategy_a_to_a+1}, we can now formulate the following:

\begin{theorem} \label{prop:Third_Strategy} 
	\emph{DL-Doubling} transforms any diagonal $S_{D}$ of order n into a line $S_{L}$ in $O(n\log n)$ steps.       	
\end{theorem}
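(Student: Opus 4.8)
The plan is to establish Theorem~\ref{prop:Third_Strategy} by combining the per-phase bound from Lemma~\ref{lem:Third_Strategy_a_to_a+1} with a count of the number of phases. First I would observe that the initial diagonal $S_{D}$ of $n$ nodes is, by definition, a collection of $n$ unit-length lines, one per node, and that \emph{DL-Doubling} maintains the invariant (proved in Lemma~\ref{lem:Third_Strategy_a_to_a+1}) that at the end of phase $i$ there are $n/2^{i}$ lines of length $2^{i}$ each. Hence the process terminates precisely when $n/2^{i}=1$, i.e.\ after $i=\log n$ phases, at which point a single spanning line of length $n$ remains. (For simplicity we present the case where $n$ is a power of $2$; otherwise one takes $\lceil\log n\rceil$ phases and rounds the line counts up, which affects only constant factors.) This single line is, up to placement on the grid, the target spanning line $S_{L}$.

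Next I would bound the total running time. By Lemma~\ref{lem:Third_Strategy_a_to_a+1}, each phase $i$, for $1\le i\le\log n$, costs at most $t_i = n(2 - 3/2^{i}) = O(n)$ steps. Summing over all phases gives
\begin{align*}
T &= \sum_{i=1}^{\log n} t_i = \sum_{i=1}^{\log n} n\left(2 - \frac{3}{2^{i}}\right)
 = 2n\log n - 3n\sum_{i=1}^{\log n}\frac{1}{2^{i}} \\
 &= 2n\log n - 3n\left(1 - \frac{1}{n}\right)
 = 2n\log n - 3n + 3 \\
 &= O(n\log n).
\end{align*}
Thus \emph{DL-Doubling} transforms $S_{D}$ into $S_{L}$ in $O(n\log n)$ steps, as claimed.

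A couple of points would need to be handled carefully in the full write-up. The first is correctness of a single merge: when a \emph{free} line of length $k_i$ moves to join the \emph{stationary} line to its left, the shortest path it follows (turn to horizontal, shift one cell, turn back to vertical) must pass only through empty cells, so one must check that the layout guaranteed by the invariant — stationary lines at horizontal distance $2^{i}$ with the region between successive stationary columns vacated by the previous phase — leaves enough room for this manoeuvre, and that distinct free lines do not interfere (they operate in disjoint horizontal bands, and moves may be serialised phase by phase anyway). The second is the bookkeeping for non-powers of two, where a leftover short line at the bottom-left simply waits or is merged last; since this only ever concerns a single line per phase, it changes neither the $\Theta(\log n)$ phase count nor the $O(n)$ per-phase bound.

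The main obstacle is really the first of these: making precise the geometric invariant that justifies the ``shortest path'' cost of $4k_i$ per free line, i.e.\ that after each phase the occupied columns are exactly spaced $2^{i}$ apart with clear corridors between them, so that the turn–shift–turn routine of Lemma~\ref{lem:Transfer_Line_H_to_V} applies without obstruction. Once that invariant is stated and shown to be preserved, the time bound follows immediately from the geometric-series computation above, and everything else is routine.
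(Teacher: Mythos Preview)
Your proposal is correct and follows essentially the same approach as the paper: both proofs invoke Lemma~\ref{lem:Third_Strategy_a_to_a+1} to get the per-phase cost $t_i = n(2 - 3/2^{i})$, then sum over $\log n$ phases via the same geometric series to obtain $2n\log n - 3n + 3 = O(n\log n)$. Your additional remarks on the geometric invariant and the non-power-of-two case go slightly beyond what the paper writes out, but the core argument is identical.
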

\begin{proof}
  \emph{DL-Doubling} constructs the goal spanning line $S_{L}$ of length $n$ in $ \log n $ phases, while $S_{L}$ increases exponentially in each phase. Given that, the total running time $T$ of this transformation is computed by summing all steps of \refeq{FreeLinesMoveInPhase} over all $\log n$ phases, as follows:
	\begin{align*}
	T &= \sum_{i=1}^{\log n} t_{i}\\
	& =\sum_{i=1}^{\log  n} n( 2 - \frac{3}{2^{i}}) \\
	& = 2n \log n - 3n \sum_{i=1}^{\log  n} \frac{1}{2^{i}},  \numberthis  \label{AllFreeLinesMoveInPhase}	 
	\end{align*}

	Let us compute the right summation of \refeq{AllFreeLinesMoveInPhase},
	\begin{align*}
	& \sum_{i=1}^{\log n}\frac{1}{2^{i}} = (\frac{1}{2} + \frac{1}{4} +  \frac{1}{8} + \ldots+ \frac{1}{2^{\log n}}),  \numberthis  \label{AllFreeSUM}	 
	\end{align*}
	
	By multiplying \refeq{AllFreeSUM} by 2 and subtracting it again by itself,
	
	\begin{align*}
	&( 1 + \frac{1}{2} + \frac{1}{4} +  \frac{1}{8} + \ldots  + \frac{1}{2^{\log (n-1)}}) - (\frac{1}{2} + \frac{1}{4} +  \frac{1}{8} + \ldots +  \frac{1}{2^{\log (n-1)}} + \frac{1}{2^{\log (n)}}) \\
	&= 1 - \frac{1}{2^{\log (n)}},   \numberthis  \label{AllFreeSUM_1}	
	\end{align*}      
	
	Then, by  plugging \refeq{AllFreeSUM_1} into \refeq{AllFreeLinesMoveInPhase}, this yields,
	\begin{align*}
		& 2n \log n -  3n (1 - \frac{1}{2^{\log n}}) =   2n \log n  - 3n  + \frac{3n}{2^{\log n}} = n (2 \log n + \frac{3}{2^{\log n}} -3)  \\
		&  = O(n \log n). 
	\end{align*} 	
	
	Thus, it has been proved that \emph{DL-Doubling} transforms  $S_{D}$ of $ n $ nodes into  $S_{L}$ in a total of $ O(n \log n) $ steps.
\end{proof}

\subsection{An $ O(n \log n) $-time Transformation Based on Recursion}
\label{subsec:DLR-Gathering}

An interesting observation for {\sc DiagonalToLine} (i.e., without necessarily preserving connectivity), is that the problem is essentially self-reducible. This means that any transformation for the problem can be applied to smaller parts of the diagonal, resulting in small lines, and then trying to merge those lines into a single spanning line. An immediate question is then whether such recursive transformations can improve upon the $O(n\log n)$ best upper bound established so far. The extreme application of this idea is to employ a full uniform recursion (call it \emph{DL-Recursion}), where $S_D$ is first partitioned into two diagonals of length $n/2$ each, and each of them is being transformed into a line of length $n/2$, by recursively applying to them the same halving procedure. Finally, the top-right half has to pay a total of at most $4(n/2)=2n$ to merge with the bottom-left half and form a single spanning line (and the same is being recursively performed by smaller lines).

More formally, consider a diagonal $S_D$ of $ n $ nodes where the bottom-left and top right nodes occupy $(x_{1}, y_{1})$ and $(x_{n}, y_{n})$, respectively. Then, the goal is to collect all nodes at the leftmost column, say  $  x_{n} $. The collection can be arranged in a recursive way by creating stop points (partitions) on $S_D$ in which each stop point always creates equal partitions of the same length. This can be parametrized by $ \frac{n}{x} $ for each partition, where $ 2 \le x \le n $. For example, if $ x = 2 $, we have a stop point that halves $S_D$ into $ 2 $ partitions of length $ \ceil{\frac{n}{2}} $. As a consequence, the first node on the top  will stop at the middle of  $S_D$ and wait for all nodes to its right to gather at that point (\textit{column}) and then continue directly to the gathering (\textit{column}) $ x_{n} $.

Now, let us repeat the same precess on each of the $ x $ partitions recursively, by considering the partition as a diagonal of length roughly $ \frac{n}{x} -1 $, which is divided into $ x $ sub-partitions each of length $ \frac{n}{x^2} $ roughly. Every recursion shrinks the partitions by a factor of $ \frac{1}{x} $. For example, in the $ x = 2 $ case, we halve the length of the partitions every time we subdivide, therefore, we will end the recursion when it arrives at partitions of length $ 1 $, which will happen after $ \log n $ repetitions. That occurs similarly for the general $ x $ case by simply end after $ \log_x n  $ repetitions. For example, Figure~\ref{fig:Recursion_bound_proof_step_1} demonstrates this procedure. 
	  
		\begin{figure}[h!t]
			\centering
			\includegraphics[scale=0.6]{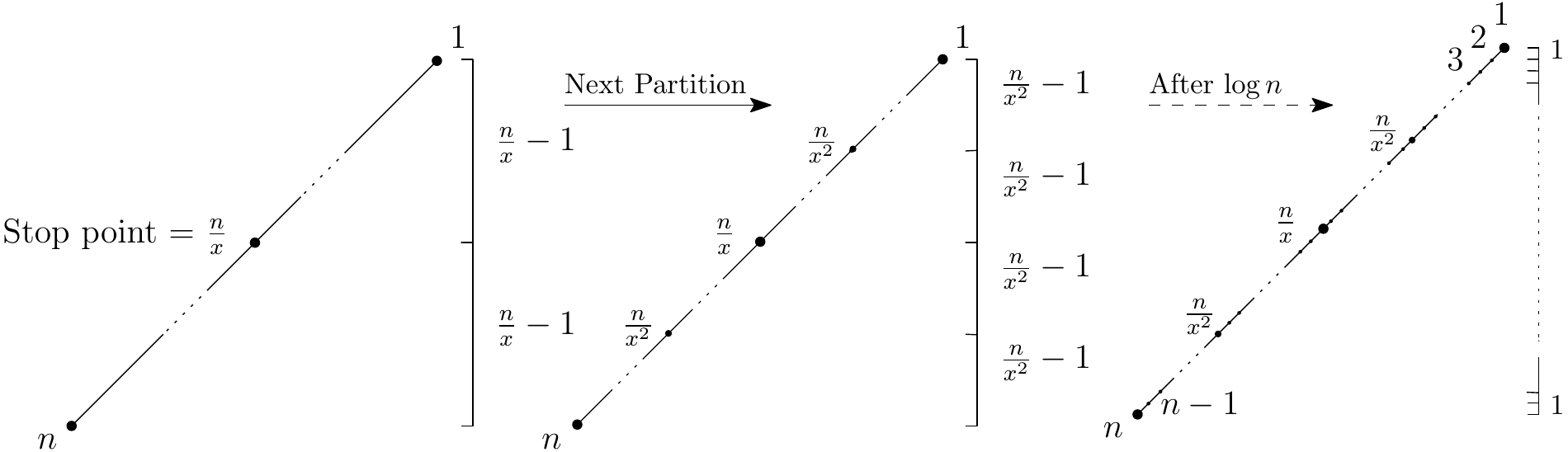}
			\caption{Shows all steps of subdividing the diagonal  $S_D$ recursively by a factor of $\frac{1}{x}$, where $x = 2$.}
			\label{fig:Recursion_bound_proof_step_1}	
		\end{figure}	
	\begin{figure}[h!t]
		\centering
		\includegraphics[scale=0.65]{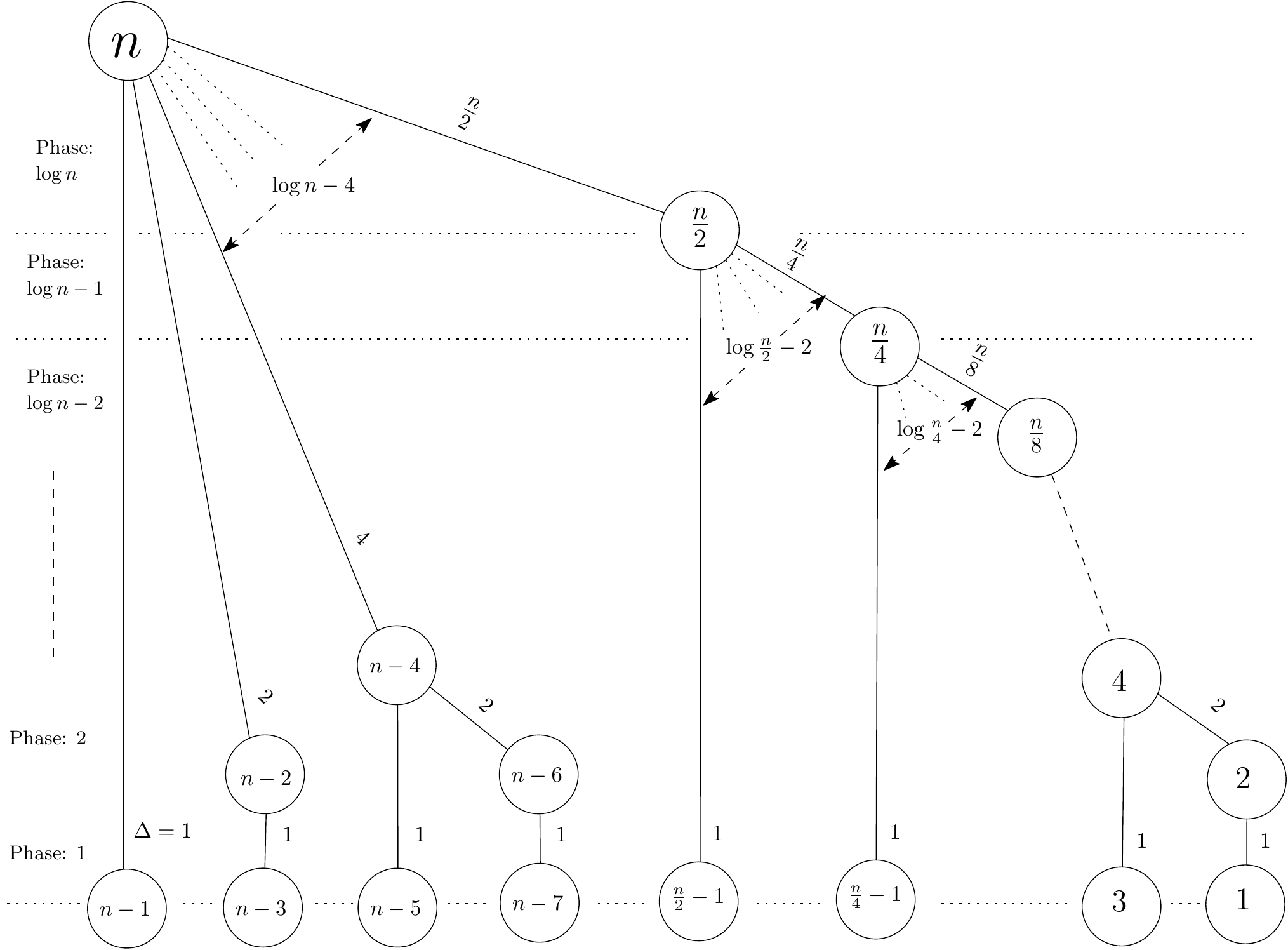}
		\caption{The underlying tree representation of a recursive partitioning of a diagonal. Edges are weighted by the minimum distance ($\Delta$) between nodes. See the text for more explanation.}
		\label{fig:Recursion_Tree_bound.pdf}	
	\end{figure} 	
	
Next, we draw an abstract underlying tree of the partitioning process to trace all necessary computations required to travel from the diagonal  $S_D$ into a bottommost left column $ x_{n} $. Figure \ref{fig:Recursion_Tree_bound.pdf} presents the tree of $ n $ nodes and weighted edges indicate the minimum distances (shortest paths) $\Delta$ between them, with a degree and depth of $\log_2 n$, which is also the number of phases that are needed to cease the segmentation recursively. Here, node $ n $ is the root of the tree occupies the target column at $ x_{n} $, and it has $\log n$ child nodes (\textit{stop points}) $ n-1, n-2, n-4, \ldots ,\frac{n}{2}$ of distances $  \Delta = 1 , 2, 4, \ldots, \frac{n}{2}$, respectively. Now, the distance $\Delta$ between a parent $u$ and child node $v$ defines two basic properties: 1) the number of sub-child nodes (siblings) of $v$, namely each child node $v$ gets $ \log \Delta(u,v)$ sub-child nodes, and 2) the maximum cost by which a child node $v$ requires to merge with its parent $u$, and it is computed by $ (2^{\Delta +1} -3) $ (a reader may consult Lemma~\ref{lem:Third_Strategy_a_to_a+1} and Theorem~\ref{prop:Third_Strategy}). For example, the $\frac{n}{2}$ child node needs $ (2^{\frac{n}{2} +1} -3) $ steps to join its parent node $ n $ in a distance of $\Delta = n - \frac{n}{2} = \frac{n}{2}$, and at the same time, this tells us that this child node is also holding $ \log \frac{n}{2} $ sub-child nodes. On the contrary, $n-1$ node got no sub-child since $ \log 1 = 0 $, so it is a leaf and requires $ (2^{1 + 1} -3) = 1 $ steps to reach its parent node $ n $. The same idea follows for other sub-trees, such as $ \frac{n}{2}, \frac{n}{4}, \frac{n}{8}, \ldots, 2$.

 Having said that, the steps of any transformation strategy that solves the above recursion problem can generally be reordered without affecting the cost, and each stop point takes place in the reordered version. That is because the abstract tree representation remains conveniently invariant, (\emph{inherits the same cost}), for any arrangement by which a transformation can exploit to collect nodes into the target point (\emph{column} or \emph{row}).  However, this recursion proceeds in $ \log n $ \emph{cost} phases, where in each phase $ i $,  for all $1 \le i \le \log n $, we upper bound a cost that the transformation $ A $ pays at most to move all nodes in each phase (even though $ A $ may move them in an entirely different order). In phase 1, $ A $ works on nodes $1, 3, 5, \ldots , n-1$, each of which posses roughly a single node (leaf) and eventually moves all of them. Therefore, no matter in what order they move, $ A $ performs at most $ O(\frac{n}{2}) $ steps to transfer all leaves in the first phase of the tree. 
 
 In the second phase, nodes $2, 4, 6, \ldots, (n-2)$ in the tree are occupied by $ 2 $ nodes (a line of length 2). Those pairs of nodes principally move at some point (even together or as part of other nodes going through them) to its parent nodes (\emph{the next occupied column or row}), with paying a cost of at most $(2^{2 +1} -3) = 5$ each. Repeat the same argument for the rest phases of the tree, the upper bound now is based on the following observation: whenever $ k $ nodes, where $1 \le k < n $, all move as maximum as $(2^{k+1} -3)$ steps to merge with the next stop point (parent node). In other words, suppose that $ k $ nodes are a line of $ k $ nodes, it walks at most $(2^{k+1} -3)$ steps to merge with a line in the next occupied row or column.
		
 Generally, we can say that at any phase $i$ of any transformation $ A $ solves the recursion problem, where $ 0 \le i \le \log n $, there is a node $v$ with other $2^{i}$ nodes occupying the same \emph{column} (\emph{row}), where the distance between $v$ and its parent $u$ (the next occupied \emph{column} (\emph{row}) is $ \Delta(u,v) = 2^{i}  $. Then, our purpose is to upper bound $ A $’s cost for $v$ to reach and merge with $u$; therefore, $v$ walks at most $(2^{i+1} - 3)$ steps to form a line with $ u $. 
		
By analysing the running time of such a uniform recursion, we obtain that it is still $O(n\log n)$, partially suggesting that recursive transformations might not be enough to improve upon $O(n\log n)$ (also possibly because of an $\Omega(n\log n)$ matching lower bound, which is left as an open question). If we denote by $T_k$ the total time needed to split and merge lines of length $k$, then the recursion starts from 1 line incurring $T_n$ and ends up with $n$ lines incurring $T_1$. In particular, we analyse the recurrence relation:
\begin{align*}
 T_n& =2 \cdot T_{n/2}+2n = 2(2 \cdot T_{n/4}+n)+2n = 4 \cdot T_{n/4}+4n = 4(2 \cdot T_{n/8}+n/2)+4n\\
 & = 8T_{n/8}+6n=\cdots=2^i \cdot T_{n/2^i}+2i\cdot n=\cdots=2^{\log n} \cdot T_{n/2^{\log n}}+2(\log n)n 
\end{align*} 
Since $T_1 = 1$, we get,
\begin{align*}
 T_n&= n\cdot T_1+2n(\log n)= n+2n(\log n), \\
& = O(n\log n).
\end{align*} 
	
	Finally, we give the following theorem, 
	
   \begin{theorem}
		\emph{DL-Recursion} transforms any diagonal $S_{D}$ of order $n$ into a line $S_{L}$ of the same order in $O(n\log n)$ steps.
		\label{prop:Upper_Bound_OF_gathering}	
	\end{theorem}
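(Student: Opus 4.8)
The plan is to treat \emph{DL-Recursion} as the natural divide-and-conquer procedure and then verify two things: that it produces a spanning line, and that its cost satisfies the recurrence already sketched above. First I would pin down the recursive step. Given a diagonal $S_{D}$ on cells $(x_1,y_1),\dots,(x_n,y_n)$, split it at its middle node into a bottom-left sub-diagonal $S_{D}^{\ell}$ of $\lceil n/2\rceil$ nodes and a top-right sub-diagonal $S_{D}^{r}$ of $\lfloor n/2\rfloor$ nodes. Recursively apply \emph{DL-Recursion} to each, obtaining two straight line segments $L^{\ell}$ and $L^{r}$, each (w.l.o.g.\ vertical) at the column of the bottommost node of the corresponding sub-diagonal. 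The base case is a single node, which is already a line and costs nothing. To merge, invoke the reorientation/translation primitive of Lemma~\ref{lem:Transfer_Line_H_to_V}: turn $L^{r}$ to horizontal, push it to the column of $L^{\ell}$, and turn it back to vertical directly above (or below) an endpoint of $L^{\ell}$, forming one vertical line of $n$ nodes. Since connectivity need not be preserved, the two segments can be moved independently through empty space, and as the grid is unbounded every intermediate step is permissible in the sense of Definition~\ref{def:permissible_line_move}.

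Second, I would argue correctness by induction on $n$. The base case is immediate. For the inductive step, each recursive call returns a straight line by the induction hypothesis, and the three-operation merge (\emph{turn}, \emph{push}, \emph{turn}) concatenates the two segments (once $L^{r}$ has been reoriented and translated) into a single line of the combined length. The only care needed is to fix the merge direction consistently so that $L^{r}$ lands adjacent to an endpoint of $L^{\ell}$ and so that the resulting line occupies a ``clean'' column, i.e.\ one free of other nodes, so that it can itself serve as one half of a merge one level up; this is precisely the kind of bookkeeping already done for the folding and extending arguments, so it carries over.

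Third, the time analysis. By Lemma~\ref{lem:Transfer_Line_H_to_V}, turning a line of length $k$ costs at most $2k-2$ steps, so the two turns plus the at-most-$(n/2)$-step horizontal push bound the merge cost by $4\cdot(n/2)=2n$, which is exactly the additive term in the recurrence. Hence $T_{n}\le 2\,T_{n/2}+2n$ with $T_{1}=O(1)$, and unrolling as in the displayed computation above gives $T_{n}=2^{\log n}T_{1}+2n\log n=n\cdot O(1)+2n\log n=O(n\log n)$. (For $n$ not a power of two the $\lceil\cdot\rceil/\lfloor\cdot\rfloor$ split and the slightly unequal merge cost only affect constants.)

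The main obstacle I expect is not the asymptotics, which are routine once the recurrence is justified, but making the merge step airtight: one must be certain that at every level of the recursion the two segments to be merged lie in disjoint regions with enough empty cells between and around them for the \emph{turn}--\emph{push}--\emph{turn} sequence to execute without collision, and that after merging the line sits in an otherwise empty column so it can be reused one level up. Establishing this invariant cleanly (perhaps by always routing $L^{r}$ along a row that is vacated by the recursion, analogously to the ``collection row'' idea of \emph{DL-Partitioning}) is where the real work lies; handling the non-power-of-two case is a minor additional nuisance.
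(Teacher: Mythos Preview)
Your proposal is correct and follows essentially the same approach as the paper: both halve the diagonal, recursively transform each half into a line, merge the two lines at cost at most $4\cdot(n/2)=2n$ via two applications of Lemma~\ref{lem:Transfer_Line_H_to_V}, and then solve the recurrence $T_n=2T_{n/2}+2n$ to obtain $O(n\log n)$. If anything, you are more careful than the paper about the geometric bookkeeping (collision-free routing, clean columns for reuse one level up); the paper simply asserts the merge cost and unrolls the recurrence without addressing these concerns explicitly.
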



\section{Universal Transformations}
\label{sec:universal}

\subsection{An $O(n \sqrt{n})$-time Universal Transformation}
\label{subsec:Universal_n_root(n)}

In this section, we develop a universal transformation, called \emph{U-Box-Partitioning}, which exploits line movements in order to transform \emph{any} initial connected shape $S_{I}$ into \emph{any} target shape  $S_{F}$ of the same order $n$, in $ O(n \sqrt{n}) $ steps. Due to reversibility (Lemma \ref{lem:Transferability_LineMove}), it is sufficient to show that any initial connected shape $S_I$ can be transformed into a spanning line (implying then that any pair of shapes can be transformed to each other via the line and by reversing one of the two transformations). We maintain our focus on transformations that are allowed to break connectivity during their course. 

Observe that any initial connected shape $S_I$ of order $ n $ can be enclosed in an appropriately positioned $n \times n$ square (called a \textit{box}). Our universal transformation is divided into three phases:\\

\textbf{Phase A}: Partition the $n \times n$ box into $\sqrt{n} \times \sqrt{n}$ sub-boxes ($n$ in total in order to cover the whole $n \times n$ box). For each sub-box move all nodes in it down towards the bottommost row of that sub-box as follows. Start filling in the bottommost row from left to right, then if there is no more space continue to the next row from left to right and so on until all nodes in the sub-box have been exhausted (resulting in zero or more complete rows and at most one incomplete row). Moving down is done via shortest paths (where in the worst case a node has to move distance $2\sqrt{n}$); see Figure \ref{fig:brute-force_line_formation}.\\ 

\textbf{Phase B}: Choose one of the four length-$ n $ boundaries of the $n \times n$ box, say w.l.o.g. the left boundary. This is where the spanning line will be formed. 
Then, transfer every line via a shortest path to that boundary (incurring a maximum distance of $n-\sqrt{n}$ per line). \\

\textbf{Phase C}: Turn all lines (possibly consisting of more than one line on top of each other), by a procedure similar to that of  Figure \ref{fig:S1_nrootn_partition} (e), to end up with a spanning line of $ n $ nodes on the left boundary.\\

However, there are two variants of $\sqrt{n} \times \sqrt{n}$ sub-boxes:  

\begin{enumerate}
	\item \textit{Occupied sub-box}: Denoted by $ s $ and contains $k$ nodes of $S_{I}$, where $1 \le k \le n$.
	\item \textit{Unoccupied sub-box}: An empty sub-box (has no nodes).  
\end{enumerate}

Now, we show some important properties of \textit{occupied sub-boxes}. Given an \textit{occupied sub-box} $s$ of $k$ nodes, where $1 \le k \le n$, therefore, the maximum number of lines which can be formed inside $s$ is at most $\ceil{ \frac{k}{\sqrt{n}}}$. As mentioned above, those $k$ lines can be aligned horizontally at bottommost rows or vertically at leftmost columns of the occupied sub-box, such as Figure \ref{fig:brute-force_line_formation}. The following lemma gives an upper bound on the number of sub-boxes that any connected $S_{I}$ can occupy.  

\begin{lemma} \label{lem:Number_of_Occupied_Sub_Boxs}
Any connected shape $S_{I}$ of order $n$ occupies at most $ O(\sqrt{n}) $ sub-boxes.
\end{lemma}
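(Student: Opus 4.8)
The plan is to show that the set $\mathcal{B}$ of sub-boxes that $S_I$ touches has size $|\mathcal{B}| = O(\sqrt n)$. Note first that the naive strategy — bound the number of sub-box rows and the number of sub-box columns that $S_I$ spans and multiply — is not enough here: connectivity does force each of these two counts to be $O(\sqrt n)$ (a shortest path in $S_I$ between two extremal nodes already passes through every intermediate sub-box row/column, and such a path has at most $n$ nodes, each step changing a coordinate by at most one), but the product of the two bounds is only $O(n)$, which is useless. So I would instead use connectivity globally, by covering $S_I$ with a small number of spatially localised pieces, each of which can touch only $O(1)$ sub-boxes.

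Concretely, first I would fix a spanning tree $T$ of $S_I$ (it exists since $S_I$ is connected) and take a DFS/Euler-tour traversal of $T$, i.e. a closed walk $w_0, w_1, \ldots, w_{2(n-1)}$ over cells of $S_I$ in which every node of $S_I$ occurs and consecutive cells $w_t, w_{t+1}$ are adjacent (each of the $n-1$ tree edges being traversed exactly twice). I would then cut this walk into $\ceil{2(n-1)/\ceil{\sqrt n}} = O(\sqrt n)$ consecutive segments, each consisting of at most $\ceil{\sqrt n}$ steps.

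The key point is that each segment is spatially confined: a walk of at most $\ceil{\sqrt n}$ steps moves by at most $\ceil{\sqrt n}$ in each coordinate, so all cells of a segment lie in an axis-aligned square of side at most $2\ceil{\sqrt n}+1$; and such a square meets only a constant number (at most $16$, being generous about alignment) of the $\sqrt n \times \sqrt n$ sub-boxes. Hence each segment touches only $O(1)$ sub-boxes, and since the segments jointly visit every node of $S_I$ — hence every sub-box in $\mathcal{B}$ — we conclude $|\mathcal{B}| \le O(1)\cdot O(\sqrt n) = O(\sqrt n)$. (An equivalent route is to partition the vertex set of $T$ into $O(\sqrt n)$ connected subtrees of at most $\ceil{\sqrt n}$ nodes each and argue identically on those.)

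The only real subtlety — the step I would be most careful about — is the implication ``spatially confined segment $\Rightarrow$ constant number of sub-boxes'': one must handle the off-by-one/alignment issue (a strip of $k$ consecutive cell-columns can meet up to $\ceil{k/\sqrt n}+1$ sub-box columns) and the non-integrality of $\sqrt n$. Everything else is routine: the existence of the spanning tree and its Euler tour, the fact that a $t$-step move on the grid (including diagonal steps) stays within $L_\infty$-distance $t$ of its start, and the final counting are all immediate.
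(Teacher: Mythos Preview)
Your proof is correct, but it follows a genuinely different route from the paper's. The paper proves this lemma by invoking a more general bound (Corollary~\ref{cor:partitioning}, itself derived from Lemma~\ref{lem:Universal_nlogn}): it fixes an arbitrary node of $S_I$, gives away for free the at most $9$ sub-boxes surrounding it, and then argues that each time the shape reaches a fresh (not-yet-awarded) sub-box it must have spent at least $\sqrt{n}$ nodes crossing a full sub-box width, at which point it is awarded at most $5$ further free sub-boxes. This amortised charging yields $k \le 5(n/\sqrt{n}) + 8 = O(\sqrt{n})$ occupied sub-boxes.

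Your Euler-tour segmentation is a cleaner and more standard alternative: instead of charging nodes to boundary crossings, you decompose the walk itself into $O(\sqrt{n})$ spatially-localised pieces and bound each piece by $O(1)$ sub-boxes directly. Both arguments generalise immediately to sub-boxes of arbitrary dimension $d$ (giving $O(n/d)$), so neither is strictly more powerful; yours avoids the slightly delicate ``give $5$ free sub-boxes per crossing'' bookkeeping, while the paper's avoids invoking spanning trees and Euler tours. The alignment/off-by-one issue you flag is indeed the only place requiring care in your version, and your handling of it (at most $\lceil k/\sqrt{n}\rceil + 1$ sub-box columns for a width-$k$ strip) is fine.
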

\begin{proof}
	It follows directly from Corollary \ref{cor:partitioning}, which states that for a given  connected shape $S_{I}$ of $ n $ nodes enclosed by a square box of size $n \times n$ and any uniform partitioning of that box into sub-boxes of dimension $d$, then, it holds that $S_{I}$ can occupy at most $O(\frac{n}{d})$ sub-boxes. Here, \emph{U-Box-Partitioning} is dividing the $n \times n$ square box into $\sqrt{n} \times \sqrt{n}$ sub-boxes of dimension $d = \sqrt{n}$, therefore, $S_{I}$ occupies at most $\frac{n}{\sqrt{n}} = O(\sqrt{n})$ sub-boxes.	
\end{proof}

Below, we prove correctness and analyse the running time of phase A.
\begin{lemma}  \label{lem:brute_force_line_formation_of_Occupied_Sub_Boxs}
		Starting from any connected shape $S_{I}$ of order $n$, Phase A completes in $ O(n\sqrt{n}) $ steps each.    
\end{lemma}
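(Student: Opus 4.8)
The plan is to bound separately the cost of processing a single occupied sub-box and then multiply by the number of occupied sub-boxes, invoking Lemma~\ref{lem:Number_of_Occupied_Sub_Boxs}. First I would argue that inside a fixed occupied sub-box $s$ holding $k$ nodes, $1\le k\le n$, the ``brute-force'' compaction into bottommost rows is achieved by moving each of the $k$ nodes independently along a shortest path to its designated slot in the row-by-row filling. Since $s$ has side $\sqrt{n}$, any such shortest path inside (and possibly just below) $s$ has length at most $2\sqrt{n}$ — a vertical component of at most $\sqrt{n}$ to reach the target row plus a horizontal component of at most $\sqrt{n}$ to reach the target column; in the linear-strength model each unit of this distance is one line move (a line of length $1$), so node $u$ costs at most $2\sqrt{n}$ steps. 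Summing over the $k\le n$ nodes of $s$ gives a cost of $O(k\sqrt{n})$ for sub-box $s$; a convenient crude bound that suffices here is $O(n\sqrt{n})$ per sub-box, but the sharper $O(k\sqrt{n})$ bound will be what makes the global sum come out right.

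Next I would sum over all occupied sub-boxes. Let the occupied sub-boxes be $s_1,\dots,s_m$ holding $k_1,\dots,k_m$ nodes respectively, so that $\sum_{j=1}^m k_j = n$. By the per-sub-box bound the total cost of Phase~A is $\sum_{j=1}^m O(k_j\sqrt{n}) = O\bigl(\sqrt{n}\sum_{j=1}^m k_j\bigr) = O(n\sqrt{n})$. Note that with this accounting Lemma~\ref{lem:Number_of_Occupied_Sub_Boxs} is not even strictly needed for Phase~A; however, if one prefers the looser per-sub-box bound $O(n\sqrt{n})$, then one does invoke Lemma~\ref{lem:Number_of_Occupied_Sub_Boxs} to say there are at most $O(\sqrt{n})$ occupied sub-boxes, which would give $O(n\sqrt{n})\cdot O(\sqrt{n}) = O(n^2)$ — too weak — so the $O(k_j\sqrt{n})$ refinement is the one to keep. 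Either way, the clean statement is that Phase~A terminates in $O(n\sqrt{n})$ steps.

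A couple of points need care. One is that the sub-boxes are processed independently, and moving nodes down within $s_j$ does not interfere with other sub-boxes: since every node ends up within its own sub-box (on one of its bottommost rows), and nodes only ever move inside $s_j$ (or, in the worst case, one may need an extra $\sqrt{n}$ of slack just below $s_j$'s bottom edge, which is still disjoint from other occupied sub-boxes after suitable ordering of processing), there is always an empty cell available for each unit line move, so every move in the shortest-path schedule is permissible by Definition~\ref{def:permissible_line_move}. The second point, and the main obstacle, is making the shortest-path routing inside a sub-box rigorous: one must exhibit, for each node, an actual sequence of single-node line moves realising its shortest path without two nodes ever colliding or a move being blocked. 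The standard fix is to process the nodes of $s_j$ in an order that matches the filling order (fill the bottom row left to right, then the next row, etc.) and route each node first horizontally to its target column and then vertically down to its target row, always keeping the not-yet-placed nodes ``above and behind'' the frontier; this guarantees the target cell and the path are clear. Once this scheduling is in place, the distance bound $2\sqrt{n}$ per node and the summation above immediately yield the $O(n\sqrt{n})$ bound, completing the proof.
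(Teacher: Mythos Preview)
Your proof is correct and rests on the same core observation as the paper's: inside a $\sqrt{n}\times\sqrt{n}$ sub-box every node reaches its target slot by a shortest path of length at most $2\sqrt{n}$. The accounting differs slightly. The paper invokes Lemma~\ref{lem:Number_of_Occupied_Sub_Boxs} to argue there are $O(\sqrt{n})$ occupied sub-boxes and hence $O(\sqrt{n})$ lines in total, then bounds the cost per line by $\sqrt{n}\cdot 2\sqrt{n}=2n$ and multiplies to get $2n\sqrt{n}$. You instead sum the per-node cost directly: $\sum_j k_j\cdot 2\sqrt{n}=2n\sqrt{n}$, which is cleaner and, as you correctly observe, does not actually need Lemma~\ref{lem:Number_of_Occupied_Sub_Boxs} at all for Phase~A. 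Your added discussion of scheduling and move permissibility (processing nodes in filling order so that target cells and paths are clear) is a genuine improvement over the paper, which glosses over this point entirely. One small inaccuracy: nodes are placed on the bottommost rows \emph{of their own sub-box}, so no ``slack just below $s_j$'s bottom edge'' is ever needed; all motion stays within $s_j$.
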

\begin{proof}
	By Lemma \ref{lem:Number_of_Occupied_Sub_Boxs}, let $S_{I}$ be a connected shape of $n$ nodes occupies  $\sqrt{n}$ sub-boxes of size $\sqrt{n} \times \sqrt{n}$ each, and $s \in S_I$ be any \textit{occupied sub-box}  of $ k $ nodes, where $ 1 \le k \le n$. Then, $s$ performs a trivial line formation to collect all $  k $ nodes at its bottommost (\textit{or leftmost})  boundary.  Consider the worst case of a node occupies the top-right corner and wants to gather at the bottom-left of $s$, therefore, it is incurring a distance (\textit{shortest path}) of at most $\Delta = 2\sqrt{n}$ to arrive there. Now, the $k$ nodes fill in the $\sqrt{n}$ bottommost row from left to right, and then start filling in the next row from left to right and so on until all nodes in $s$ is exhausted. For example, Figure~\ref{fig:brute-force_line_formation} shows the line formation at the bottommost rows of a  $6 \times 6$ sub-box of  $k = 11$ nodes. Consequently, $s$ forms at least one \textit{complete} line of length $\sqrt{n}$ or one \textit{incomplete} of less than $\sqrt{n}$. Recall that $S_{I}$ is connected. So there are at most $\sqrt{n}$ \textit{occupied sub-boxes}, which means there are at most $ \frac{n}{\sqrt{n}} = O(\sqrt{n})$ lines (\emph{complete or incomplete}) inside all those \textit{occupied sub-boxes}. This is trivial to prove, assume that  the $\sqrt{n}$ \textit{occupied sub-box} formed more than $\sqrt{n}$ lines, resulting in $|S_{I}| > n$, which is a contradiction.  
	
	\begin{figure}[h!t] 
		\centering
		\includegraphics[scale=0.5]{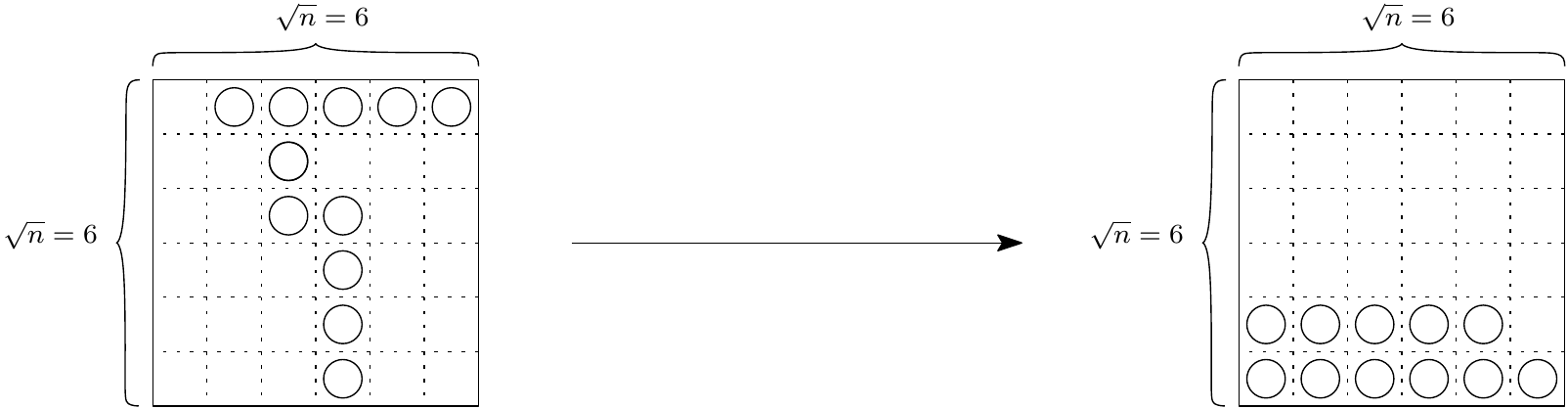}
		\caption{An example of a brute-force line formation to collect all $ k $ nodes at bottommost rows of a sub-box of size $6 \times 6$ containing $k=11$ nodes.}  
		\label{fig:brute-force_line_formation}		       	   		
	\end{figure} 	
	 
	 With that, the total cost $t_{1}$ required to form a line $l$ of $w$ nodes, for all $1 \le l,w \le \sqrt{n}$,  inside $s$ is:
	 \begin{align*}
	 t &   =  w \cdot 2\sqrt{n} = \sqrt{n}   \cdot 2\sqrt{n} = 2n\\
	     & = O(n),   \numberthis \label{nsqrtn_11}       
	 \end{align*}
	  
	Multiply \refeq{nsqrtn_11} by $ \sqrt{n} $, to obtain total steps $T_1$ to form all $ \sqrt{n} $ lines inside all \textit{occupied sub-boxes};
	\begin{align*}
		T_1 &   =   \sqrt{n} \cdot t\\
		&     = \sqrt{n} \cdot 2n = 2n\sqrt{n} \\
		&  = O(n\sqrt{n}).     \numberthis \label{nsqrtn_1}       
	\end{align*}

	Finally, we conclude that Phase A finishes its course in  $O(n\sqrt{n})$ steps.	
\end{proof}

In phase B, set any (length-$n$) boundary of the $n \times n$ square box as the \textit{gathering boundary} of all lines formed inside the \textit{occupied sub-boxes} in phase A. Then, the following lemma computes the total steps required to transfer all those lines to that \textit{gathering boundary}.

\begin{lemma}  \label{lem:Transfer_Lines_to_Gathering_border}   
	Starting from any connected shape $S_{I}$ of order $n$, Phase B completes in $ O(n\sqrt{n}) $ steps.    	
\end{lemma}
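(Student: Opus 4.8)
The plan is to bound separately (i) the number of lines that Phase~B must transfer and (ii) the worst-case cost of transferring a single one of them, and then multiply. For (i) I would invoke Lemma~\ref{lem:Number_of_Occupied_Sub_Boxs}: the connected shape $S_I$ meets at most $O(\sqrt{n})$ sub-boxes, and by the description of Phase~A each occupied sub-box holding $k$ nodes is turned into $\ceil{k/\sqrt{n}}$ lines stacked at its bottommost rows. Summing over the at most $m \le \sqrt{n}$ occupied sub-boxes with node counts $k_1,\ldots,k_m$ and $\sum_i k_i = n$ gives $\sum_i \ceil{k_i/\sqrt{n}} \le n/\sqrt{n} + m \le 2\sqrt{n}$, so Phase~B begins with only $O(\sqrt{n})$ horizontal lines in total, each of length at most $\sqrt{n}$.

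For (ii), each such line $\ell$ is horizontal and contained in the $n\times n$ box, so, as already observed in the description of Phase~B, its horizontal distance to the chosen (say left) length-$n$ boundary is at most $n-\sqrt{n}$. Pushing a horizontal line one cell towards that boundary is a single permissible move (Definition~\ref{def:permissible_line_move}), hence along a clear corridor $\ell$ reaches its target column in at most $n-\sqrt{n}=O(n)$ steps. Multiplying (i) and (ii) yields $O(\sqrt{n})\cdot O(n)=O(n\sqrt{n})$, which is exactly the claimed bound and would be the core of the proof.

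The step I expect to need the most care is checking that the shortest paths can really be realised without obstructions blowing up the cost: two lines produced in distinct sub-boxes that lie in the same horizontal band share actual rows, so pushing them left naively would make them collide. I would resolve this by transferring the lines in order of increasing distance to the gathering boundary, and, whenever an incoming line would land on an already-parked line, first re-orienting it to vertical (cost $O(\sqrt{n})$ by Lemma~\ref{lem:Transfer_Line_H_to_V}), sliding it up past the current stack, and re-orienting it back; since there are only $O(\sqrt{n})$ lines and each such detour costs at most $O(n)$ additional steps, the total stays $O(n\sqrt{n})$. Because connectivity may be broken during the transformation there is no further bookkeeping to do here, and since Phases~A, B and~C run in sequence their running times simply add, so this lemma combines with Lemma~\ref{lem:brute_force_line_formation_of_Occupied_Sub_Boxs} and the (linear-cost) turning of Phase~C to give the overall $O(n\sqrt{n})$ bound for \emph{U-Box-Partitioning}.
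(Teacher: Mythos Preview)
Your proof is correct and follows essentially the same line-count times max-distance argument as the paper: at most $O(\sqrt{n})$ lines, each travelling at most $n-\sqrt{n}$ steps to the boundary, gives $O(n\sqrt{n})$. Your derivation of the $O(\sqrt{n})$ line bound via $\sum_i\ceil{k_i/\sqrt{n}}\le n/\sqrt{n}+m$ and your explicit handling of same-row collisions are both more careful than the paper's proof, which simply asserts the line count from Lemma~\ref{lem:brute_force_line_formation_of_Occupied_Sub_Boxs} and does not discuss obstructions, but the overall approach is the same.
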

\begin{proof}
	It follows from Lemmas \ref{lem:Number_of_Occupied_Sub_Boxs} and \ref{lem:brute_force_line_formation_of_Occupied_Sub_Boxs}. Let $S_{I}$ be a connected shape of order $ n $ enclosed by a $n\times n$ box and then partitioned into $ \sqrt{n} \times \sqrt{n} $ \textit{occupied sub-boxes} of $k$ nodes each, where $ 1 \le k \le n$. By phase A, there are $l$ lines, for all $1 \le l\le \sqrt{n}$, formed inside all $\sqrt{n}$ \textit{occupied sub-boxes}. Without loss of generality, define the left border of the $n\times n$ box as the gathering boundary for all those lines. Now, the distance between any \textit{line} inside \textit{an occupied sub-box} and the defined boundary is no longer than $n - \sqrt{n}$. Thus, the number of steps (distance $\delta$) by which a \textit{line} requires  to reach that boundary is therefore $\delta \le n - \sqrt{n} $. As there are at most $ \sqrt{n} $ lines in the \textit{occupied sub-boxes}, then the total steps $T_2$ for all $l$ lines to transfer and arrive at the left border is at most,
	\begin{align*}
	T_{2} &= l \cdot \delta\\
		&= \sqrt{n} \cdot (n-\sqrt{n}) = n\sqrt{n} - n \\
		&=O(n\sqrt{n}).    \numberthis   \label{nsqrtn_3}  
	\end{align*}	 
\end{proof}

By the end of phase B, all $\sqrt{n}$ lines have transferred and arrived at the length-$n$ gathering boundary, where each contributes a node to that boundary. Therefore, in phase $C$, all  those lines ought to push and include themselves to the length-$n$ border, in order to form the goal spanning line of length $ n $ nodes. Formally, we give the following Lemma.

\begin{lemma} 	\label{lem:Linear_rearrangement_Of_Small_Lines} 
	Consider any length-$n$ boundary and $n$ nodes forming $k$ lines, where $1 \le k \le n$,  that are perpendicular to that boundary. 
	Then, by line movements, the $k$ lines require at most $O(n)$ steps to form a line of length $ n $ on that boundary. This implies that Phase C is completed in $O(n)$ steps.
\end{lemma}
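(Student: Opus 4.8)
The plan is to set up an amortised "turn-and-absorb" argument along the gathering boundary, re-using the basic reorientation primitive from Lemma \ref{lem:Transfer_Line_H_to_V}. Assume w.l.o.g. the gathering boundary is the left (vertical) boundary of the $n \times n$ box, so each of the $k$ lines is horizontal and sits with exactly one node on the boundary column, possibly stacked on top of one another in shared rows. Order the lines by the row of their boundary node, from bottom to top, and process them one at a time from the bottom. The invariant I would maintain is: after processing the first $j$ lines, the nodes of those $j$ lines form a single contiguous vertical segment occupying the bottom $\sum_{i\le j}|l_i|$ cells of the boundary column, and no cell of the boundary column strictly above this segment (up to the boundary node of line $j+1$) is occupied. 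This guarantees there is always an empty cell directly above the current top of the growing spanning line, which is exactly the precondition needed to push a reoriented line one step onto it.

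The key step is showing that absorbing a single horizontal line $l_i$ of length $w_i$ into the growing vertical segment costs $O(w_i)$ line moves. First push $l_i$ leftwards, if necessary, so that its rightmost (or leftmost) node abuts the boundary column at the first free row above the current spanning line — since the horizontal extent is at most $n$ this is at most $n$ pushes, but more carefully it is at most $w_i$ pushes because the line only needs to travel until one endpoint hits the column, and any horizontal gap between lines was itself created by earlier pushes, so a global accounting (rather than per-line) keeps the horizontal travel summing to $O(n)$. Then apply the turning procedure of Lemma \ref{lem:Transfer_Line_H_to_V}: reorienting a length-$w_i$ line from horizontal to vertical costs $2w_i - 2$ moves, and because of the invariant each of the $w_i$ nodes lands on a previously empty boundary-column cell, so no collisions occur and the invariant is restored with the segment grown by $w_i$. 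Summing over all lines, $\sum_i O(w_i) = O\!\left(\sum_i w_i\right) = O(n)$ for the turning, plus $O(n)$ total for all the horizontal alignment pushes, giving $O(n)$ overall; since Phase C is precisely this procedure applied to the $\sqrt{n}$ lines produced by Phases A and B, Phase C runs in $O(n)$ steps.

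I expect the main obstacle to be bookkeeping the horizontal-alignment cost rigorously. Per line, sliding a length-$w_i$ line to the boundary could naïvely cost up to $n$ moves (the line could start near the right edge of the box), which would blow the bound up to $O(n\sqrt{n})$. The fix is to observe that after Phase A every line already lies within its $\sqrt{n}\times\sqrt{n}$ sub-box, and Phase B transfers each line so that it physically reaches the boundary column — so when Phase C begins, every line already has an endpoint in the gathering column, and the only remaining horizontal motion is the at-most-$w_i$ shift needed during the turning operation itself (which is already counted inside the $2w_i-2$ of Lemma \ref{lem:Transfer_Line_H_to_V}). Hence no separate alignment phase is needed, the per-line cost is cleanly $2w_i - 2 = O(w_i)$, and the total telescopes to $\sum_i 2w_i - 2 \le 2n = O(n)$. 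The remaining care is simply to verify the empty-cell precondition for each turn, which the bottom-up invariant supplies.
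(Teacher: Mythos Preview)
Your proposal is correct and follows essentially the same approach as the paper: both arguments observe that each of the $k$ perpendicular lines already contributes one node to the boundary, and then apply the turning primitive of Lemma~\ref{lem:Transfer_Line_H_to_V} to the remaining $n-k$ nodes at a cost of two moves per node, giving $\sum_i (2w_i-2)=2(n-k)=O(n)$. Your bottom-up invariant and explicit empty-cell check are a bit more careful than the paper's informal ``push toward the top, then toward the bottom'' description, but the underlying computation and the reliance on Lemma~\ref{lem:Transfer_Line_H_to_V} are identical.
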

\begin{proof}
	See the example in Figure~\ref{fig:Linear_rearrangement_Of_Small_Lines}, where  the length-$n$ gathering border denoted by the dashed line, and the lines $\{l_{1}, l_{2}, \ldots , l_{k}\}$ of $n$ nodes are depicted by bold black lines, where $1\le k \le \sqrt{n}$. Without doubt, the $n$ nodes shall completely fill up the border of length $\Delta = n$. Now, pick the first line $ l_{1} $ of $k_{1}$ nodes and start to push $k_{1}$  into the topmost point of the boundary, until either 1)  $k_{1}$ are exhausted, or 2) reaching the topmost point of the boundary and still have nodes waiting to be pushed, therefore, it can easily begin to push them down towards the bottommost point of the boundary. By performing the same strategy for all other lines, they shall eventually fill in the length-$n$ border completely by $n$ nodes.  
	    	
	Observe that, in our case, the $ k $ lines connect to length-$n$ boundary by $k$ nodes. Therefore, there is still $n-k$ nodes need to be pushed into the boundary. According to Lemma \ref{lem:Transfer_Line_H_to_V}, each of the $n-k$ nodes requires 2 steps to be included, therefore all $n-k$ nodes shall perform a total of $2(n-k)$ steps to fully filled up the boundary of length $ n $. Following that, for any $k$ lines of $n$ nodes that are connected perpendicularly to a border of length-$n$, \emph{U-Box-Partitioning} pushes the $n$ nodes of $k$ lines into the length-$n$ border by line mechanisms in a total $T_3$ of at most,
	\begin{align*}
	T_{3} &= 2(n-k) = 2(n-\sqrt{n}) \\
	& =  O(n).   \numberthis   \label{nsqrtn_4}  
	\end{align*}

	\begin{figure}[h!t]
		\centering
		\includegraphics[scale=0.6]{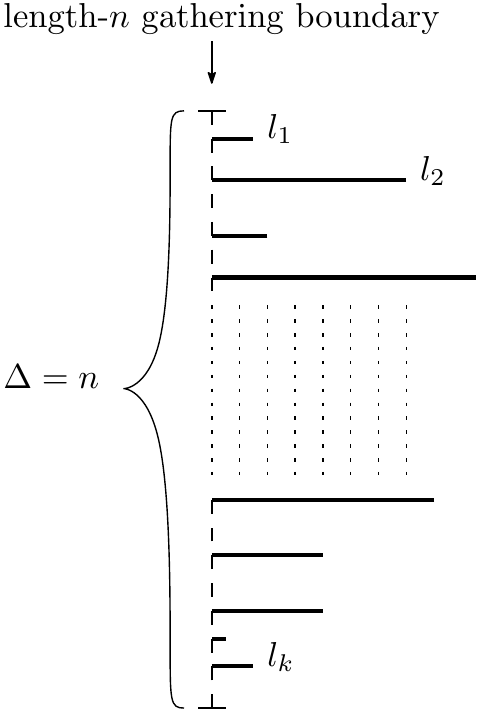}
		\caption{The dashed line indicates the length-$n$ gathering boundary of the $n\times n$ box, whilst the bold black lines represent the $ k $ lines of $n$ nodes.  }
		\label{fig:Linear_rearrangement_Of_Small_Lines}	
	\end{figure} 
\end{proof}

Now, we prove that any connected shape $ S_{I} $ transforms into a line $ S_{L} $ in  at most $ O(n \sqrt{n}) $ steps. 
\begin{lemma} \label{lem:Universal_nroot(n)_S_To_Line}
	\emph{U-Box-Partitioning} transforms any connected shape $S_I$ into a straight line $S_{L}$ of the same order $n$, in $ O(n \sqrt{n}) $ steps.	
\end{lemma}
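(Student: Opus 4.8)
The plan is to compose the three phases of \emph{U-Box-Partitioning} and invoke the per-phase bounds already established, after first checking that the phases are genuinely executable and that their composition does produce a spanning line of exactly $n$ nodes. First I would fix the enclosing $n\times n$ box around $S_I$ and its uniform partition into $\lceil\sqrt{n}\rceil\times\lceil\sqrt{n}\rceil$ sub-boxes, and record, via Lemma~\ref{lem:Number_of_Occupied_Sub_Boxs} (itself a consequence of Corollary~\ref{cor:partitioning}), that at most $O(\sqrt{n})$ of these sub-boxes are occupied. Then I would verify correctness phase by phase. \textbf{Phase A}: by Lemma~\ref{lem:brute_force_line_formation_of_Occupied_Sub_Boxs}, each occupied sub-box of $k$ nodes is compacted, via shortest paths of length at most $2\sqrt{n}$, into zero or more complete rows of length $\sqrt{n}$ plus at most one shorter row, all lying on consecutive bottom rows of that sub-box; since every such line segment has length at most $\sqrt{n}$ and the node count totals $n$, the grand total number of line segments created across all occupied sub-boxes is $O(\sqrt{n})$. \textbf{Phase B}: each of these $O(\sqrt{n})$ horizontal segments is routed along a shortest (vertical-then-horizontal) path of length at most $n-\sqrt{n}$ to the chosen left boundary of the box. \textbf{Phase C}: the segments now stacked perpendicularly on that length-$n$ boundary are re-oriented and concatenated into a single spanning line of length exactly $n$ by the rearrangement of Lemma~\ref{lem:Linear_rearrangement_Of_Small_Lines}.

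For the running time I would just add the three contributions: Phase A costs $O(n\sqrt{n})$ by Lemma~\ref{lem:brute_force_line_formation_of_Occupied_Sub_Boxs}, Phase B costs $O(n\sqrt{n})$ by Lemma~\ref{lem:Transfer_Lines_to_Gathering_border}, and Phase C costs $O(n)$ by Lemma~\ref{lem:Linear_rearrangement_Of_Small_Lines}; hence the total is $O(n\sqrt{n}) + O(n\sqrt{n}) + O(n) = O(n\sqrt{n})$. I would close by noting that, since connectivity need not be preserved, no intermediate configuration has to be checked for connectedness, and that by reversibility (Lemma~\ref{lem:Transferability_LineMove}) this bound immediately upgrades to a full universal transformation via the composition $S_I \rightarrow S_L \rightarrow S_F$.

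The part I expect to require the most care is showing that the phases are actually collision-free and therefore realisable as line movements. In Phase B, several segments originating in different sub-boxes must all travel to the same boundary, so one must schedule them — for instance by processing occupied sub-boxes in order of increasing distance from the gathering boundary, and temporarily parking each arrived segment on that boundary (stacked along perpendicular columns) before moving the next — so that a line in transit always finds the empty cells its shortest path requires and never overlaps an already-placed segment. A similar point applies within Phase A, where the brute-force compaction inside one sub-box must not disturb another. Alongside this, the proof should dispatch the routine boundary issues: the non-integrality of $\sqrt{n}$ (handled by $\lceil\cdot\rceil$ and $\lfloor\cdot\rfloor$ exactly as in Section~\ref{sec:Transforming_the_Diagonal}), and the fact that the $n\times n$ box is large enough to contain every temporary configuration. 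These are all straightforward, but they are the places where the argument could slip, so the proof should state the Phase~B (and Phase~A) scheduling order explicitly rather than leave it implicit.
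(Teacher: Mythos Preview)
Your proposal is correct and follows essentially the same approach as the paper: invoke the three per-phase Lemmas~\ref{lem:brute_force_line_formation_of_Occupied_Sub_Boxs}, \ref{lem:Transfer_Lines_to_Gathering_border}, and \ref{lem:Linear_rearrangement_Of_Small_Lines} and sum their costs to obtain $O(n\sqrt{n})$. Your additional attention to collision-free scheduling in Phases A and B and to the non-integrality of $\sqrt{n}$ is in fact more careful than the paper, which takes these points for granted; the closing remark on reversibility, however, properly belongs to the subsequent Theorem~\ref{theorem:Universal_n_root(n)} rather than to this lemma.
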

\begin{proof}
	By the above \cref{lem:Number_of_Occupied_Sub_Boxs,lem:brute_force_line_formation_of_Occupied_Sub_Boxs,lem:Linear_rearrangement_Of_Small_Lines}, we sum \refeq{nsqrtn_1}, \refeq{nsqrtn_3} and \refeq{nsqrtn_4} to compute  the overall moves $T$ as follows;
	\begin{align*}
	T &=   T_1 + T_2 + T_3 \\
	& =    O(n\sqrt{n}) + O(n\sqrt{n}) + O(n) \\
	& =     O(n\sqrt{n}).   \numberthis   \label{nsqrtn_5}  
	\end{align*}	
	
	 This provides an upper bound $ O ( n\sqrt{n}) $ of \emph{U-Box-Partitioning} to transform any arbitrary connected shape $S_I$ into a single spanning line $S_{L}$ of the same number of nodes.				
\end{proof} 

Putting Lemma \ref{lem:Universal_nroot(n)_S_To_Line} and reversibility (Lemma \ref{lem:Transferability_LineMove}) together gives:

\begin{theorem} \label{theorem:Universal_n_root(n)}
	For any pair of connected shapes $S_I$ and $S_F$ of the same order $n$, transformation \emph{U-Box-Partitioning} can be used to transform $S_I$ into $S_F$ (and $ S_F $ into $S_I$) in $ O(n \sqrt{n}) $ steps.
\end{theorem}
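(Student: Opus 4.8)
The plan is to use the spanning line as a canonical intermediate shape and to compose a fast transformation of $S_I$ into a line with a reversed fast transformation of $S_F$ into a line. Concretely, first I would invoke Lemma~\ref{lem:Universal_nroot(n)_S_To_Line} on the initial shape: running \emph{U-Box-Partitioning} on $S_I$ yields a straight spanning line $S_L$ of order $n$ in $O(n\sqrt{n})$ steps. Since $S_F$ is also a connected shape of the same order $n$, the very same argument applies to it, so \emph{U-Box-Partitioning} transforms $S_F$ into a straight spanning line $S_{L}^{\prime}$ of order $n$, again in $O(n\sqrt{n})$ steps.

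Next I would reconcile the two intermediate lines. Both $S_L$ and $S_{L}^{\prime}$ are spanning lines on $n$ nodes; they may differ only in position and orientation. By the placement convention adopted in the paper (any shape obtained from the target through rotations and translations is an acceptable output), we may simply take $S_L = S_{L}^{\prime}$; alternatively, one line can be brought onto the other by translating it (a distance at most $n$, hence at most $n$ line moves) and, if needed, changing its orientation once, which costs $2n-2$ moves by Lemma~\ref{lem:Transfer_Line_H_to_V}. Either way this reconciliation costs only $O(n)$ steps and does not affect the asymptotics.

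Then I would apply reversibility. By Lemma~\ref{lem:Transferability_LineMove}, since \emph{U-Box-Partitioning} transforms $S_F$ into $S_{L}^{\prime}$ via a sequence of line movements, the reverse sequence transforms $S_{L}^{\prime}$ into $S_F$, and it has the same length, i.e.\ $O(n\sqrt{n})$ steps. Concatenating the three pieces gives a transformation $S_I \rightarrow S_L = S_{L}^{\prime} \rightarrow S_F$ of total cost $O(n\sqrt{n}) + O(n) + O(n\sqrt{n}) = O(n\sqrt{n})$. The transformation $S_F \rightarrow S_I$ is obtained symmetrically (or directly by reversing the whole composed sequence, again by Lemma~\ref{lem:Transferability_LineMove}), and has the same $O(n\sqrt{n})$ bound.

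The only genuinely delicate point is the middle step: one must ensure that the two canonical lines produced from $S_I$ and $S_F$ can be identified at no asymptotic cost, and that the reversed sequence of moves for $S_F \rightarrow S_{L}^{\prime}$ is still a legal sequence of permissible line moves (with an empty cell available at each step). The first is handled by the placement freedom for the target together with the cheap line repositioning above; the second is exactly the content of Lemma~\ref{lem:Transferability_LineMove}, so no extra work is required beyond citing it.
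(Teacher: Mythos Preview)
Your proposal is correct and follows essentially the same approach as the paper: the paper's proof is the one-liner ``Putting Lemma~\ref{lem:Universal_nroot(n)_S_To_Line} and reversibility (Lemma~\ref{lem:Transferability_LineMove}) together gives the result,'' which is exactly the composition $S_I \rightarrow S_L \rightarrow S_F$ via the intermediate spanning line that you spell out. Your additional discussion of reconciling $S_L$ and $S_L'$ and of the legality of the reversed sequence is more explicit than the paper but not a different route.
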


\subsection{An $O(n \log n)$-time Universal Transformation}
\label{subsec:Universal_nlogn}

We now present an alternative universal transformation, called \emph{U-Box-Doubling}, that transforms any pair of connected shapes, of the same order, to each other in $O(n \log n)$ steps. Given a connected shape $S_{I}$ of order $ n $, do the following. Enclose $ S_{I} $ into an arbitrary $n \times n$ \emph{box} as in \emph{U-Box-Partitioning} (Section \ref{subsec:Universal_n_root(n)}). For simplicity, we assume that $n$ is a power of 2, but this assumption can be dropped. Proceed in $ \log n $ phases as follows: In every phase $ i $, where  $ 1 \le i \le \log n $, partition the $  n \times n $ box into $ 2^{i} \times 2^{i} $ sub-boxes, disjoint and completely covering the $  n \times n $ box. Assume that from any phase $i-1$, any $2^{i-1} \times 2^{i-1}$ sub-box is either empty or has its $k$, where $0\le k \le 2^{i-1}$, bottommost rows completely filled in with nodes, possibly followed by a single incomplete row on top of them containing $l$, where $1 \le l < 2^{i-1} $, consecutive nodes that are left aligned on that row. This case holds trivially for phase 1 and inductively for every phase. That is, in odd phases, we assume that nodes fill in the leftmost columns of boxes in a symmetric way. Every $2^{i} \times 2^{i}$ sub-box (of phase $ i $) consists of four $ 2^{i-1} \times 2^{i-1}$ sub-boxes  from phase $ i-1 $, each of which is either empty or occupied as described above.

Consider the case where $i$ is odd, thus, the nodes in the  $2^{i-1} \times 2^{i-1}$ sub-boxes are bottom aligned. For every $2^{i} \times 2^{i}$ sub-box, move each line from the previous phase that resides in the sub-box to the left as many steps as required until that row contains a single line of consecutive nodes, starting from the left boundary of the sub-box, as shown in Figure \ref{fig:Filling_LeftMost_SuBn} (a). With a linear procedure similar to that of Figure \ref{fig:Exmpel_folding4} (and of \textit{nice shapes}), start filling in the columns of the $2^{i} \times 2^{i}$ sub-box from the leftmost column and continuing to the right. If an incomplete column remains, push the nodes in it to the bottom of that column; see Figure \ref{fig:Filling_LeftMost_SuBn} (b) for an example. 

\begin{figure}
	\centering
	\subcaptionbox{}
	{\includegraphics[scale=0.7]{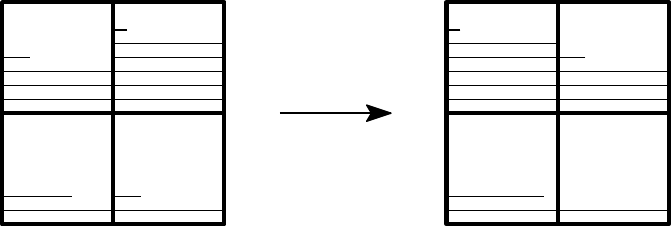}}	 \qquad \qquad \qquad
	\subcaptionbox{}
	{\includegraphics[scale=0.7]{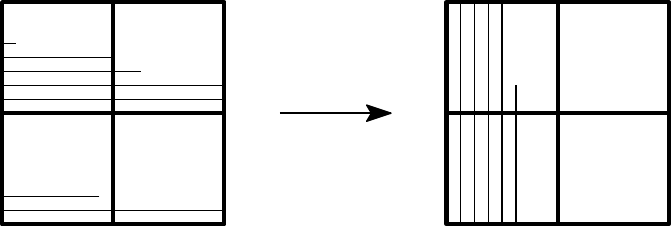}}	
	\caption{(a) Pushing left in each $2^{i} \times 2^{i}$ sub-box. (b) Cleaning the orientation by aligning (filling) the leftmost columns. }
	\label{fig:Filling_LeftMost_SuBn}	
\end{figure}

The case of even $i$ is symmetric, the only difference being that the arrangement guarantee from $i-1$ is left alignment on the columns of the $2^{i-1} \times 2^{i-1}$ sub-boxes and the result will be bottom alignment on the rows of the $2^{i} \times 2^{i}$ sub-boxes of the current phase. This completes the description of the transformation. We first prove correctness:

\begin{lemma} \label{lem:U_nlogn_V_OR_H}
	Starting from any connected shape $S_{I}$ of order $n$, \emph{U-Box-Doubling} forms by the end of phase $\log n $ a line of length $ n $. 
\end{lemma}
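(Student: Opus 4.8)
The plan is to prove the lemma by induction on the phase index, establishing the following invariant $I(i)$ on the configuration at the end of phase $i$ (with $i=0$ denoting the initial configuration together with its trivial $1\times 1$ partition): every $2^{i}\times 2^{i}$ sub-box of the phase-$i$ partition is either empty or \emph{packed}, where a packed sub-box has some number of its bottommost rows completely filled with nodes, followed by at most one further row carrying a left-aligned block of nodes (this is the form for even $i$; for odd $i$ the symmetric statement holds, with ``leftmost columns'' and ``bottom-aligned'' replacing ``bottommost rows'' and ``left-aligned''). The base case $I(0)$ is immediate, since a shape places at most one node per cell, so each $1\times 1$ sub-box is trivially packed. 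Granting the inductive step, the lemma follows from $I(\log n)$: the phase-$\log n$ partition consists of the single sub-box equal to the whole $n\times n$ box; it is non-empty because it holds all $n$ nodes, and being packed it has $k$ completely filled rows (or columns, if $\log n$ is odd), each of length $n$, plus at most one further row (column) with $l$ nodes where $0\le l< n$. Since $kn+l=n$, this forces $k=1$ and $l=0$, so exactly one full row (column) is occupied, i.e.\ a spanning line of length $n$.

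For the inductive step, assume $I(i-1)$ and fix a $2^{i}\times 2^{i}$ sub-box $Q$ of the phase-$i$ partition; it is the disjoint union of four $2^{i-1}\times 2^{i-1}$ quadrant sub-boxes, each of which is a phase-$(i-1)$ sub-box and hence empty or packed by $I(i-1)$. Treating odd $i$ (the even case is symmetric, up to a 90\degree\ rotation of the grid), $i-1$ is even, so the quadrants are bottom-aligned; consequently a given row of $Q$ receives nodes only from the two quadrants in the same horizontal half, and from each of those it holds either a full $2^{i-1}$-block flush with that quadrant's left edge, a shorter left-aligned block, or nothing. Thus, before the push-left step, each row of $Q$ consists of at most two left-aligned blocks, one starting at column $0$ and one starting at column $2^{i-1}$, of combined width at most $2^{i}$. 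The push-left operation (Figure~\ref{fig:Filling_LeftMost_SuBn} (a)) slides the right block leftwards, one permissible horizontal line move at a time over the empty cells separating the two blocks, until it abuts the left block; afterwards every occupied row of $Q$ is a single contiguous segment flush with the left boundary of $Q$, and the number of nodes in $Q$ is unchanged.

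It remains to show that the subsequent ``cleaning'' step — filling the columns of $Q$ from the leftmost column rightwards and pushing any leftover nodes to the bottom of the last column, as in Figure~\ref{fig:Exmpel_folding4} and the \emph{nice}-shape-to-line conversion — turns $Q$ into a sub-box that is packed in the column orientation, using only permissible line moves, thereby establishing $I(i)$. This is the step I expect to be the main obstacle, because the occupied rows left by push-left need not be the bottommost rows of $Q$: the bottom-half and top-half contributions may be separated by empty rows, so $Q$ need not be a \emph{nice} shape (and need not even be connected — but \emph{U-Box-Doubling} is permitted to break connectivity, so this is harmless). I would handle it by first \emph{consolidating} the occupied region of $Q$: push each column of the top-half part straight down over the intervening empty cells until it rests on the bottom-half part or on the floor of $Q$ (all permissible vertical line moves). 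Since, by $I(i-1)$ and push-left, the bottom-half and top-half occupied regions are each bottom-left-justified staircases with non-increasing column heights, so is their superposition; hence after consolidation $Q$ carries a bottom-left-justified staircase, which is in particular a \emph{nice} shape with central line its bottom row (Definition~\ref{def:nice_shape}). The cleaning is then a bounded sequence of permissible line moves of the same flavour as Proposition~\ref{prop:Niceshape_to_line}, except that each column is capped at height $2^{i}$: repeatedly turning the excess of an over-full column (two steps per node, Lemma~\ref{lem:Transfer_Line_H_to_V}) and appending it to the next leftmost under-full column rearranges all $m$ nodes currently in $Q$ into $\lfloor m/2^{i}\rfloor$ completely filled leftmost columns followed by one bottom-aligned column of $m\bmod 2^{i}$ nodes — exactly the packed form for odd $i$. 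Together with the base case and the termination argument above, this completes the induction and the proof.
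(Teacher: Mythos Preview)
Your proof is correct and follows the same approach as the paper: establish the packing invariant by induction on phases and read off the result at phase $\log n$. The paper's own proof is a one-liner that takes the invariant for granted (it is asserted in the algorithm description rather than proved), whereas you actually carry out the induction, including a careful treatment of the possible gap between the bottom-half and top-half staircases via a consolidation step that the paper leaves implicit in its informal reference to ``a linear procedure similar to that of Figure~\ref{fig:Exmpel_folding4}''.
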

\begin{proof}
	In phase $\log n $, the procedure partitions into a single box, which is the whole original $n \times n$ box. Independently of whether gathering will be on the leftmost column or on the bottommost row of the box, as all $n$ nodes are contained in it, the outcome will be a single line of length $n$, vertical or horizontal, respectively. 
\end{proof}

Now, we shall analyse the running time of \emph{U-Box-Doubling}. To facilitate exposition, we break this down into a number of lemmas.

\begin{lemma} \label{lem:U_nlogn_ConnSuB}
	In every phase $ i $, the ``super-shape'' formed by the occupied $2^{i} \times 2^{i}$ sub-boxes is connected.
\end{lemma}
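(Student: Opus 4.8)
The plan is to reduce the statement to a purely combinatorial fact about the original shape: I will show that throughout phase $i$ the collection of occupied $2^{i}\times 2^{i}$ sub-boxes is exactly the collection of $2^{i}\times 2^{i}$ sub-boxes that the initial connected shape $S_{I}$ intersects, and that this collection inherits connectivity directly from $S_{I}$.

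The first step is an \emph{invariance} claim: every movement performed in any phase $j$ takes place entirely within a single $2^{j}\times 2^{j}$ sub-box (this holds for both the ``push left/down'' sub-step and the column/row-filling sub-step), so no node ever crosses the boundary of the $2^{j}\times 2^{j}$ sub-box containing it. Since the partitions are dyadic and hence nested --- every $2^{j}\times 2^{j}$ sub-box is contained in a unique $2^{i}\times 2^{i}$ sub-box whenever $j\le i$ --- it follows by induction on the phase index that at every moment up to and including phase $i$, each node still lies in the $2^{i}\times 2^{i}$ sub-box that contained its initial cell. Consequently a $2^{i}\times 2^{i}$ sub-box is occupied (at any point during phase $i$) if and only if it contained at least one node of $S_{I}$ initially; in particular, the ``super-shape'' does not change during phase $i$, so it suffices to analyse it at the start of the phase.

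The second step lifts connectivity from $S_{I}$ to the super-shape. I fix two occupied sub-boxes $B$ and $B'$ and pick nodes $u\in B$, $u'\in B'$ belonging to $S_{I}$. Connectivity of $S_{I}$ gives a path $u=v_{0},v_{1},\ldots,v_{m}=u'$ inside $S_{I}$ with consecutive $v_{k},v_{k+1}$ neighbours (adjacent vertically, horizontally, or diagonally). Letting $\beta(v_{k})$ be the $2^{i}\times 2^{i}$ sub-box containing $v_{k}$, the coordinates of $v_{k}$ and $v_{k+1}$ differ by at most one in each axis, hence the grid indices of $\beta(v_{k})$ and $\beta(v_{k+1})$ differ by at most one in each axis, so these two sub-boxes are either equal or adjacent as meta-cells of the coarsened grid underlying the super-shape. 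Thus $\beta(v_{0}),\ldots,\beta(v_{m})$ is a walk of occupied sub-boxes from $B$ to $B'$, and since $B,B'$ were arbitrary the super-shape is connected.

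The main obstacle, such as it is, lies entirely in the first (invariance) step: one must verify against the precise description of \emph{U-Box-Doubling} that no sub-step of phase $j$ ever moves a node outside its $2^{j}\times 2^{j}$ sub-box, and that the nesting of the partitions is exactly dyadic so that finer sub-boxes never straddle a coarser boundary. Granting this, the rest is immediate from the definition of connectivity used for shapes. I would also remark that the adjacency relation on sub-boxes defining the super-shape must be the king-move one (sharing a side \emph{or} a corner), mirroring the adjacency used for nodes: two diagonally adjacent nodes of $S_{I}$ may lie in sub-boxes that meet only at a corner, so only with corner adjacency does the walk $\beta(v_{0}),\ldots,\beta(v_{m})$ stay within the super-shape's edge set.
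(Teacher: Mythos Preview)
Your argument is correct and slightly more direct than the paper's. Both proofs rest on the same invariance observation---that all movements up through phase $i$ occur within $2^{i}\times 2^{i}$ sub-boxes, so the set of occupied $2^{i}\times 2^{i}$ sub-boxes coincides with the set of such sub-boxes that $S_{I}$ originally intersected---but they diverge in how they then establish connectivity. The paper inducts on the phase index: assuming the $2^{i-1}\times 2^{i-1}$ super-shape is connected, it argues by contradiction that any disconnecting ``cut'' of empty $2^{i}\times 2^{i}$ sub-boxes would refine to a cut of empty $2^{i-1}\times 2^{i-1}$ sub-boxes, contradicting the hypothesis. You instead bypass the induction on phases entirely and lift connectivity straight from $S_{I}$ via the map $\beta$, turning a node-path into a sub-box walk. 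Your route is shorter and avoids the cut argument; the paper's route makes the phase-to-phase monotonicity of the super-shape more explicit, which is conceptually aligned with how the algorithm is described. Your closing remark that the super-shape adjacency must be the king-move one (allowing corner adjacency) is a point the paper leaves implicit but which is indeed needed for the path-lifting to go through, since $S_{I}$ itself is connected under diagonal adjacency.
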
 
\begin{proof}
	By induction on the phase number $i$. For the base of the induction, observe that for $i=0$ it holds trivially because the initial $S_{I}$ is a connected shape. Assuming that it holds for phase $i-1$, we shall now prove that it must also hold for phase $i$. By the inductive assumption, the occupied $2^{i-1} \times 2^{i-1}$ sub-boxes form a connected super-shape. Observe that, by the way the original $n \times n$ box is being repetitively partitioned, any box contains complete sub-boxes from previous phases, that is, no sub-box is ever split into more than one box of future phases.
	Additionally, observe that a sub-box is occupied iff any of its own sub-boxes (of any size) had ever been occupied, because nodes cannot be transferred between $2^{i} \times 2^{i}$ sub-boxes before phase $i+1$. Assume now, for the sake of contradiction, that the super-shape formed by $2^{i} \times 2^{i}$ sub-boxes is disconnected. This means that there exists a ``cut'' of unoccupied $2^{i} \times 2^{i}$ sub-boxes as in Figure \ref{fig:Cut_SuB}.
	\begin{figure}[h!t]
		\centering
		\includegraphics[scale=0.5]{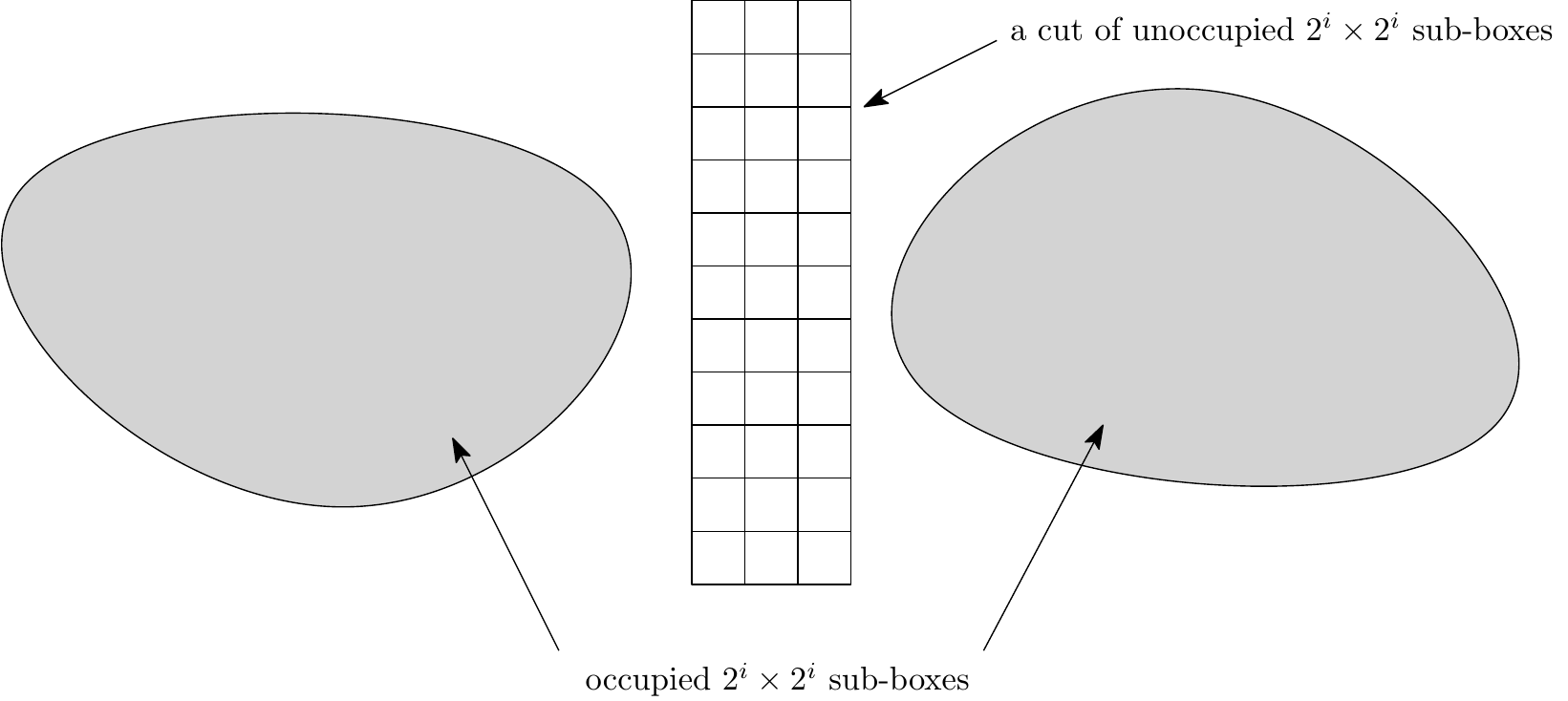}
		\caption{An example of a ``cut'' of unoccupied $2^{i} \times 2^{i}$ sub-boxes.}
		\label{fig:Cut_SuB}	
	\end{figure} 
    Replacing everything by $2^{i-1} \times 2^{i-1}$ sub-boxes, yields that this must also be a cut of $2^{i-1} \times 2^{i-1}$ sub-boxes, because a node cannot have transferred between $2^{i} \times 2^{i}$ sub-boxes before phase $i+1$. But this contradicts the assumption that $2^{i-1} \times 2^{i-1}$ sub-boxes form a connected super-shape. Therefore, it must hold that the $2^{i} \times 2^{i}$ sub-boxes super-shape must have been connected.  	 
\end{proof}

Next, we give an upper bound on the number of occupied sub-boxes in a phase $ i $. 

\begin{lemma} \label{lem:Universal_nlogn}
	Given that \emph{U-Box-Doubling} starts from a connected shape $S_{I} $ of order $n$, the number of occupied sub-boxes in any phase $ i $ is $O(\frac{n}{2^{i}})$.
\end{lemma}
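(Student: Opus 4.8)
The plan is to reduce the statement to the static partitioning bound that is already available as Corollary~\ref{cor:partitioning}, by arguing that the set of occupied $2^{i}\times 2^{i}$ sub-boxes during phase $i$ is entirely determined by the \emph{original} shape $S_{I}$ and is unaffected by any of the movements carried out in phases $1,\dots,i-1$.

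First I would make precise the confinement observation that is already implicit in the proof of Lemma~\ref{lem:U_nlogn_ConnSuB}: no node ever crosses the boundary of a $2^{i}\times 2^{i}$ sub-box before phase $i+1$. Indeed, in every phase $j\le i$ a node moves only inside the $2^{j}\times 2^{j}$ sub-box currently containing it, and each such $2^{j}\times 2^{j}$ sub-box lies inside a single $2^{i}\times 2^{i}$ sub-box (because of the way the $n\times n$ box is repeatedly halved, no sub-box is ever split between two coarser sub-boxes). Hence a node that belongs to a given $2^{i}\times 2^{i}$ sub-box of the fixed $n\times n$ box at the start of the transformation stays in that sub-box throughout phases $1,\dots,i$. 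Consequently, a $2^{i}\times 2^{i}$ sub-box is occupied at some moment of phase $i$ if and only if $S_{I}$ had a node in the corresponding region to begin with. A clean way to record this is by induction on $i$: for $i=1$ no movement between $2\times 2$ sub-boxes has yet occurred, so the occupied $2\times 2$ sub-boxes are exactly those intersected by $S_{I}$; and the inductive step uses the fact (from the proof of Lemma~\ref{lem:U_nlogn_ConnSuB}) that a $2^{i}\times 2^{i}$ sub-box is occupied iff one of its four constituent $2^{i-1}\times 2^{i-1}$ sub-boxes is.

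Given this, the number of occupied sub-boxes in phase $i$ is precisely the number of $2^{i}\times 2^{i}$ sub-boxes of the $n\times n$ box that the connected shape $S_{I}$ touches. I would then invoke Corollary~\ref{cor:partitioning} with partitioning dimension $d=2^{i}$ (which divides $n$, since $n$ is assumed to be a power of $2$, so the partitioning is well defined and covers the box): a connected shape of $n$ nodes enclosed in an $n\times n$ box occupies at most $O(n/d)$ sub-boxes of side $d$. Substituting $d=2^{i}$ gives the claimed $O(n/2^{i})$ bound.

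I expect the only genuine subtlety to be the confinement step, namely checking that none of the per-phase clean-up operations (pushing lines left within a sub-box, then refilling its columns from the left) can leak a node across a $2^{i}\times 2^{i}$ boundary before phase $i+1$; this is immediate once one notes that all of these operations are, by construction, internal to a single $2^{i}\times 2^{i}$ sub-box in phase $i$. Under the stated power-of-two assumption the reduction to Corollary~\ref{cor:partitioning} is then direct. For the general case one would additionally treat the at most one ``short'' band of sub-boxes along each side of the box separately, which only contributes a further $O(n/2^{i})$ term and therefore does not change the bound.
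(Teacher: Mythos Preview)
Your reduction is circular within the paper's logical structure. Corollary~\ref{cor:partitioning} is stated in the paper \emph{as a corollary of} Lemma~\ref{lem:Universal_nlogn}; it has no independent proof. So invoking it to establish Lemma~\ref{lem:Universal_nlogn} assumes what you are trying to prove. (The forward reference to Corollary~\ref{cor:partitioning} you may have noticed in the proof of Lemma~\ref{lem:Number_of_Occupied_Sub_Boxs} does not help: that earlier lemma is itself relying on the work done here.)

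Your confinement step is correct and is exactly how the paper begins: a $2^{i}\times 2^{i}$ sub-box is occupied in phase $i$ iff $S_{I}$ originally intersected it, because no node crosses a $2^{i}\times 2^{i}$ boundary before phase $i+1$. What is missing from your proposal is the actual combinatorial content that the paper supplies at this point: a direct lower bound on the number of nodes a connected shape needs in order to touch $k$ sub-boxes of side $2^{i}$. The paper's argument awards the shape a constant number of ``free'' sub-boxes each time it enters a new one (eight initially, then five per subsequent crossing) and observes that each genuine new occupation forces at least $2^{i}$ additional nodes along a connected path to exit the surrounding block of free sub-boxes. This yields $N_{k}\ge \frac{k-8}{5}\cdot 2^{i}$, hence $k\le 5n/2^{i}+8=O(n/2^{i})$. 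That counting argument (or an equivalent one) is the substance of the lemma and cannot be replaced by a pointer to its own corollary.
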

\begin{proof}
First, observe that a $2^{i} \times 2^{i}$ sub-box of phase $ i $ is occupied in that phase iff  $S_{I} $ was originally going through that sub-box. This follows from the fact that nodes are not transferred by this transformation between $2^{i} \times 2^{i} $ sub-boxes before phase $i+1$. Therefore, the $2^{i} \times 2^{i} $ sub-boxes occupied in (any) phase $i$ are exactly the $2^{i} \times 2^{i} $ sub-boxes that the original shape $S_{I} $ would have occupied, thus, it is sufficient to upper bound the number of $2^{i} \times 2^{i} $ sub-boxes that a connected shape of order $ n $  can occupy. Or equivalently, we shall lower bound the number $N_{k}$ of nodes needed to occupy $k$ sub-boxes.

In order to simplify the argument, whenever $S_{I} $ occupies another unoccupied sub-box, we will award it a constant number of additional occupations for free and only calculate the additional distance (in nodes) that the shape has to cover in order to reach another unoccupied sub-box. In particular, pick any node of $S_{I}$ and consider as freely occupied that sub-box and the $ 8  $ sub-boxes surrounding it, as depicted in Figure \ref{fig:Lem_Num_OccSuB} (a). Giving sub-boxes for free can only help the shape, therefore, any lower bound established including the free sub-boxes will also hold for shapes that do not have them (thus, for the original problem). Given that free sub -boxes are surrounding the current node, in order for $S_I$ to occupy another sub-box, at least one surrounding $2^i\times 2^i$ sub-box must be exited. This requires covering a distance of at least $2^i$, through a connected path of nodes.

Once this happens, $S_{I}$ has just crossed the boundary between an occupied sub-box and an unoccupied sub-boxs. Then, by giving it for free at most 5 more unoccupied sub-boxes, $S_{I}$ has to pay another $2^{i}$ nodes to occupy another unoccupied sub-box; see Figure \ref{fig:Lem_Num_OccSuB} (b). We then continue applying this 5-for-free strategy until all $ n $ nodes have been used.

\begin{figure}
	\centering
	\subcaptionbox{}
	{\includegraphics[scale=0.7]{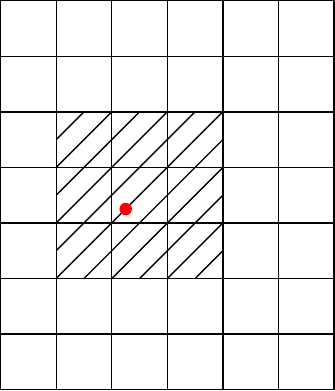}}	 \qquad \qquad \qquad
	\subcaptionbox{}
	{\includegraphics[scale=0.7]{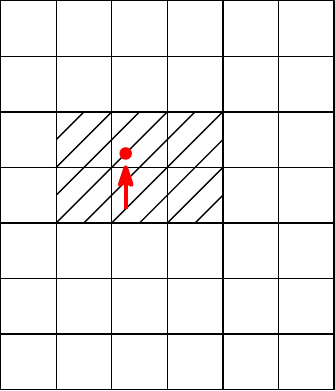}}	
	\caption{(a) A node of shape $S_{I}$ in red and the occupied sub-boxes that we give for free to the shape. (b) The shape just exited the sub-box with arrow entering an unoccupied sub-box. By giving the 5 horizontally dashed sub-boxes for free, a distance of at least $2^{i}$ has to be travelled in order to reach another unoccupied sub-box. }
	\label{fig:Lem_Num_OccSuB}	
\end{figure}

To sum up, the shape has been given $ 8 $ sub-boxes for free, and then for every sub-box covered it has to pay $2^{i}$ and gets 5 sub-boxes. Thus, to occupy $k = 8 + l \cdot 5$ sub-boxes, at least $l \cdot 2^{i}$ nodes are needed, that is,
\begin{align*}
N_{k} &\geq l \cdot 2^{i} \numberthis \label{eqn1}
\end{align*} 
But, that leads to
\begin{align*}
k &= 8 + l \cdot 5 \Rightarrow  l = \dfrac{k-8}{5}.  \numberthis \label{eqn2}
\end{align*} 
Thus,  from \eqref{eqn1} and \eqref{eqn2}:
\begin{align*}
N_{k} &\geq \dfrac{k-8}{5} \cdot 2^{i}. \numberthis \label{eqn3}
\end{align*} 
But shape $S_{I}$ has order $n$, which means that the number of nodes available is upper bounded by $n$, i.e., $N_{k} \le n$, which gives:
\begin{align*}
\dfrac{k-8}{5} \cdot 2^{i}  &\le N_{k} \le n \Rightarrow\\
\dfrac{k-8}{5} \cdot 2^{i}  &\le n \Rightarrow  \dfrac{k-8}{5} \le \dfrac{n}{2^{i}} \Rightarrow\\
k &\le  5 \bigg( \dfrac{n}{2^{i}} \bigg) + 8
\end{align*} 
We conclude that the number of $2^{i} \times 2^{i} $ sub-boxes that can be occupied by a connected shape $S_{I}$, and, thus, also the number of $2^{i} \times 2^{i} $ sub-boxes that are occupied by \emph{U-Box-Doubling} in phase $i$, is at most $ 5 (n/2^{i})+ 8 = O(n/2^{i})$.
\end{proof}

As a corollary of this, we obtain:

\begin{corollary} \label{cor:partitioning}
	Given a uniform partitioning of $n \times n$ square box containing a connected shape $S_{I}$ of order $n$  into $d \times d$ sub-boxes, it holds that $S_{I}$ can occupy at most $O(\frac{n}{d})$ sub-boxes.   
\end{corollary}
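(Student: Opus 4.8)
The plan is to observe that the amortised counting argument already carried out in the proof of Lemma~\ref{lem:Universal_nlogn} never used the fact that the sub-box side length was a power of two; it only used that the sub-boxes are $d \times d$ squares and that crossing from one sub-box to a non-adjacent one forces a connected chain of at least $d$ nodes. So I would simply re-run that argument verbatim with $2^{i}$ replaced by the general side length $d$.

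Concretely, fix a connected shape $S_I$ of order $n$ together with a uniform partitioning of its enclosing $n \times n$ box into $d \times d$ sub-boxes. Pick an arbitrary node $u \in S_I$ and grant for free the sub-box containing $u$ together with the (at most) $8$ sub-boxes surrounding it; handing out sub-boxes for free can only increase the resulting count, so any lower bound on the number of nodes needed per newly occupied sub-box that is obtained in this relaxed setting is also a valid lower bound for the original problem. Starting from this configuration, for $S_I$ to reach a sub-box not among the free ones it must exit the central $d \times d$ sub-box through a connected path of nodes, which costs at least $d$ nodes; once it crosses into the new sub-box we again hand out at most $5$ further sub-boxes for free and repeat. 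Hence, to occupy $k = 8 + 5\ell$ sub-boxes the shape must spend at least $N_k \ge \ell \cdot d$ of its nodes.

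Since $S_I$ has only $n$ nodes we have $N_k \le n$, and substituting $\ell = (k-8)/5$ yields $\frac{k-8}{5}\, d \le n$, i.e.\ $k \le 5(n/d) + 8 = O(n/d)$, which is exactly the claimed bound. I do not anticipate a genuine obstacle here: the entire mathematical content already lies in Lemma~\ref{lem:Universal_nlogn}, and the only points needing a moment's care are checking that the geometry of ``exiting a $d \times d$ sub-box padded by a layer of free sub-boxes costs a traversal distance of at least $d$'' (the picture of Figure~\ref{fig:Lem_Num_OccSuB}) is insensitive to the value of $d$, and observing that sub-boxes truncated by the boundary of the $n \times n$ box only make the count smaller and can therefore be safely ignored.
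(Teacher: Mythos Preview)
Your proposal is correct and is exactly the approach the paper intends: the corollary is stated immediately after Lemma~\ref{lem:Universal_nlogn} with no separate proof, precisely because the ``5-for-free'' counting argument there never uses that the side length is a power of two and so carries over verbatim with $2^{i}$ replaced by an arbitrary $d$. Your observations about boundary-truncated sub-boxes and the insensitivity of the exit-distance lower bound to the value of $d$ are the right sanity checks, and nothing further is needed.
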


We are now ready to analyse the running time of \emph{U-Box-Doubling}. 

\begin{lemma} \label{lem:Universal_nlogn_Upper_bound}
	Starting from any connected shape of $ n $ nodes, \emph{U-Box-Doubling} performs $O(n \log n )$ steps during its course.
\end{lemma}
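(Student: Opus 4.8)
The plan is to show that each of the $\log n$ phases costs $O(n)$ line moves and then sum. By Lemma~\ref{lem:U_nlogn_V_OR_H} the transformation lasts exactly $\log n$ phases, so this suffices. Fix a phase $i$, with $i$ odd (the even case is symmetric, exchanging the roles of rows and columns). By Lemma~\ref{lem:Universal_nlogn} at most $m=O(n/2^i)$ of the $2^i\times 2^i$ sub-boxes are occupied; let $k_t$ be the number of nodes in the $t$-th occupied sub-box, so $\sum_{t=1}^{m}k_t=n$ and each $k_t\le 4^i$. Within a phase the work splits into the push-left stage and the re-orientation (column-filling) stage, and I would bound each by $O(n)$.

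For the push-left stage I would exploit the structural invariant inherited from phase $i-1$: each of the four $2^{i-1}\times 2^{i-1}$ quadrants of an occupied sub-box has its bottom rows completely filled, followed by at most one incomplete, left-aligned row. Within the bottom half of such a sub-box, merging its left and right quadrant then costs little: rows backed by a complete left segment need no push; there is at most one row whose left segment is incomplete, costing at most $2^{i-1}$; and every remaining right-quadrant row that must travel the full $2^{i-1}$ is---except for at most one---a complete line of $2^{i-1}$ nodes, so the number of such pushes is at most $1$ plus the number of nodes they carry divided by $2^{i-1}$. Hence the push cost of one half is $O(k'+2^i)$, with $k'$ the number of nodes in that half; summing over the two halves of a sub-box and over all $m$ occupied sub-boxes, the ``$k'$'' terms total $O(n)$ and the ``$2^i$'' overheads total $O(2^i)\cdot O(n/2^i)=O(n)$. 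So the push-left stage costs $O(n)$ in phase $i$.

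For the re-orientation stage, observe that once push-left has run, every row of an occupied $2^i\times 2^i$ sub-box is a single segment anchored at that sub-box's left boundary; this is exactly a \emph{nice} shape whose central line is the leftmost column. By the argument behind Proposition~\ref{prop:Niceshape_to_line} (repeated turning of single nodes, Lemma~\ref{lem:Transfer_Line_H_to_V}), each node off the leftmost column is brought into its target cell with $O(1)$ line moves, so filling the columns of the sub-box left-to-right and pushing the last, incomplete column to its bottom costs $O(k_t)$; summed over $t$ this is $O(n)$. Combining the two stages, phase $i$ costs $O(n)$, and therefore \emph{U-Box-Doubling} performs $\sum_{i=1}^{\log n}O(n)=O(n\log n)$ line moves in total.

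I expect the push-left bound to be the delicate part. The naive estimate ``(occupied rows per sub-box) $\times$ (worst-case push distance $2^i$)'' yields $O(n\cdot 2^i)$ per phase and hence $O(n^2)$ overall, which is useless; the point is that, thanks to the ``at most one incomplete row per quadrant'' invariant, a long push of length $\Theta(2^i)$ almost always moves a full line of $\Theta(2^i)$ nodes, so its cost is amortised against those nodes, leaving only an $O(2^i)$ additive slack per occupied sub-box---and that slack is affordable precisely because Lemma~\ref{lem:Universal_nlogn} caps the number of occupied sub-boxes at $O(n/2^i)$. A minor but necessary point is that during a phase no node leaves its $2^i\times 2^i$ sub-box (since each holds at most $4^i$ nodes and the column-filling stays inside it), so the per-sub-box bounds compose into a bound for the whole phase.
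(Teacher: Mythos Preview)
Your overall strategy---bound each of the $\log n$ phases by $O(n)$ line moves and sum---is the paper's strategy. Your push-left accounting is organised a bit differently: the paper partitions the occupied $2^i\times 2^i$ sub-boxes into those containing at least one complete $2^{i-1}$-line ($B_1$) and those with only incomplete lines ($B_0$), bounds $|B_1|\le n/2^{i-1}$ by pigeonhole on nodes and $|B_0|=O(n/2^i)$ via Lemma~\ref{lem:Universal_nlogn}, and then charges complete and incomplete lines separately. Your direct amortisation---every full-distance push carries a complete $2^{i-1}$-segment and is paid for by its own nodes, leaving only an $O(2^i)$ additive slack per occupied sub-box that Lemma~\ref{lem:Universal_nlogn} absorbs---reaches the same $O(n)$ and is arguably cleaner.

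There is, however, a genuine gap in your re-orientation step. After push-left the contents of a $2^i\times 2^i$ sub-box do \emph{not} in general form a nice shape: the bottom half and the top half each become a left-aligned staircase, but the leftmost column can be empty in between (rows $r_1{+}1,\ldots,2^{i-1}$, whenever the bottom half occupies only $r_1<2^{i-1}$ rows while the top half is non-empty), so the shape need not even be connected and Proposition~\ref{prop:Niceshape_to_line} does not apply directly. Your per-box $O(k_t)$ bound then fails outright: with $k_t=1$ and the lone node sitting in the top-left quadrant at row $2^{i-1}{+}1$, the target places it at row~$1$, forcing $\Theta(2^{i})$ moves regardless of procedure. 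The repair is exactly the device you already use for push-left: the column-filling stage costs $O(k_t+2^i)$ per sub-box (the additive $2^i$ absorbs the vertical offset between the two halves), and summing this slack over the $O(n/2^i)$ occupied sub-boxes from Lemma~\ref{lem:Universal_nlogn} still gives $O(n)$ for the phase. With that correction your argument goes through.
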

\begin{proof}
	We prove this by showing that in every phase $i$, $1 \le i \le \log n$, the transformation performs at most a linear number of steps. We partition the occupied $2^{i} \times 2^{i} $ sub-boxes into two disjoint sets, $B_{1} $ and $B_{0}$, where sub-boxes in $B_{1}$ have at least 1 \emph{complete line} (from the previous phase), i.e., a line of length $2^{i-1}$, and sub-boxes in $B_{0}$ have 1 to 4 \emph{incomplete lines}, i.e., lines of length between 1 and $2^{i-1} -1$. For $B_{1}$, we have that $|B_{1}| \le n/2^{i-1}$. Moreover, for every complete line, we pay at most $2^{i-1}$ to transfer it left or down, depending on the parity of $i$. As there are at most $ n/2^{i-1} $ such complete lines in phase $i$, the total cost for this is at most $2^{i} \cdot (n/2^{i-1}) =n$. 
	
	Each sub-box in $B_{1}$ may also have at most 4 incomplete lines from the previous phase, as in Figure \ref{fig:Filling_LeftMost_SuBn} left, where at most two of them may have to pay a maximum of $2^{i-1}$ to be transferred left or down, depending on the parity of $i$ (as the other two are already aligned). As there are at most $ n/2^{i-1} $ sub-boxes in $B_{1}$, the total cost for this is at most $2 \cdot 2^{i-1} \cdot (n/2^{i-1}) = 2n$. 
	
	Therefore, the total cost for pushing all lines towards the required border in $B_{1}$ sub-boxes is at most:
	\begin{align*}
	 n + 2n = 3n. \numberthis \label{eqn11}
	\end{align*}
	For $B_{0}$, we have (by Lemma \ref{lem:Universal_nlogn}) that the total number of occupied sub-boxes in phase $i$ is at most $5 (n/2^{i}) + 8$, therefore, $|B_{0}| \le 5(n/2^{i}) + 8$ (taking into account also the worst case where every occupied sub-box may be of type $B_{0}$). There is again a maximum of 2 incomplete lines per such sub-box that need to be transferred a distance of at most $2^{i-1}$, therefore, the total cost for this to happen in every $B_{0}$ sub-box is at most:
	\begin{align*}
	 2 \cdot 2^{i-1} \bigg( 5 \cdot\frac{n}{2^{i}} +8 \bigg) = 5n + 8\cdot 2^{i} \le 13n. \numberthis \label{eqn12}
	\end{align*}
	By paying the above costs, all occupied sub-boxes have their lines aligned horizontally to their left or vertically to their bottom border, and the final task of the transformation for this phase is to apply a linear procedure in order to fill in the left (bottom) border of the $n \times n $ box. This procedure costs at most $2k$ for every $k$ nodes aligned as above (Lemma \ref{lem:Transfer_Line_H_to_V}), therefore, in total at most:
	 \begin{align*}
	  2n. \numberthis \label{eqn13}
	 \end{align*}
	This completes the operation of \emph{U-Box-Doubling} for phase $i$. Putting \eqref{eqn11}, \eqref{eqn12} and \eqref{eqn13} together, we obtain that the total cost $T_{i}$, in steps, for phase $i$ is,
	\begin{align*}
	T_{i} &\le 3n + 13n + 2n \\
	      &= 18n.  
	\end{align*} 
	As there is a total of $\log n $ phases, we conclude that the total cost $T$ of the transformation is,
		\begin{align*}
	T &\le 18n \cdot \log n \\
	&= O(n \log n).  
	\end{align*} 
\end{proof}

 Finally, together Lemma \ref{lem:U_nlogn_V_OR_H}, Lemma \ref{lem:Universal_nlogn_Upper_bound} and reversibility (Lemma \ref{lem:Transferability_LineMove}) imply that:
 
 \begin{theorem}
 	For any pair of connected shapes $S_{I}$ and $S_{F}$ of the same order $n$, transformation \emph{U-Box-Doubling} can be used to transform $S_{I}$ into $S_{F}$ (and $S_{F}$ into $S_{I}$) in $O(n\log n)$ steps. 
 \end{theorem}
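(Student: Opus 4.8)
The plan is to derive the theorem as an immediate composition of the correctness and running-time guarantees already established for \emph{U-Box-Doubling}, together with the reversibility of line movements. First I would invoke Lemma~\ref{lem:U_nlogn_V_OR_H} and Lemma~\ref{lem:Universal_nlogn_Upper_bound} to conclude that, started on the connected shape $S_I$ of order $n$, \emph{U-Box-Doubling} halts after $\log n$ phases with all $n$ nodes arranged in a single spanning line $S_L$ (vertical or horizontal, depending on the parity of $\log n$), having performed a total of $O(n\log n)$ steps. If a particular orientation of the intermediate line is desired, one extra change of orientation costing $O(n)$ steps (Lemma~\ref{lem:Transfer_Line_H_to_V}) is absorbed into the bound.

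Next I would run the same argument on the target shape $S_F$: since $S_F$ is also connected and of order $n$, \emph{U-Box-Doubling} transforms it into a spanning line $S_L'$ of order $n$ in $O(n\log n)$ steps. Because $S_L$ and $S_L'$ are spanning lines of the same length, they coincide up to a sequence of rotations and translations, which is exactly the freedom we allow when specifying the placement of a target shape (cf.\ the placement convention for $B$). Hence we may assume $S_L' = S_L$.

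Finally, by reversibility (Lemma~\ref{lem:Transferability_LineMove}), the transformation $S_F \rightarrow S_L$ can be executed in reverse to obtain $S_L \rightarrow S_F$ within the same $O(n\log n)$ steps. Concatenating the two transformations yields $S_I \rightarrow S_L \rightarrow S_F$ in $O(n\log n) + O(n\log n) = O(n\log n)$ steps; exchanging the roles of $S_I$ and $S_F$ gives the transformation of $S_F$ into $S_I$ within the same bound.

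As for the main obstacle: there is essentially no hard step remaining, since all of the combinatorial and geometric work has been done in Lemmas~\ref{lem:U_nlogn_ConnSuB}--\ref{lem:Universal_nlogn_Upper_bound}. The only points requiring care are (i) checking that the line produced from $S_I$ and the line produced from $S_F$ can genuinely be made to coincide---handled by the target-placement convention and, if needed, a single $O(n)$ re-orientation---and (ii) confirming that reversibility applies to the entire phase-structured transformation, which it does because every individual line move is reversible and hence so is any finite sequence of them (Lemma~\ref{lem:Transferability_LineMove}).
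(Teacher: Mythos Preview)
Your proposal is correct and follows essentially the same approach as the paper: the theorem is obtained directly by combining Lemma~\ref{lem:U_nlogn_V_OR_H} (correctness), Lemma~\ref{lem:Universal_nlogn_Upper_bound} (the $O(n\log n)$ running time), and reversibility (Lemma~\ref{lem:Transferability_LineMove}) to route $S_I \rightarrow S_L \rightarrow S_F$. Your additional remarks about aligning the two intermediate lines and the applicability of reversibility to the full sequence are sound and only make the argument more explicit.
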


\section{Conclusions}
\label{sec:conclusions}

In this work, we studied a new linear-strength model extending upon the model of \cite{DP04,MICHAIL2019}. The nodes can now move in parallel by translating a line of any length by one position in a single time-step. This model, having the model of \cite{DP04,MICHAIL2019} as a special case, adopts all its transformability results (including universal transformations). Then, our focus naturally turned to investigating if pushing lines can help achieve a substantial gain in performance (compared to the $\Theta(n^2)$ of those models). Even though it can be immediately observed that there are instances in which this is the case (e.g., initial shapes in which there are many long lines, thus, much initial parallelism to be exploited), it was not obvious that this holds also for the worst case. By identifying the diagonal as a potentially worst-case shape (essentially, because in it any parallelism to be exploited does not come for free), we managed to first develop an $O(n\sqrt{n})$-time transformation for transforming the diagonal into a line, then to improve upon this by two transformations that achieve the same bound while preserving connectivity, and finally to provide an $O(n\log n)$-time transformation (that breaks connectivity). Going one step further, we developed two universal transformations that can transform any pair of connected shapes to each other in time $O(n\sqrt{n})$ and $O(n\log n)$, respectively. 

There is a number of interesting problems that are opened by this work. The obvious first target (and apparently intriguing) is to answer whether there is an $o(n\log n)$-time transformation (e.g., linear) or whether there is an $\Omega(n\log n)$-time lower bound matching our best transformations. We suspect the latter, but do not have enough evidence to support or prove it. The tree representation of the problem that we discuss in Section \ref{subsec:DLR-Gathering} (see, e.g., Figure \ref{fig:Recursion_Tree_bound.pdf}), might help in this direction. Moreover, we didn't consider parallel time in this paper. If more than one line can move in parallel in a time-step, then are there variants of our transformations (or alternative ones) that further reduce the running time? In other words, are there parallelisable transformations in this model? In particular, it would be interesting to investigate whether the present model permits an $O(\log n)$ parallel time (universal) transformation, i.e., matching the best transformation in the model of Aloupis \emph{et al.}  \cite{ACD08}. It would also be worth studying in more depth the case in which connectivity has to be preserved during the transformations. In the relevant literature, a number of alternative types of grids have been considered, like triangular (e.g, in \cite{DDGR14}) and hexagonal (e.g., in \cite{WWA04}), and it would be interesting to investigate how our results translate there. Finally, an immediate next goal is to attempt to develop distributed versions of the transformations provided here.



\bibliography{pushing-lines}

\appendix

\end{document}